\documentclass[a4paper,UKenglish,cleveref, autoref, thm-restate]{lipics-v2021}

%
%


\pdfoutput=1 
\hideLIPIcs  


\bibliographystyle{plainurl}

\title{ \bf Fast and Secure Decentralized Optimistic Rollups Using
  Setchain}


 \author{Margarita Capretto}{IMDEA Software Institute, Spain \and Universidad Politécnica de Madrid}{margarita.capretto@imdea.org}{https://orcid.org/0000-0003-2329-3769}{}
 \author{Martín Ceresa}{IMDEA Software Institute,
   Spain}{martin.ceresa@imdea.org@imdea.org}{https://orcid.org/0000-0003-4691-5831}{}
 \author{Antonio {Fernández Anta}}{IMDEA Networks Institute, Spain}
 {antonio.fernandez@imdea.org}{https://orcid.org/0000-0001-6501-2377}{}
 \author{Pedro Moreno-Sánchez}{IMDEA Software Institute,
   Spain \and VISA Research, Spain}{pedro.moreno@imdea.org@imdea.org}{https://orcid.org/0000-0003-2315-7839}{}
\author{César Sánchez}{IMDEA Software Institute,
   Spain}{cesar.sanchez@imdea.org@imdea.org}{https://orcid.org/0000-0003-3927-4773}{}


\authorrunning{M. Capretto, M. Ceresa, A. {Fernandez Anta},
  P. Moreno-Sanchez and C. Sanchez} 

\Copyright{Jane Open Access and Joan R. Public} 

\ccsdesc[100]{Information systems$\rightarrow$Distributed systems
  organizing principles; Systems organizing principles: Blockchain
  technology; Software and its engineering$\rightarrow$Distributed systems
  organizing principles; Distributed ledgers}

\keywords{Blockchain, Layer Two, Data Availability Committee} 

\category{} 

\relatedversion{} 




\nolinenumbers 

\EventEditors{Rainer B\"ohme and Lucianna Kiffer}
\EventNoEds{2}
\EventLongTitle{Advances in Financial Technologies (AFT 2024)}
\EventShortTitle{AFT 2024}
\EventAcronym{AFT}
\EventYear{2024}
\EventDate{September 23--25, 2024}
\EventLocation{Vienna, Austria}
\EventLogo{}
\SeriesVolume{VV}
\ArticleNo{NN}

\usepackage[utf8]{inputenc}
\usepackage[ruled]{algorithm}
\usepackage[noend]{algpseudocode}
\usepackage{amsmath}
\usepackage{amsfonts}
\usepackage{amsthm}
\usepackage{mathtools}
\usepackage[disable]{todonotes}
\usepackage[ligature,reserved]{semantic}
\usepackage{xspace}
\usepackage{csquotes}

\usepackage{scalerel}
\usepackage{tikz}
\usepackage{adjustbox}

\usetikzlibrary{svg.path}

\definecolor{orcidlogocol}{HTML}{A6CE39}
\tikzset{
    orcidlogo/.pic={
        \fill[orcidlogocol] svg{M256,128c0,70.7-57.3,128-128,128C57.3,256,0,198.7,0,128C0,57.3,57.3,0,128,0C198.7,0,256,57.3,256,128z};
        \fill[white] svg{M86.3,186.2H70.9V79.1h15.4v48.4V186.2z}
        svg{M108.9,79.1h41.6c39.6,0,57,28.3,57,53.6c0,27.5-21.5,53.6-56.8,53.6h-41.8V79.1z M124.3,172.4h24.5c34.9,0,42.9-26.5,42.9-39.7c0-21.5-13.7-39.7-43.7-39.7h-23.7V172.4z}
        svg{M88.7,56.8c0,5.5-4.5,10.1-10.1,10.1c-5.6,0-10.1-4.6-10.1-10.1c0-5.6,4.5-10.1,10.1-10.1C84.2,46.7,88.7,51.3,88.7,56.8z};
    }
}
\newcommand\orcidicon[1]{\href{https://orcid.org/#1}{\mbox{\scalerel*{
                \begin{tikzpicture}[yscale=-1,transform shape]
                \pic{orcidlogo};
                \end{tikzpicture}
            }{|}}}}

\algblockdefx[Upon]{Upon}{EndUpon}%
[1]{{\bf upon} (#1) {\bf do}}%
{{\bf end upon}}

\algblockdefx[When]{When}{EndWhen}%
[1]{{\bf when} (#1) {\bf do}}%
{{\bf end when}}

\algblockdefx[Receive]{Receive}{EndReceive}%
[1]{{\bf receive} (#1) {\bf do}}%
{{\bf end receive}}

\makeatletter
\ifthenelse{\equal{\ALG@noend}{t}}%
  {\algtext*{EndUpon}}
  {}%
\makeatother

\makeatletter
\ifthenelse{\equal{\ALG@noend}{t}}%
  {\algtext*{EndWhen}}
  {}%
\makeatother

\makeatletter
\ifthenelse{\equal{\ALG@noend}{t}}%
  {\algtext*{EndReceive}}
  {}%
\makeatother

\mathlig{<-}{\leftarrow}
\mathlig{==}{\equiv}
\mathlig{++}{\mathop{+\!\!+}}
\mathlig{>>}{\rangle}
\mathlig{~}{\sim}
\reservestyle{\variables}{\text}
\variables{epoch,current,history,theset[the\_set],proposal,valid}


\reservestyle{\setops}{\textsc}
\setops{add,get,Init,Add,Get,BAdd,EpochInc,Broadcast,Deliver,GetEpoch,Propose,Inform,SetDeliver,
  SuperSetDeliver, DoStuff[Start],Consensus}
\setops{translate, Translate}

\reservestyle{\structs}{\text}
\structs{DPO,BAB,BRB,SBC,DSO}
\structs{ArrangerApi[Arranger], SequencerApi[Sequencer], DCsApi[DC
  members], DCApi[DC member]}

\reservestyle{\stmt}{\textbf}
\stmt{call,ack,drop,return,assert,wait, post, inform, trigger, logger}

\reservestyle{\mathfunc}{\mathsf}
\mathfunc{send,receive,sendbrb[send\_brb],sendsbc[send\_sbc], BLSAggregate[Aggr]}

\reservestyle{\messages}{\texttt}
\messages{madd[add],mepochinc[epinc]}

\reservestyle{\api}{\texttt}
\api{apiBAdd,apiAdd,apiGet,apiEpochInc,apiTheSet[{the\_set}],apiHistory}

\reservestyle{\servers}{\texttt}
\servers{S,C,B,M,DAC}

\reservestyle{\schain}{\texttt}
\schain{history,epoch,theset[{the\_set}],add,get,epochinc[{epoch\_inc}],getepoch[{get\_epoch}],tobroadcast[{to\_broadcast}], knowledge, pending, sent, received,proposal,Valid}
\schain{add, translate, allTxs, pendingTxs, maxTime, maxSize,
  signReq, signResp, hashes, setchain, localsetchain[Setchain],
  invalidId, invalidHash, notContacted[left], signatures} 

\reservestyle{\setvars}{\textit}
\setvars{prop,propset,h,tx,id,sigs,hash, batchId[id]}

\reservestyle{\event}{\texttt}
\event{myturn[my\_turn], newepoch[new\_epoch],added, timetopost[time\_to\_post],tobatch[to\_batch], nextBatch[next\_batch\_id]}

\newcommand\imarga[1]{\todo[inline]{Marga: {#1}}}

\newtheorem{property}{Property}
\newtheorem*{property*}{Property}
\usepackage{xspace}
\usepackage{url}
\usepackage{paralist}


\usepackage{environ}
\usepackage{wrapfig}

\usepackage{bibunits}
\usepackage{hyperref} 
\defaultbibliography{bibfile}
\defaultbibliographystyle{plainurl}


\newcommand{\EPOCH}{\<epoch>}

\newcommand{\HISTORY}{\<history>}
\newcommand{\THESET}{\<theset>}

\newcommand{\arranger}{Arranger\xspace}
\newcommand{\hash}{\mathcal{H}}


\newcommand{\repeattheorem}[1]{%
  \begingroup
  \renewcommand{\thetheorem}{\ref{#1}}%
  \expandafter\expandafter\expandafter\theorem
  \csname reptheorem@#1\endcsname
  \endtheorem
  \endgroup
}

\NewEnviron{reptheorem}[1]{%
  \global\expandafter\xdef\csname reptheorem@#1\endcsname{%
    \unexpanded\expandafter{\BODY}%
  }%
  \expandafter\theorem\BODY\unskip\label{#1}\endtheorem
}

\NewEnviron{replemma}[1]{%
  \global\expandafter\xdef\csname replemma@#1\endcsname{%
    \unexpanded\expandafter{\BODY}%
  }%
  \expandafter\lemma\BODY\unskip\label{#1}\endlemma
}
\newcommand{\repeatatlemma}[1]{%
  \begingroup
  \renewcommand{\thelemma}{\ref{#1}}%
  \expandafter\expandafter\expandafter\lemma
  \csname replemma@#1\endcsname
  \endlemma
  \endgroup
}

\NewEnviron{repprop}[1]{%
  \global\expandafter\xdef\csname repprop@#1\endcsname{%
    \unexpanded\expandafter{\BODY}%
  }%
  \expandafter\proposition\BODY\unskip\label{#1}\endproposition
}
\newcommand{\repeatatproposition}[1]{%
  \begingroup
  \renewcommand{\theproposition}{\ref{#1}}%
  \expandafter\expandafter\expandafter\proposition
  \csname repprop@#1\endcsname
  \endlemma
  \endgroup
}

\newcommand{\PROP}[1]{\textbf{\textit{#1}}\xspace}
\newcommand{\PrAvailability}{\PROP{Availability}}
\newcommand{\PrTermination}{\PROP{Termination}}
\newcommand{\PrValidity}{\PROP{Validity}}
\newcommand{\PrIntegrity}{\PROP{Integrity}}
\newcommand{\PROPsub}[2]{\ensuremath{\bf \textit{\textbf{#1}}_{#2}}\xspace}
\newcommand{\PrIntegrityOne}{\PROPsub{Integrity}{1}}
\newcommand{\PrIntegrityTwo}{\PROPsub{Integrity}{2}}
\newcommand{\PrUniqueBatch}{\PROP{Unique Batch}}
\newcommand{\PrCertified}{\PROP{Certified}}

\newcommand{\Ctt}{\texttt{C}\xspace}
\newcommand{\Stt}{\texttt{S}\xspace}
\newcommand{\Att}{\texttt{A}\xspace}

 \newcommand{\EthTPS}{$12$\xspace}

\newcommand{\LdosTPS}{$100$\xspace}

\newcommand{\KWD}[1]{\ensuremath{\mathit{#1}}\xspace}
\newcommand{\CC}[1]{\ensuremath{cc_{\KWD{#1}}}\xspace}
\newcommand{\SC}[1]{\ensuremath{sc_{\KWD{#1}}}\xspace}
\newcommand{\SR}[1]{\ensuremath{sr_{\KWD{#1}}}\xspace}
\newcommand{\CR}[1]{\ensuremath{cr_{\KWD{#1}}}\xspace}

\newcommand{\confirmedBatches}{\KWD{confirmedBatches}}
\newcommand{\proposedBatches}{\KWD{proposedBatches}}
\newcommand{\id}{\KWD{id}}
\newcommand{\merkleRoot}{\KWD{merkleRoot}}
\newcommand{\sigs}{\KWD{sigs}}
\newcommand{\staker}{\KWD{staker}}
\newcommand{\stakers}{\KWD{stakers}}
\newcommand{\val}{\KWD{value}}
\newcommand{\sender}{\KWD{sender}}
\newcommand{\challengeId}{\KWD{challengeId}}
\newcommand{\canBeChallenged}{\KWD{canBeChallenged}}
\newcommand{\challengedNodes}{\KWD{challengedNodes}}
\newcommand{\moreChallenges}{\KWD{moreChallenges}}
\newcommand{\aggPKs}{\KWD{aggPKs}}


\usepackage{listings, xcolor}

\definecolor{verylightgray}{rgb}{.97,.97,.97}

\lstdefinelanguage{Solidity}{
	keywords=[1]{anonymous, assembly, assert, balance, break, call, callcode, case, catch, class, constant, continue, constructor, contract, debugger, default, delegatecall, delete, do, else, emit, event, experimental, export, external, false, finally, for, function, gas, if, implements, import, in, indexed, instanceof, interface, internal, is, length, library, log0, log1, log2, log3, log4, memory, modifier, new, payable, pragma, private, protected, public, pure, push, require, return, returns, revert, selfdestruct, send, solidity, storage, struct, suicide, super, switch, then, this, throw, transfer, true, try, typeof, using, value, view, while, with, addmod, ecrecover, keccak256, mulmod, ripemd160, sha256, sha3}, 
	keywordstyle=[1]\color{blue}\bfseries,
	keywords=[2]{set,address, bool, byte, bytes, bytes1, bytes2, bytes3, bytes4, bytes5, bytes6, bytes7, bytes8, bytes9, bytes10, bytes11, bytes12, bytes13, bytes14, bytes15, bytes16, bytes17, bytes18, bytes19, bytes20, bytes21, bytes22, bytes23, bytes24, bytes25, bytes26, bytes27, bytes28, bytes29, bytes30, bytes31, bytes32, enum, int, int8, int16, int24, int32, int40, int48, int56, int64, int72, int80, int88, int96, int104, int112, int120, int128, int136, int144, int152, int160, int168, int176, int184, int192, int200, int208, int216, int224, int232, int240, int248, int256, mapping, string, elem, uint, uint8, uint16, uint24, uint32, uint40, uint48, uint56, uint64, uint72, uint80, uint88, uint96, uint104, uint112, uint120, uint128, uint136, uint144, uint152, uint160, uint168, uint176, uint184, uint192, uint200, uint208, uint216, uint224, uint232, uint240, uint248, uint256, var, void, ether, finney, szabo, wei, days, hours, minutes, seconds, weeks, years},	
	keywordstyle=[2]\color{teal}\bfseries,
	keywords=[3]{block, blockhash, coinbase, difficulty, gaslimit, number, timestamp, msg, gas, sender, sig, value, now, tx, gasprice, origin, add, epochinc, get, setminus, emptyset},	
	keywordstyle=[3]\color{violet}\bfseries,
	identifierstyle=\color{black},
	sensitive=false,
	comment=[l]{//},
	morecomment=[s]{/*}{*/},
	commentstyle=\color{gray}\ttfamily,
	stringstyle=\color{red}\ttfamily,
	morestring=[b]',
	morestring=[b]"
}

\lstset{
	language=Solidity,
	backgroundcolor=\color{verylightgray},
	extendedchars=true,
	basicstyle=\footnotesize\ttfamily,
	showstringspaces=false,
	showspaces=false,
	numbers=left,
	numberstyle=\footnotesize,
	numbersep=9pt,
	tabsize=2,
	breaklines=true,
	showtabs=false,
	captionpos=b
}

\begin{document}

\maketitle
\begin{abstract}
  %
  %
  Modern blockchains face a scalability challenge due to the intrinsic
  throughput limitations of consensus protocols.
  Layer 2 optimistic rollups (L2) are a faster alternative that offer
  the same interface in terms of smart contract development and user
  interaction.
  Optimistic rollups perform most computations offchain and make light
  use of an underlying blockchain~(L1) to guarantee correct behavior,
  implementing a cheaper blockchain on a blockchain solution.
  With optimistic rollups, a \emph{sequencer} calculates offchain
  batches of L2 transactions and commits batches (compressed or
  hashed) to the L1 blockchain.
  %
  The use of hashes requires a data service to translate hashes into their
  corresponding batches.
  Current L2 implementations consist of a centralized sequencer
  (central authority) and an optional data availability committee
  (DAC).
  

  In this paper, we propose a decentralized L2 optimistic rollup based
  on \emph{Setchain}, a decentralized Byzantine-tolerant
  implementation of sets.
  %
  The main contribution is a \emph{fully decentralized ``arranger''}
  where arrangers are a formal definition combining sequencers and
  DACs.
  %
  %
  We prove our implementation correct and show empirical evidence
  that our solution scales.
  A final contribution is a system of incentives (payments) for
  servers that implement the sequencer and data availability committee
  protocols correctly, and a fraud-proof mechanism to detect
  violations of the protocol.
\end{abstract}


\section{Introduction}
\label{sec:intro}

%
%
Layer 2 (L2) rollups provide a faster alternative to blockchains, like
Ethereum~\cite{wood2014ethereum}, while offering the same interface in
terms of smart contract programming and user interaction.
%
%
Rollups perform as much computation as possible offchain with minimal
blockchain interaction, in terms of the number and size of
invocations, while guaranteeing a correct and trusted operation.
%
%
L2 solutions work in two phases: users inject transaction requests
through a \emph{sequencer}, and then, their effects are computed
\emph{offchain} and published in L1.
The sequencer orders the received transactions, packs them into
batches, compresses batches and injects the result in L1.
%
State Transition Functions~(SFTs) is the set of independent parties in
charge of computing transition effects offchain and publishing the
resulting effects in L1 encoded as L2 blocks.

There are two main categories of rollups:
\begin{compactitem}
\item \emph{ZK-Rollups:} STFs provide Zero-Knowledge
  proofs used to verify in L1 the correctness of the effects
  claimed.
\item \emph{Optimistic Rollups:} L2 blocks posted are
optimistically assumed to be correct
  delegating block correctness
  on a fraud-proof mechanism.
\end{compactitem}

Optimistic rollups include an arbitration process to solve disputes.
STFs place a stake along their fresh L2 block when posting the block
to L1.
Since blocks can be incorrect, there is a fixed interval during which
third parties (competing STFs) can challenge blocks, by placing a
stake.
The arbitration process is a two-player game played between a claimer
and a challenger, which is governed by an L1 smart contract.
The arbitration process satisfies that a single honest participant can
always win.
If the claimer loses, the block in dispute is removed and the
claimer's stake is lost; if the claimer wins, the challenger loses the
stake.
The winning party receives part of the opponent's stake as a
compensation.

The most prominent Optimistic Rollups based on their market
share~\cite{l2beat} are Arbitrum One~\cite{ArbitrumNitro}, Optimism
mainnet~\cite{optimism} and Base~\cite{base}, while popular ZK-Rollups
include Starknet~\cite{starknet}, zkSync Era~\cite{zksyncera} and
Linea~\cite{linea}.

%
%
In most rollups, a single computing node acts as sequencer, receiving
user transaction requests and posting compressed batches of
transaction requests to L1.
The centralized nature of these sequencers makes these rollups a
\emph{non-decentralized} L2 blockchain, where the sequencer is a
central authority.
%
%
To increase scalability further, other rollups implement sequencers
posting hashes of batches, instead of compressed batches,
dramatically reducing the size of the blockchain interaction.
Working with hashes posted to L1 requires an additional data service,
called \emph{data availability committee} (DAC), to translate hashes
into the corresponding batches.
If either the sequencer or the DAC are centralized\footnote{We refer
  here as centralized solutions even if there is a fixed collection of
  servers picked by the owners of the system.}, the solution is still
a non-decentralized L2 blockchain.

ZK-rollups that rely on DAC are known as \emph{Validiums}.
Current validiums include Immutable X~\cite{immutablex}, Astar
zkEVM~\cite{astar}, and X Layer~\cite{xlayer}.
Optimistic rollups that use DAC are called \emph{Optimiums}, which
include Mantle~\cite{mantle}, Metis~\cite{metis} and
Fraxtal~\cite{fraxtal}.
A list of various Ethereum scaling solutions and their current state
of decentralization can be found in
Appendix~\ref{app:decentralizationL2Eth}.
To simplify notation, in the rest of the paper we use simply L2 to
refer to optimistic rollups that post hashes and have a DAC, that is
Optimiums.
We also use the term \emph{arranger} to refer to the combined service
formed by the sequencer and the DAC.

\vspace{0.5em}
\noindent\textbf{The Problem.}
%
Our goal is to reduce the control power of centralized sequencers.\footnote{However, we do not attempt to solve more complex problems
  such as order-fairness~\cite{kelkar2020orderfairness}.}

\vspace{0.5em}
\noindent\textbf{The Solution.}
%
We describe how to implement a \emph{fully decentralized} arranger.
We rely on algorithms whose trust is in the combined power of a
collection of distributed servers where correctness requires more than
two thirds of the servers run a correct version of the protocol.
%
Our main building block is Setchain~\cite{capretto22setchain}, a
Byzantine resilient distributed implementation of grow-only sets.
Using setchain, users can inject transaction requests through any
server node where they remain unordered until a ``barrier'', named
epoch, is injected.
Setchain is reported in~\cite{capretto22setchain} to be three orders
of magnitude faster than consensus, by exploiting the lack of order
between barriers to use a more efficient distributed protocol to
disseminate transactions requests.
We build in this paper a decentralized arranger using setchain, by
using epochs to determine batches.
Since all correct arranger nodes include a correct setchain node, they
agree on each batch and can locally compute the hash of a batch, and
reverse hashes into batches.
It is well known that Byzantine distributed algorithms do not compose
well in terms of correctness and
scalability~\cite{capretto22setchain}, partly because the different
components need to build defensive mechanisms to guarantee
correctness.
Here we implement an arranger by combining in each distributed node
the role of a sequencer and a DAC member, so a correct server node
includes a correct implementation of a setchain, and a correct
participant in the sequencer and DAC services.
Setchain (and in consequence the decentralized arranger) is a
permissioned protocol, so servers participating are fixed and known to
all other participants.
Protocols to elect and replace arranger nodes are out of the scope of
this paper.

Servers can also post hashes to L1 with the guarantee that all correct
servers know the reverse translation.
Agents, like STFs, can contact any setchain server to obtain the batch
of a given hash.
%

Finally, we introduce \emph{incentives} for servers forming the
arranger to behave properly, for example, including payments for
generating and signing hashes, posting correct batches into L1, and
performing reverse translations.
These servers place stakes when posting batch hashes and are rewarded
for batches that consolidate.
If a batch or hash is proven incorrect all servers involved lose their
stake.
To the best of our knowledge, this is the first paper to introduce
incentives for arrangers (sequencer and DACs) of
decentralized L2.
Our incentives are general and do not depend on the concrete
implementation of the sequencer or DAC (they do not depend on the use
of setchain to implement and arranger presented in this paper).

\vspace{0.5em}
\noindent\textbf{Contributions.}
%
In summary, the contributions of this paper are:
\begin{compactitem}
\item A formal definition of L2 arrangers;
\item A fully decentralized arranger based on setchain and a proof of
  its correctness;
\item Economic incentives including payments and fraud-proof
  mechanisms to detect protocol violations and punishments for
  incorrect servers;
\item An analysis of two adversary models and their impact on the
  correctness of our solution;
\item Implementations of all building blocks to empirically assess
  that they can be used to implement an efficient scalable
  decentralized arranger.
\end{compactitem}

\section{Preliminaries}\label{sec:prelim}

In this section, we briefly present our assumptions about the underlying
execution system, Optimistic Rollups and Optimiums, and Setchain.

\subparagraph{Assumptions of the Model of Computation.}
Distributed systems consist of processes (clients and servers) with an
underlying network that they use to communicate, using message
passing.
Processes compute independently, at their speed, and
their internal states remain unknown to other processes.
Message transfer delays are arbitrary but finite and remain
unknown to processes.
Servers communicate among themselves to implement distributed
data-types with certain guarantees and clients can communicate with
servers to exercise such data-types.

Processes can fail and behave arbitrarily.
%
We bound the number of Byzantine failing servers $f$ by a fraction of
the total number of servers.
In Section~\ref{sec:incentives} we introduce \emph{economic incentives
  and deferrals} that motivate servers to be honest.
All processes (including clients) have pairs of public and private
keys, and public keys are known to all processes.
The set of servers is fixed and known to all participants.
We assume reliable channels between non-Byzantine processes, messages
are not lost, duplicated or modified, and are authenticated.
%
%
Messages corrupted or fabricated by Byzantine processes are detected
and discarded by correct processes~\cite{Cristin1996AtomicBroadcast}.
Since each message is authenticated, communication between correct
processes is reliable.
Additionally, we assume that the distributed system is partially
synchronous~\cite{Crain2021RedBelly,Fischer1985Impossibility}.
%
%
Finally, we assume clients create valid elements and servers can
locally check their validity.
In the context of blockchains, this is implemented using public-key
cryptography.

%

  

%
\begin{figure}[t!]
  \centering
  \begin{tabular}{c@{\hspace{1em}}c}
    \includegraphics[scale=0.295]{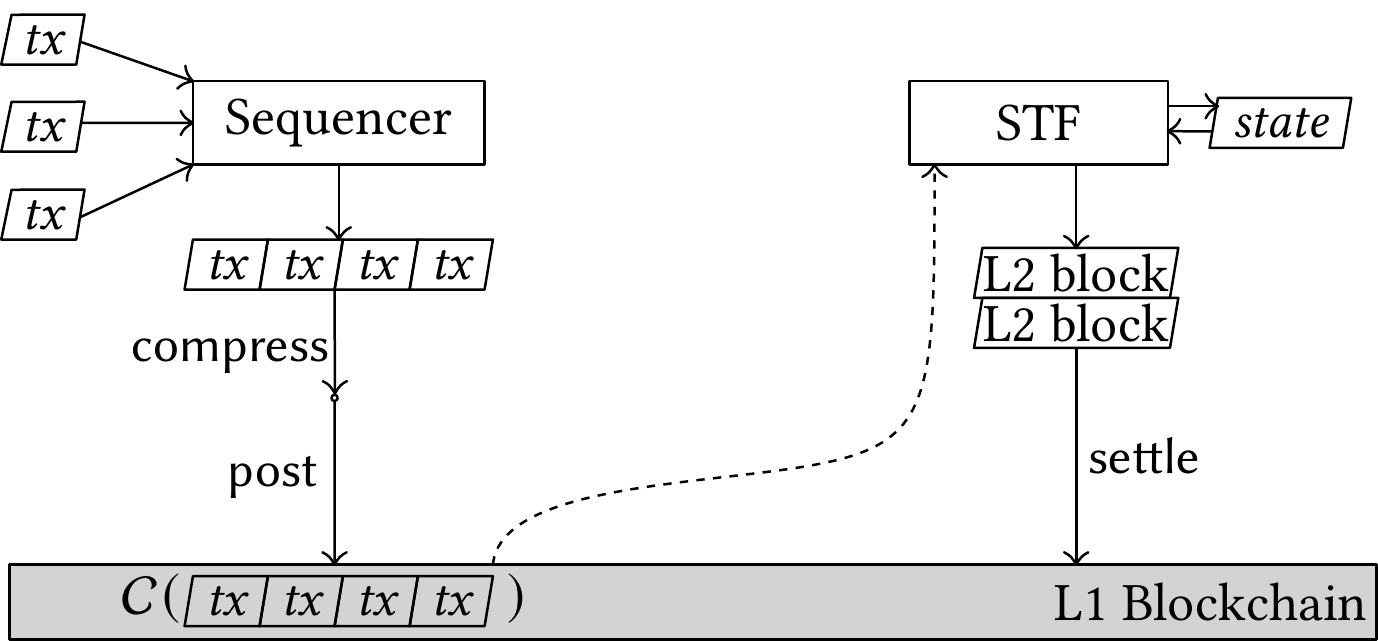}  &
    \includegraphics[scale=0.295]{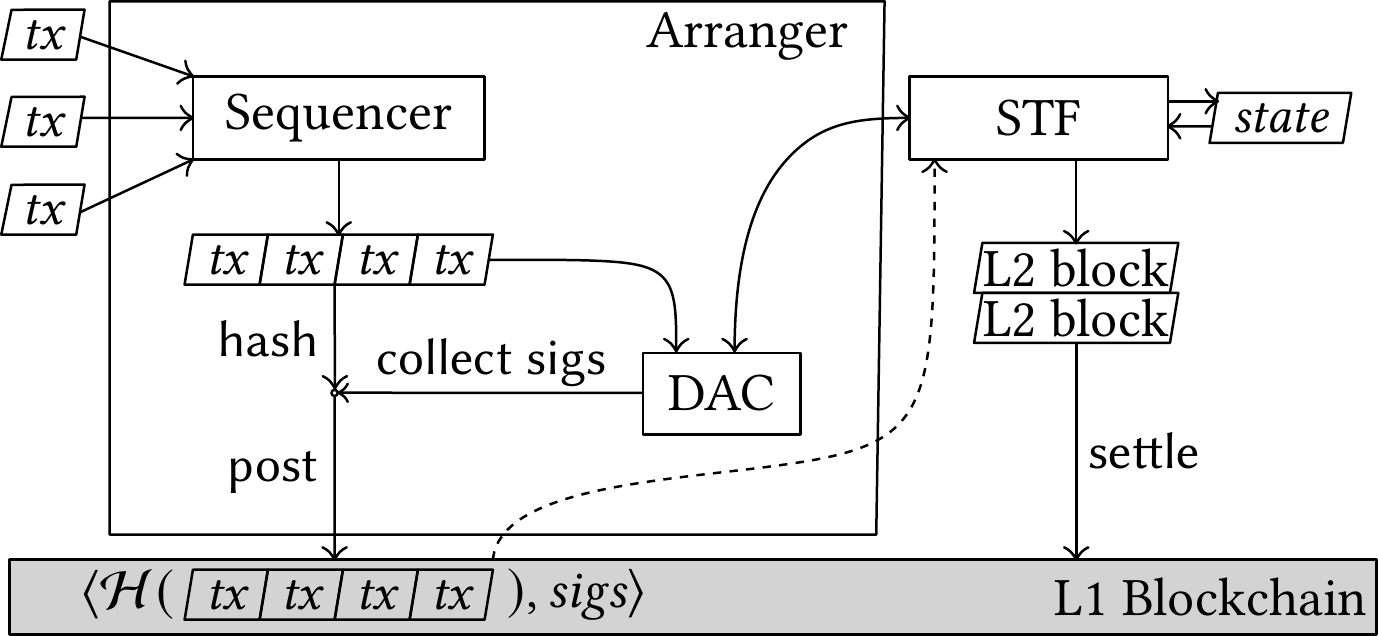}
  \end{tabular}
  \caption{Optimistic Rollups (left) and Optimiums (right).}
  \label{fig:arbitrum-anytrust}
\end{figure}

\subsection{Optimistic Rollups and Optimiums}

The main principle of Optimistic Rollups is to split transaction
sequencing from transaction execution whose effects are delayed to
allow arbitration.
%
Optimistic Rollups are implemented as two components (see
Fig.~\ref{fig:arbitrum-anytrust} (left)):
a \emph{sequencer}, in charge of ordering transactions, and a
\emph{state transition function} (STF), responsible for executing
transactions (this terminology was introduced by
Arbitrum~\cite{Kalodner2018Arbitrum}).
%
%
The sequencer collects transaction requests from users, packs them
into batches, and posts compressed batches---using a reversible
compression algorithm~\cite{Alakuijala18brotli}---as a \emph{single}
invocation into L1 (Ethereum).
Once posted, compressed batches of transactions are immutable
and visible to anyone.

STFs compute the effects of transaction batches which are completely
determined by the sequence of transactions and the previous state of
the L2 blockchain.
%
STFs are independent processes that can propose new L2 states as L2
block \emph{assertions} into L1.
L2 blocks are optimistically assumed to be correct.
STF processes can challenge L2 blocks posted by other STFs during a
challenge period (typically more than one day).
If an STF challenges an assertion, a bisection protocol---which
guarantees that an honest players can win---is played between the STFs
involved.
If the challenger wins, the L2 block is discarded; otherwise it
continues on hold until the challenge period ends.
L2 blocks that survive the challenge period consolidate and the L2
blockchain evolves.

In Optimistic Rollups, the sequencer is a centralized process and does
not offer any guarantee regarding transaction orders or whether
transaction requests can be discarded.
To prevent censorship from the sequencer, some rollups allow users to
post transaction requests directly to L1.

%
Even though batches are compressed in Optimistic Rollups, the L1 gas
associated with posting batches is still significant.
To mitigate this cost, Optimiums introduce a \emph{Data Availability
  Committee} (DAC) storing batches and providing them upon request
(see Fig.~\ref{fig:arbitrum-anytrust} (right)).
In current Optimiums, the sequencer posts a hash of a batch---which is
much smaller than the compressed batch---along with evidence that the
batch is available from at least one honest DAC server.
Some implementations, to guarantee progress, provide a fallback
mechanism, such that the sequencer either collects enough signatures
within a specified time frame or posts the compressed batch into L1.
STFs then check that hashes posted have enough valid signatures and
request the corresponding batch to DAC members.
Recall that in the rest of the paper we will use ``L2'' to refer to the
Optimium extension of Layer 2 Optimistic Rollups that post hashes to
L1 and implements a DAC.

\subsection{Setchain}\label{sec:setchain}
Setchain~\cite{capretto22setchain} is a Byzantine-fault tolerant
distributed data-structure that implements grow-only sets with
synchronization barriers called epochs.
Clients can add elements to any server and servers disseminate new
elements to the other servers.
When barriers are introduced, a set consensus protocol~\cite{Crain2021RedBelly}
is run and all correct servers agree on the content of the new epoch after
which elements can be deterministically ordered.
Setchain efficiently exploits the lack of order
between barriers.
Empirical evaluation suggests that setchain is three orders of
magnitude faster than conventional
blockchains~\cite{capretto22setchain}.


\begin{figure}
  \centering
  \begin{tabular}{|c|c|}\hline
     \includegraphics[scale=0.42]{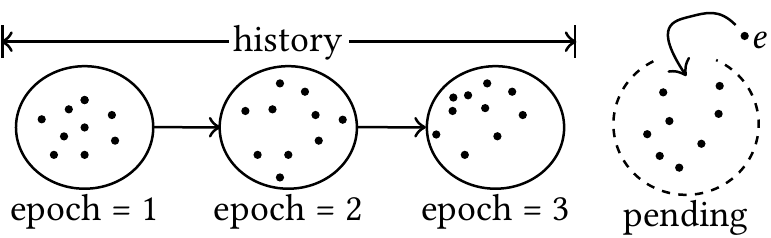} &
     \includegraphics[scale=0.42]{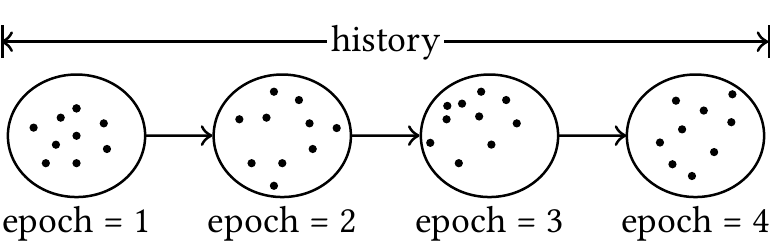}\\ \hline
  \end{tabular}  
  \caption{Applying $\protect\<add>(e)$ after epoch $3$ (left) and
    increasing to epoch $4$ (right). }
  \label{fig:setchain}
\end{figure}

\subparagraph{API.}
Setchain server nodes store locally the following information:
(1) a set of added transaction requests: $\<theset>$;
(2) the current epoch number: $\<epoch>$;
(3) the current mapping from natural numbers to sets of transaction
requests describing each epoch: $\<history>$.
Servers support two operations, always available to any client
process:
operation $\<add>$ adds elements to $\<theset>$;
operation $\<get>$ returns the local values of $\<theset>$,
$\<history>$, and $\<epoch>$, as perceived by the server.\footnote{We
  present this API for simplicity but Setchain supports membership
  queries to get individual epochs.}
We assume epoch changes happen regularly as this can be forced by the
participant servers using a local time-out.

Fig.~\ref{fig:setchain} shows the local view of a Setchain server node
$v$.
Set $v.\<pending>$ contains those elements that $v$ knows but that
have not been assigned an epoch yet:
$v.\<pending> = v.\<theset> \setminus v.\<history>$.
To insert a new element $e$ in the setchain, clients can invoke $v.\<add>(e)$.
Server $v$ checks the validity of $e$, propagates it and adds it to
$v.\<pending>$.
When an epoch increment is triggered, servers propose their elements
in $\<pending>$ and collaboratively decide which of those elements
form the new epoch, see Fig.~\ref{fig:setchain} (right).
Servers also sign the hash of each epoch and disseminate this
signature, for example as a Setchain element.
Clients can contact a single server and authenticate the response by
counting the amount of signatures provided.
%

\subparagraph{Relevant Properties of Setchain.}
%
We list here the properties of setchain relevant for
arrangers in Section~\ref{sec:seq-decentralized-props}.
See~\cite{capretto22setchain} for a more exhaustive list of
properties and proofs.
\begin{compactitem}
\item \textbf{Setchain-GetGlobal:} Adding an element \(e\) to a correct server
  eventually results in $e$ being in $\<theset>$ returned by
  $\<get>$ invocations on every correct server.
\item \textbf{Setchain-GetAfterAdd:} Adding a valid element $e$ to a
  correct server eventually results in $e$ being in $\<history>$
  returned by a $\<get>$ on correct servers.
\item \textbf{Setchain-UniqueEpoch:} Every element is in at most one epoch.
\item \textbf{Setchain-ConsistentGets:} All correct servers agree on
  the content of each epoch.
\item \textbf{Setchain-ValidGet:} Each element $e$ returned by a $\<get>$
  operation in correct servers is a valid element.
\end{compactitem}




\section{Arranger}
\label{sec:api}

In this section, we formally define the notion of L2 arranger (or
\emph{arranger} for short) as the service in charge of receiving and
serializing transaction requests, packing them into batches,
efficiently posting them into L1, and making the data available.

An arranger encompasses both the sequencer and DAC of a L2
solution.
%
%
%
We propose an integrated decentralized implementation that
aims to replace both sequencers and DAC.
Arrangers have two kinds of clients: external users injecting L2 transaction
requests and STFs computing L2 blocks.
STF agents need to determine which transaction belongs to a given
batch to compute L2 blocks and to decide if they should dispute L2
blocks proposed by other STF agents.
We say ``the arranger posts to L1'' to refer to the action of some
arranger server process posting to
L1. 

\subsection{Arranger API}
Arrangers use Merkle Trees~\cite{Merkle88} as hash function \(\hash\)
to hash batches where the leaves are the transactions in the batch.
Merkle trees provide an efficient (logarithmic) membership
authenticated check.
We rely on this method to create an arbitration mechanism to dispute
membership where an honest participant can prove the validity of
hashes and prevent the confirmation of invalid hashes (see
Section~\ref{sec:incentives:challenges}).

Arranger servers order transactions, pack them into batches and create
the hashes of each batch, also assigning consecutive identifiers to
consecutive batches.
Periodically, batch identifiers with their hashes and signatures are
posted to L1 invoking a smart contract called \<logger>.  

Arrangers export two operations to all clients:
operation \<add>(\<tx>) to send transaction request \<tx> to the
arranger so \<tx> is added to a future batch;
and operation \<translate>($\<id>,\<h>$) which requests the batch of
transactions corresponding to hash \(\<h>\) with identifier \(\<id>\).
If  \(\<id>\) does not match any batch, the arranger returns
error \<invalidId>.
If there is a batch \(b\) with identifier \<id>, the arranger returns
\(b\) if it hashes to \(\<h>\) or an error \<invalidHash> otherwise.

When the arranger receives enough transactions (or a timeout is
reached), all correct servers decide a new batch $b$, order
transactions in \(b\), assign an identifier $\<id>$ to \(b\), compute
$h = \hash(b)$, and create a ``\emph{batch tag}'' $(\<id>,h)$.
Each correct server signs the new tag and propagates the signature.
When enough signatures are collected, a combined signature
$\sigma=\<sigs>(\<id>, \<h>)$ is attached to the tag to form a
``\emph{signed batch tag}'' $(\<id>,\<h>,\<sigs>(\<id>,\<h>))$, which
is posted to L1.
The logger contract (in L1) accepts the signed batch tag but
\emph{does not} perform any validation check, optimistically assuming
that the batch is legal.
An arbitration protocol, described in
Section~\ref{sec:incentives:challenges}, guarantees correctness of the
batches (this is similar to L2 block claims in Optimistic Rollups).
Finally, STFs observe the signed batch tags in L1, contact the
arranger invoking $\<translate>$ to obtain the batch of transactions
and compute the next L2 block or check the validity of posted L2
blocks.

Arranger processes can post multiple batch tags with the same
identifier and the logger contract will accept them all.
However, for each identifier, the STF agents only process the first
valid signed batch tag, discarding the rest.

\subsection{Desired Arranger Properties}
\label{sec:api-properties}
We formally define the properties of \emph{correct} arrangers.
These properties include all \emph{basic properties} and some
safety properties of ideal arrangers presented
in~\cite{motepalli2023sok}.

Signed batch tags are \emph{legal} if they satisfy the following
properties:
  \begin{compactitem}
  \item \PrCertified: have at least \(f+1\) arranger server
    signatures~(see Section~\ref{sec:prelim}) guaranteeing
    at least one correct server signed the batch tag.
  \item \PrValidity: Every transaction in the batch is a valid
    transaction added by a client.\footnote{A transaction is valid
      when is properly formed and signed by the originating client.}
  \item \PrIntegrityOne: No transaction appears twice in its batch.
  \end{compactitem}
  
A batch tag is legal if it appears in a legally signed batch tag.
A legal batch tag can be part of two legal signed batch tags when
signed by a different subset of arranger processes.
However, all legal batch tags with the same identifier have the same
batch of transactions because all correct servers agree on the batch
content.
This ensures a deterministic evolution of the blockchain.
Formally, let $(\<id>, \<h>_1, \sigma_1)$ and
$(\<id>, \<h>_2, \sigma_2)$ be two legally signed batch tags posted by
the arranger, such taht $\sigma_1=\<sigs>(\<id>,\<h>_1)$ and
$\sigma_2=\<sigs>(\<id>,\<h>_2)$ are signed by at least $f+1$ nodes, then
\(\<h>_1 = \<h>_2\).

\begin{property}
 \label{prop:uniqueBatch}
 \PrUniqueBatch: all legal batch tags with the same identifier
 correspond to the same batch of transactions.
\end{property}

The following property prevents transaction duplication, censorship
and guarantees data availability.
%

\begin{property} Correct arrangers satisfy:
  \label{prop:basic}
  \begin{compactitem}
  \item \PrTermination: all added valid transactions eventually appear
    in a legal batch tag.
  \item \PrIntegrityTwo: no transactions appears in
    more than one legal batch tag.
 \item \PrAvailability: every legal batch tag posted can be translated into
  its batch.
\end{compactitem}
\end{property}
\PrAvailability is expressed formally as follows. Let $(\<id>, \<h>, \sigma)$ be a
legal batch tag posted by the arranger, s.t. $\<h>=\hash(b)$.
Then the arranger, upon request $\<translate>(\<id>,\<h>)$, returns $b$.
This prevents halting the L2 blockchain by failing to provide
batches of transactions from hashes.

\PrValidity, \PrIntegrityOne, \PrIntegrityTwo, \PrUniqueBatch and
\PrAvailability are safety properties and \PrTermination is a liveness
property, together they define correct arrangers.
%
%
However, the properties listed above do not guarantee any order of
transactions posted by clients.
%


\section{A Fully Decentralized Arranger Implementation}
\label{sec:seq-decentralized}
Implementing a fully centralized arranger is straightforward.
Similarly, an arranger that consists of a sequencer and a
decentralized DAC is also straightforward from the point of view of
arranger.
The central sequencer handles transaction ordering, batch creation,
hashing, and communicates with each DAC member to share batches and
obtain their signatures (see Appendix~\ref{sec:seqDC} for a more
detailed description).
The real challenge resides in implementing a correct arranger where
both the sequencer and the DAC are decentralized.

\begin{wrapfigure}[5]{l}{0.49\textwidth}
  \vspace{-1.2em}
  \includegraphics[scale=0.32]{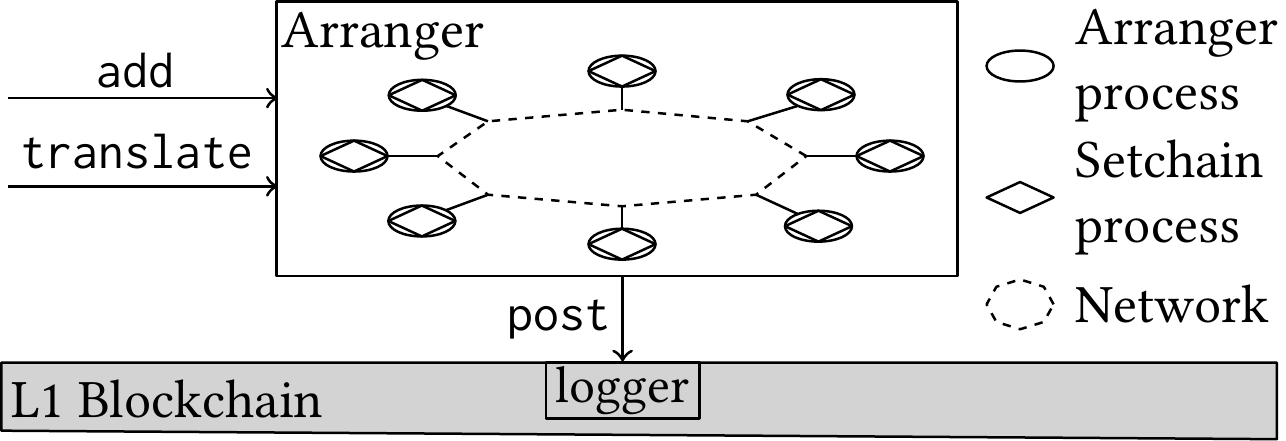}
\end{wrapfigure}

We present now a \emph{fully decentralized} arranger implementation
using Setchain satisfying all correctness properties listed in
Section~\ref{sec:api-properties}.
Both \<add> and \<translate> are implemented
in a completely decentralized fashion.

For simplicity, we assume that our implementation (see the figure above) uses Setchain epochs as batches.
%
%
In our fully decentralized arranger, all servers perform both the
roles of sequencers and DAC members, and take turns to post data to
L1.


\subsection{Implementation}
Algorithm~\ref{alg:arranger-setchain} shows the pseudo-code of correct
servers of decentralized arrangers.
Arranger processes \emph{extend} Setchain
servers~\cite[Alg.~6]{capretto22setchain}~(see
Appendix~\ref{sec:setchain-implementation}).
Here, all servers cryptographically sign each epoch hash and
insert the signature back into the Setchain as an element.
This allows servers to locally provide cryptographic guarantees to
clients about the correctness of an epoch simply by exhibiting $f+1$
signatures.
Servers running Algorithm~\ref{alg:arranger-setchain} maintain a local map
\<hashes> storing inverse translations of hashes and a local set
$\<signatures>$ to store cryptographically signed epochs.
%
\begin{figure}[b!]
  \begin{algorithm}[H]
    \caption{\small Server implementation of decentralized \arranger
      extending Setchain} 
    \label{alg:arranger-setchain}
    \small
    \begin{algorithmic}[1]
        \State \textbf{extends } Setchain
        \State  \textbf{init:} $\<hashes> \leftarrow \emptyset, \<signatures> \leftarrow \emptyset, \<localsetchain>.init$
        \Function{\textnormal{\<add>}}{$t$} \label{alg:arranger-setchain-add-begin}
          \State $\<localsetchain>.\<Add>(t)$ \label{alg:arranger-setchain-add}
          \State \<return>
        \EndFunction \label{alg:arranger-setchain-add-end}
	\Function{\textnormal{\<translate>}}{$\textit{id},h$} \label{alg:arranger-setchain-translate-begin}
           \If{$(\<id>,\_)\notin \<hashes>$}
             \<return> \<invalidId>
           \Else ~\If{$(\<id>,h)\notin \<hashes>$}
                \<return> \<invalidHash>
           \Else
               ~\<return> $\<hashes>(\<id>,h)$
           \EndIf
           \EndIf
          \EndFunction \label{alg:arranger-setchain-translate-end}
          \Upon{$\<newepoch>(\<id>,E)$} \label{alg:arranger-setchain-upon-newepoch}
          \State $b \leftarrow \<tobatch>(E)$
          \State $h \leftarrow \hash(b)$
          \State $\<hashes> \leftarrow \<hashes> \cup \{(\<id>,h), b\}$
          \EndUpon \label{alg:arranger-setchain-upon-newepoch-end}
          \Upon{$\<added>(epoch\_signature(id,h,sig_j(\<id>,h)))$}\label{alg:arranger-setchain-upon-newsig}
          \State $\<signatures> \leftarrow \<signatures> \cup
          \{(id,h,sig_j(id,h))\}$
          \EndUpon \label{alg:arranger-setchain-upon-newsig-end}
          \Upon{$\<myturn>$} \label{alg:arranger-setchain-upon-post-begin}
          \State $id \leftarrow \<nextBatch>()$
          \State \<wait> for $f+1$ signatures
          $(id,h,sig_j(id,h))$ in \<signatures>
          \State \<logger>.\<post>$(id,h,
          \<BLSAggregate>(\bigcup_i
            sig_i(id,h)))$ \label{alg:arranger-setchain-post}
          \EndUpon \label{alg:arranger-setchain-upon-post-end}
      \end{algorithmic}
  \end{algorithm}\vspace{-2em}
\end{figure}


Distributed arranger processes implement \<add> by adding transaction
requests as elements into the Setchain which, in turn, triggers
communication with other processes maintaining the Setchain.
%
%
When a new epoch is decided by the Setchain internal set consensus,
each process locally receives a \(\<newepoch>\) event with the epoch
number and its content, and $\<tobatch>$ is invoked.
This function separates transactions contained in epochs from elements
serving as ``metadata'' (for example, server signatures of previous
epochs).
The resulting transactions are ordered and packed into a new batch and
\<hashes> is updated.
For simplicity in the explanation, we associate hashing epochs
in~\cite[Alg.~6]{capretto22setchain} with hashing batches, that is,
for each epoch \(e\) with hash \(h_{s}\), as used in Setchain, we
assume that \(h_{s} = \hash(\<tobatch>(e))\).
Hence, when a new epoch is created, each server cryptographically
signs the batch hash and inserts the signature as an element in the
Setchain.
Each server receives an \(\<added>\) event informing the arranger
about cryptographically signed batch hashes and the arranger
sub-process stores the signatures received in the local set
$\<signatures>$.

Arranger servers take turns posting legal signed batch tags into the
logger contract in L1, slicing time and distribute time slices
between them, which is feasible because we assumed the system to be
partially synchronous (and slices can be reasonably long intervals,
like minutes).
Event \<myturn> tells processes when it is their turn to post a
new batch tag to L1.
This is a simple way to guarantee eventual progress.
Even if Byzantine processes do not post legal batch tags in their
turns, the rounds will continue to advance until the turn of a correct
process arrives to post legal batch tags (see
Lemma~\ref{l:termination} below).
Function $\<nextBatch>$ returns the next batch identifier to be posted
to the logger contract which is the minimum identifier for which
there is no legal batch tag in L1.
When it is the turn of arranger server $P$ to post and $P$ has
received enough signatures for the batch, $P$ computes a combined
signature $\sigma$ and posts the legally signed batch tag
$(\<batchId>,h,\sigma)$ into the logger L1 contract.

\subsection{Proof of Correctness}\label{sec:seq-decentralized-props}
Algorithm~\ref{alg:arranger-setchain} implements a correct
arranger.
We inherit the assumptions from Setchain, so for $N$ total servers,
the maximum number of Byzantine processes is $f$ such that $f < N/3$.

We first prove that each legally signed batch tag corresponds with a
Setchain epoch.
Let \((\<batchId>, \hash(b),\sigma))\) be a legally signed batch tag
posted into L1, then $\<tobatch>(e) = b$ in the Setchain where epoch
\(e\) corresponds to epoch number $\<batchId>$.

\begin{lemma}\label{l:batch-epoch}
  Batches of transactions in legal batch tags correspond to
  transactions in Setchain epochs.
\end{lemma}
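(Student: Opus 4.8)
The plan is to trace the provenance of a legally signed batch tag backwards through Algorithm~\ref{alg:arranger-setchain} and the Setchain properties to show its batch must equal the result of applying \<tobatch> to the corresponding Setchain epoch. First I would unfold the definition: a legally signed batch tag $(\<id>, h, \sigma)$ posted to L1 must, by \PrCertified, carry at least $f+1$ arranger signatures, hence at least one correct server $P$ signed the underlying batch tag $(\<id>, h)$. The key observation is that in Algorithm~\ref{alg:arranger-setchain} a correct server only ever creates and signs a batch tag inside the \<newepoch> handler (lines \ref{alg:arranger-setchain-upon-newepoch}--\ref{alg:arranger-setchain-upon-newepoch-end}): upon receiving $\<newepoch>(\<id>, E)$ it sets $b \leftarrow \<tobatch>(E)$, $h \leftarrow \hash(b)$, records $((\<id>,h), b)$ in \<hashes>, and (per the surrounding text) signs $h$ and disseminates the signature as a Setchain element. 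So a correct server's signature on $(\<id>, h)$ witnesses that it received a $\<newepoch>(\<id>, E)$ event with $\hash(\<tobatch>(E)) = h$.

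Next I would invoke the relevant Setchain guarantees. By \textbf{Setchain-ConsistentGets} all correct servers agree on the content of each epoch, so the epoch $e$ with epoch number $\<id>$ is well defined and the $E$ that $P$ saw in its $\<newepoch>(\<id>, E)$ event is exactly that epoch's content. Since \<tobatch> is a deterministic function of the epoch content (it filters out metadata elements such as prior-epoch signatures, then orders and packs the remaining transactions deterministically), we get $b = \<tobatch>(E) = \<tobatch>(e)$, and since $h = \hash(b) = \hash(\<tobatch>(e))$, the batch carried by the tag is precisely $\<tobatch>(e)$. Combined with the identification $h_s = \hash(\<tobatch>(e))$ made in the text between Setchain's epoch hash and the arranger's batch hash, this closes the correspondence in both directions: every legally signed batch tag's batch is a Setchain epoch's \<tobatch> image, and conversely its identifier names that epoch.

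I expect the main obstacle to be pinning down the direction from "some correct server signed $(\<id>,h)$" to "$h$ is the hash of that epoch's batch". This requires arguing that a correct server never signs a spurious $(\<id>, h')$ with $h' \neq \hash(\<tobatch>(e))$ — i.e. that the only code path producing a signature is the \<newepoch> handler — and that $\<id>$ is genuinely the epoch number, not an arbitrary label. Both follow from inspecting Algorithm~\ref{alg:arranger-setchain} together with the underlying Setchain code in~\cite[Alg.~6]{capretto22setchain}, but the argument depends on the modelling convention (stated just before the lemma) that the Setchain epoch hash $h_s$ coincides with $\hash(\<tobatch>(e))$; I would make that dependence explicit. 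A secondary, minor point is that Byzantine servers may contribute some of the $f+1$ signatures or post bogus tags, but \PrCertified guarantees at least one honest signer, which is all the argument needs, and the determinism of \<tobatch> plus \textbf{Setchain-ConsistentGets} rules out two honest servers signing different hashes for the same epoch number.
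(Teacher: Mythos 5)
Your proposal is correct and follows essentially the same route as the paper's (much terser) proof: a legal tag carries at least one correct server's signature, and correct servers only sign hashes produced in the \<newepoch> handler, i.e.\ hashes of $\<tobatch>$ applied to Setchain epochs. Your additional appeals to \textbf{Setchain-ConsistentGets}, the determinism of $\<tobatch>$, and the stated convention $h_s = \hash(\<tobatch>(e))$ merely make explicit what the paper leaves implicit.
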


\begin{proof}
  Legal batch tags require at least one correct server signature and
  correct servers only sign epoch hashes.
\end{proof}

Property~\PrUniqueBatch follows directly from the previous lemma, as
the batch of transactions of each legal batch tag with identifier
\(\<batchId>\) corresponds to epoch \(e\).
Moreover, due to property \textbf{Setchain-UniqueEpoch}, no
transaction in a legal batch appears twice in the same batch or
appears in another legal batch, satisfying properties~\PrIntegrityOne
and \PrIntegrityTwo.

Byzantine servers can re-post existing legally signed batch
tags, but they can not forge fake batches or fabricate legally signed
batch tags because this requires collecting $f+1$ signatures.
Hence, all Byzantine servers can do is to post the same batch tag
several times (perhaps with different collections of $f+1$
signatures).
Since STFs process one legal batch tag per identifier, no transaction
is executed twice and repeated batches are discarded.
In Section~\ref{sec:incentives:challenges} we introduce arbitration
mechanisms penalizing arranger servers for posting illegal batches.

Combining property \textbf{Setchain-ValidGet} (see
Section~\ref{sec:setchain}) with Lemma~\ref{l:batch-epoch}, legal
batch tags only contain valid transactions, so \PrValidity holds.

\begin{lemma}\label{l:termination}
  All Setchain epochs become legal batch tags.
\end{lemma}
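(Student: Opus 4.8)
The plan is to show that, for an arbitrary Setchain epoch $e$ with number $\<id>$, a legally signed batch tag $(\<id>, \hash(\<tobatch>(e)), \sigma)$ eventually gets posted to the \<logger> contract on L1. First I would unpack what a legal batch tag needs: by the definition in Section~\ref{sec:api-properties} and \PrCertified, it suffices to produce $f+1$ valid server signatures on the pair $(\<id>, h)$ where $h = \hash(\<tobatch>(e))$. So the argument splits into two halves: (1) every correct server eventually holds $f+1$ such signatures in its local set $\<signatures>$, and (2) the round-robin posting mechanism driven by the \<myturn> event eventually gives a correct server a turn in which it actually posts the tag for identifier $\<id>$.

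For half (1), I would invoke the Setchain properties from Section~\ref{sec:setchain}. When epoch $e$ is decided, \textbf{Setchain-ConsistentGets} guarantees all correct servers agree on its content, so each correct server computes the \emph{same} batch $\<tobatch>(e)$ and hence the same hash $h$ (using the convention $h_s = \hash(\<tobatch>(e))$ from the implementation section). Each correct server then signs $(\<id>, h)$ and inserts the signature back into the Setchain as an element; by \textbf{Setchain-GetGlobal} (elements added to a correct server propagate to every correct server's \<theset>), every correct server eventually receives an \<added> event for each of these signatures and, by the \<upon> block at line~\ref{alg:arranger-setchain-upon-newsig}, adds it to $\<signatures>$. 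Since there are at least $N - f \ge f+1$ correct servers (using $f < N/3$), every correct server eventually accumulates the required $f+1$ signatures for $(\<id>, h)$.

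For half (2), the key observation is already flagged in the text preceding the lemma: the \<myturn> rounds keep advancing regardless of Byzantine behavior because time is sliced and partial synchrony guarantees slices elapse. A Byzantine server whose turn comes up may post nothing (or junk the \<logger> rejects nothing, but STFs ignore), so $\<nextBatch>()$ — defined as the minimum identifier with no legal batch tag in L1 — does not advance past $\<id>$ on account of them. The identifiers below $\<id>$ are handled by the same argument applied inductively (or: they belong to earlier epochs, and $\<nextBatch>$ monotonically clears them as legal tags appear). So eventually the turn reaches a correct server $P$ at a time when $\<nextBatch>() = \<id>$; by half (1), $P$ already has (or waits only finitely long for) $f+1$ signatures, computes the aggregate $\sigma$ via $\<BLSAggregate>$, and executes \<logger>.\<post> at line~\ref{alg:arranger-setchain-post}, posting a legally signed batch tag for $\<id>$.

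The main obstacle is the liveness bookkeeping in half (2): making rigorous that $\<nextBatch>()$ genuinely reaches $\<id>$ rather than stalling. One must argue that every identifier $\<id>' < \<id>$ eventually acquires a legal batch tag (otherwise $\<nextBatch>()$ sticks below $\<id>$ forever) and that the round-robin schedule is fair enough that a correct server gets a turn while $\<nextBatch>() = \<id>$ — i.e., that Byzantine servers cannot perpetually "consume" the relevant turns. I would handle this by a straightforward induction on batch identifiers: the base case and inductive step both reduce to "some correct server eventually posts the tag for the current minimal missing identifier," which is exactly the conjunction of half (1) and the fairness of the turn schedule. Everything else — validity of the signed transactions, uniqueness — is not needed here, since this lemma is purely about \PrTermination-style progress and was already reduced to the two halves above.
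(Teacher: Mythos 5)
Your proposal is correct and follows essentially the same route as the paper's proof: both halves (gathering $f+1$ signatures via \textbf{Setchain-ConsistentGets} and \textbf{Setchain-GetGlobal}, then relying on the rotating \<myturn> schedule to reach a correct server) are exactly the ingredients the paper uses. The only cosmetic difference is that you make the induction on identifiers explicit, whereas the paper works with the minimal identifier lacking a legal batch tag and dispatches the case $\<nextBatch> > \<batchId>$ by observing (via Lemma~\ref{l:batch-epoch} and \textbf{Setchain-ConsistentGets}) that a legal tag for that identifier must already exist and correspond to the same epoch.
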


\begin{proof}
  Let $\<batchId>$ be the minimum epoch number for which there is no
  legal batch tag posted and let $e$ be $\<batchId>$-th epoch.
  Property \textbf{Setchain-ConsistentGets} ensures all correct
  servers agree on the content of $e$.
  Setchain servers cryptographically sign epoch hashes and insert
  their signature in the Setchain as elements.
  By~\textbf{Setchain-GetGlobal}, all servers eventually receive the
  signatures injected.
  Let \(s\) be the first correct server in charge of posting batch
  tags to the logger after receiving at least $f+1$ signatures of
  epoch $e$.
  There are two cases: either $\<nextBatch>$ returns $\<batchId>$ or a
  larger identifier.
  In the former case, $s$ has enough signatures to generate a legal
  batch tag for $e$ and posts it in the logger contract
  (lines~\ref{alg:arranger-setchain-upon-post-begin}-\ref{alg:arranger-setchain-upon-post-end}).
  In the latter case, a legal batch tag with identifier $\<batchId>$
  has already been posted corresponding to epoch \(e'\)~(see
  Lemma~\ref{l:batch-epoch}).
  By~\textbf{Setchain-ConsistentGets}, we have that $e'=e$.
\end{proof}

Transactions added to at least one correct server eventually appear in
a legal batch tag, which follows from the previous lemma and
property~\textbf{Setchain-GetAfterAdd}.
Therefore, clients contacting one correct server are guaranteed
\PrTermination.
Clients can make sure that they contact at least on good server
contacting $f+1$ servers.
Alternatively, they can contact one server and check that the
answer is correct by checking the cryptographic signatures of the
epochs reported (and contact another server if the check fails).



Finally, we prove that every hash in a legal batch tag posted by the
arranger can be resolved by at least one correct arranger server,
satisfying Property \PrAvailability.

\begin{lemma}~\label{lem:one-correct} For every legal batch tag posted
  by the arranger, at least one correct server $A$ signed it and
  therefore $A$ can resolve its hash.
\end{lemma}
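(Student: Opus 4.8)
The plan is to trace through Algorithm~\ref{alg:arranger-setchain} and identify where a legal batch tag is created, then chase the chain of signatures back to a single correct server whose local state necessarily contains the inverse translation. First I would invoke Lemma~\ref{l:batch-epoch} to observe that the batch $b$ inside the posted legal batch tag $(\<batchId>, \hash(b), \sigma)$ is exactly $\<tobatch>(e)$ for the epoch $e$ with number $\<batchId>$ in the Setchain. Next, by the \PrCertified condition, $\sigma$ aggregates at least $f+1$ signatures on the tag $(\<batchId>,h)$; since at most $f$ servers are Byzantine, at least one of these signatures belongs to a correct server $A$. Correct servers only produce such a signature in the \<newepoch>-triggered step, where they compute $h = \hash(\<tobatch>(E))$ for the epoch content $E$ they received, so $A$ signed $h$ only after updating its local map via $\<hashes> \leftarrow \<hashes> \cup \{(\<id>,h), b\}$ in the \<newepoch> handler (lines~\ref{alg:arranger-setchain-upon-newepoch}--\ref{alg:arranger-setchain-upon-newepoch-end}).

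Having pinned down $A$, I would then argue that $A$'s local \<hashes> map contains the entry $((\<batchId>,h), b)$: this follows because $A$ executed the \<newepoch> handler for epoch $e$ (that is what produced the signature), and by \textbf{Setchain-ConsistentGets} the content $E$ that $A$ saw for epoch number $\<batchId>$ is the same $e$ that underlies the posted tag, so $\<tobatch>(E) = b$ and $\hash(b) = h$. Therefore, when any client issues $\<translate>(\<batchId>,h)$ to $A$, the guard $(\<id>,\_) \in \<hashes>$ holds and $(\<id>,h) \in \<hashes>$ holds, so the \<translate> function (lines~\ref{alg:arranger-setchain-translate-begin}--\ref{alg:arranger-setchain-translate-end}) returns $\<hashes>(\<batchId>,h) = b$ rather than \<invalidId> or \<invalidHash>. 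This establishes that $A$ resolves the hash correctly, which is exactly the claim; Property~\PrAvailability then follows since a client can locate such a correct server by contacting $f+1$ servers or by verifying signatures.

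I do not expect a serious obstacle here --- the lemma is essentially a bookkeeping argument. The one subtlety worth stating carefully is the timing/liveness aspect: the \<newepoch> handler at $A$ and the signature dissemination must both have completed before the tag was posted, but this is automatic because the posting server waited for $f+1$ signatures (line after \<myturn>), and those signatures could only have been generated by servers that had already run their \<newepoch> handlers. A second minor point is the identification of $h$ with the Setchain epoch hash $h_s$, which the text has already stipulated by assuming $h_s = \hash(\<tobatch>(e))$; I would simply cite that convention rather than re-derive it. So the main work is just making the "at least one correct signer" counting step and the "correct signer stored the translation" step fully explicit.
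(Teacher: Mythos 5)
Your proposal is correct and follows essentially the same route as the paper's proof: count the $f+1$ signatures to extract a correct signer $A$, use Lemma~\ref{l:batch-epoch} to tie the tag to a Setchain epoch, observe that $A$ populated its local $\<hashes>$ map in the $\<newepoch>$ handler before signing, and conclude that $\<translate>$ on $A$ returns the batch. The extra details you supply (the appeal to \textbf{Setchain-ConsistentGets} and the timing remark about the $f+1$-signature wait) are consistent refinements of the paper's argument rather than a different approach.
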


\begin{proof}
  Since legal batch tags have $f+1$ signatures, at least one signature
  comes from a correct server.
  By Lemma~\ref{l:batch-epoch}, all legal batch tags correspond to
  Setchain epochs.
  When Setchain computes an epoch, it triggers an event
  \(\<newepoch>\) signaling the arranger to add the hash to its local
  map $\<hashes>$.
  %
  %
  Upon request, function $\<translate>$ uses $\<hashes>$ to obtain the
  batch that corresponds to a given hash.
  Correct servers never delete information from $\<hashes>$, so for
  every legal batch tag posted at least one of the correct servers
  that signed the batch tag can resolve its hash.
\end{proof}

Lemma~\ref{lem:one-correct} guarantees that when posting legally
signed batch tags to the logger contract, at least one correct server
signed the batch, so the same correct server
is capable of performing the translation.
Eventually, \emph{all} correct servers will know the contents of every
epoch so they will be able to translate as well.

Through this section, we assumed $f < N/3$.
In the next section, (1) we explore incentives that motivate rational
servers to behave correctly, (2) we present a protocol based on
zero-knowledge contingent payments for clients to pay servers for
correctly translating hashes, and (3) we present fraud-proof
mechanisms to detect violations of the protocol.
Then, in Section~\ref{sec:threat-model} we study adversaries that
control the number $f$ Byzantine servers and can potentially bribe the
remaining honest servers and the consequences.


\section{Fraud-proof Mechanisms and Incentives}
\label{sec:incentives}

So far we assumed a bound number of Byzantine servers $f$ without
reasoning why would servers should be honest, in this section, we
incentive servers to be honest.
We provide incentives based on (1) evidence of work with payments and
(2) evidence of malicious behavior with penalties.
We present fraud-proof mechanisms to detect malicious behavior where a
single \emph{honest agent} with enough tokens to participate in the
challenge can uncover malicious behavior.
We perform a simple analysis based on the assumption that rational
agents tend to engage in actions that directly enhance their profit
and refrain from activities that incur in losses.
A more detailed quantitative study including expected utility is left
as future work.

We assume participating processes have public L2 and L1 accounts where
they can receive their rewards and---depending on the
protocol---provide stakes that can be removed as a penalty.

\subsection{Batch Staking and Confirmation}
\label{sec:incentives:stakes}

Malicious arrangers can post illegal batches in L1 as
optimistically no check is performed by the logger contract.
To discourage this behavior, arranger processes stake on the batches
they post.
When an STF posts an L2 block, it places a stake in the L2 block and
implicitly on the batch from which the transactions are extracted to
compute the block.
Processes regain their stake when signed batch tags are
\emph{confirmed}, that is, when the corresponding L2 block is
confirmed, after a confirmation delay as in Optimistic Rollups and
Optimiums.
When a signed batch tag is confirmed, all other batch tags with the
same identifier are discarded and stakes are returned as well.
Before confirmation, signed batch tags can be challenged, and staking
processes that fail to defend their correctness lose their stake.
If a signed batch tag has no stake, it is discarded, along with all L2
blocks executing its transactions.

In Section~\ref{sec:incentives:challenges} below, we present
fraud-proof mechanisms to unilaterally prove that signed batch tags
are illegal and force staking processes to reveal the transactions
within a given batch.
A challenger that plays properly the challenging protocols always wins
against a defender posting an illegal batch tag, even if the
challenger does not know the batch content.

\subsection{Incentives to Generate and Post Batches}
\label{sec:incentives:generate}
The confirmation of batches generates rewards as follows.
For each confirmed batch posted:
\begin{compactitem}
\item Each participating arranger process receives a constant amount
  of $k_1$ L2 tokens.
\item Each process that signs a batch receives $k_2$ L2
  tokens determined by a monotone function $g$ on the number of
  processes signing $\sigma$ (the more signatures, the higher the
  reward for all signers) and of transactions in batch $b$.\footnote{
    Function $g$ is monotone if for all $x,x',y,y'$, if $(x \leq x'$ and
    $y < y')$ or $(x < x'$ and $y\leq y')$ then $g(x,y) < g(x',y')$.}
  This function is fixed and known a priori.
  Processes signing batches receive $k_1+k_2$ L2 tokens for each
  confirmed batch.
\item The process posting the batch in L1 receives an additional
  payment $k_3$ L2 tokens to cover the cost of posting transactions in
  L1.
  This process receives $k_1+k_2+k_3$ L2 tokens.
\end{compactitem}
This mechanism creates incentives to (1) participate in the
decentralized protocol maintaining the arranger, (2) sign batches,
communicate and collect signatures, and (3) include as many
transactions as possible in a batch to be posted in L1.

All payments are charged to the external users posting transactions
and made effective by the STFs when computing the effects of a batch,
including these effects as part of a new L2 block.
Additional L2 tokens can be minted by the L1 contract governing the L2
blockchain.

\subsection{Incentive to Translate Hashes into Batches}
\label{sec:incentives:translate}

%
We present now an additional protocol for servers to translate hashes
and get paid using zero-knowledge contingent
payments~\cite{campanelli2017zkcontingent,fuchsbauer2019WIIsNotEnough,nguyen2020WIIsAlmostEnough}.
When a client \Ctt requests the translation of a hash $h$ contacting
directly server \Stt, if \Stt knows the inverse translation $b$, \Stt
does not respond with $b$.
Instead, \Stt encrypts $b$ using a secret key $k$, such that
$w = \texttt{Enc}_k(b)$ and computes $y$ such that
$\texttt{SHA256}(k) = y$.
Then, \Stt sends $w$ and $y$ to \Ctt along with a zero-knowledge proof
that $w$ is an encryption of $b$ under key $k$ and that
$\texttt{SHA256}(k) = y$.
Client \Ctt verifies the proof and, if it is correct, \Ctt generates a
contingent payment where \Stt is the only beneficiary.
%
%
Server \Stt can only collect the payment by revealing the secret $k$.
When \Stt reveals $k$, \Ctt learns $k$ (so it can decipher $w$ to
obtain $b$) and \Stt receives the payment.
Appendix~\ref{app:htlc} shows the pseudo-code implementing a payment
system based on hashed-timelock contracts.
All communication between \Stt and \Ctt is offchain and one-to-one,
and thus, batch \(b\) is never posted in L1.

Byzantine servers can participate in the first part of the protocol
and then refuse to reveal $k$ causing client \Ctt to incur in
unnecessary costs.
To prevent this, servers must also sign $y$ and send the signature to
\Ctt, so \Ctt has evidence identifying the specific server from which
\Stt obtained $y$.
If the server fails to disclose $k$ after some L1 blocks has passed,
\Ctt can use this evidence to accuse server \Stt, have server \Stt
stake removed and receive some compensation.

\subsection{Fraud-proof Mechanisms to Prevent Illegal
  Batches}\label{sec:incentives:challenges}
We now present fraud-proof mechanisms to deter servers from posting
illegal batch tags.
These mechanisms are similar to the one used in Optimistic Rollups to
arbitrate L2 blocks.
The main difference is that here the arbitration is over concrete
algorithms and not over the execution of arbitrary smart contracts
translated to WASM~\cite{wasm}.

A signed batch tag $t:(\<batchId>,h,\sigma)$ for a batch $b$ is legal
if:
\begin{compactenum}
\item[B1.] The combined signature $\sigma$ contains at least $f+1$
  arranger process signatures.
\item[B2.] Batch $b$ only contains valid transactions.
\item[B3.] No transaction in $b$ is duplicated.
\item[B4.] No transaction in $b$ appears in a previously consolidated
  batch.
\end{compactenum}

\subsubsection{Data Availability Challenge.}
To check a signed batch tag \(t\) is \emph{legal}, processes need to
know the transactions in the batch, but only the batch hash $h$ is
posted as part of $t$.
As a safeguard, we introduce first a \textbf{data availability
  challenge}, which enables any agent \Att to challenge any staking
server \Stt to publish the batch transactions in L1.
The data availability challenge is implemented as a game governed by
an L1 smart contract (see Alg.~\ref{alg:dataChallenge} in
Appendix~\ref{app:challenges-alg} for details).
Both player take turns with a predetermined time for each move.
The game is played over the Merkle tree generated by batch \(b\).

During its turn, \Att can challenge a specific node $n$ in the Merkle
tree.
The node hash, \(h_n\), is available in the L1 contract but \Att does
not know the corresponding batch of transactions.
Initially, only the root hash is known, as it is included in the batch
tag.
When posting a challenge, \Att puts a stake large enough to cover part
of \Stt potential answering expenses.

Server \Stt has two options for responding to a challenge on a node.
If $n$ is a leaf, \Stt posts the transaction \(e\) that \(n\)
represents.
The L1 contract verifies that \(e\) hashes to \(h_n\).
Otherwise, \Stt posts the hashes, \(h_l\) and \(h_r\), of the
challenged node left and right children, along with their
corresponding compressed batches \(b_l\) and \(b_r\).
In this case, the L1 contract simply verifies that \(h_n\) is obtained
by hashing the concatenation of \(h_l\) and \(h_r\), ensuring that
both hashes effectively belong to the Merkle tree.
The decompression of the both batches is done offchain by \Att.
If decompressing \(b_l\) results in a batch hashing to \(h_l\), and
\(b_r\) results in a batch hashing to \(h_r\), \Att just learned all
transactions hashing to \(h_{n}\).
Conversely, if a compressed batch does not match its hash, \Att can
challenge the corresponding node.

If \Stt does not respond in time or the provided information fails the
L1 contract verification, \Att wins the game and \Stt loses it
stake.
Otherwise, \Att will ultimately learn all transaction in \(b\).

\begin{repprop}{DAC} \label{prop:data-availability}
  For each unconfirmed signed batch tag posted in L1, any honest agent
  can force to either learn its transaction or get the batch tag
  removed.
\end{repprop}

%
%
%

In the worst case, every node in the Merkle tree is challenged
once.\footnote{Information is public in L1, so it makes no sense to
  challenge the same node twice.}
That is, the maximum number of turns in this game is linear in the
size of batch \(b\).

A crucial difference between this mechanism and the offchain protocol
described in Section~\ref{sec:incentives:translate} is that this
mechanism provides evidence if a staker does not translate a
hash leading to the removal of its stake as a punishment.
%
%
However, this mechanism requires posting the compressed batch onchain.
Posting onchain is more expensive for clients and there is no rewarding for
servers.
Therefore, under normal circumstances, only the offchain protocol is
used to translate hashes (see Section~\ref{sec:incentives:costs}).

\subsubsection{Fraud-proof Mechanisms to Challenge Batch Legality}
Once \Att obtains the data, it can verify locally the legality of the
batch tag.
The mechanisms are governed by L1 smart contracts that arbitrate the
challenges~(see Appendix~\ref{app:challenges-alg} for their
pseudo-codes).
%
If a tag is illegal, depending on which condition B1, B2, B3 or B4 is
violated, agent \Att can play one of the following challenging games:
%
\begin{compactitem}
\item \textbf{Signature Challenge} (violation of B1).
  We use one L1 contract to check whether $t$ has enough signatures.
  Another L1 contract computes the aggregated public key of the
  claimed signers and checks that the signature provided in $t$ is
  correct.\footnote{Alternatively, a fraud-proof mechanism, arbitrated
    by an L1 contract, can be implemented to arbitrate that the
    algorithm that checks multi-signatures accepts the signature
    provided.}
  If \Att invokes these contracts and it is revealed that the
  signature is incorrect, then the signed batch tag \(t\) is
  discarded, all staking processes lose their stake and \Att gains
  part of the stakes placed on \(t\).
\item \textbf{Validity Challenge} (violation of B2).
  Agent \Att knows there is an invalid transaction $e$ in batch \(t\).
  In this case, \Att challenges a staking server \Stt to play a game
  in L1 similar to data availability challenge.
  %
  %
  This time the game is played bottom-up over the path from the leaf
  where \Att claims that \(e\) is.
  In its first turn, \Att presents \(e\), and its hash \(h_e\), its
  position in the batch, and the hash of the node in the middle of
  the path from the leaf to the root, bisecting the path.
  An L1 smart contract (called the Validity Challenge contract) first
  verifies the validity of transaction \(e\) and checks that its hash
  is \(h_e\).
  This contract is also used to arbitrate a bisection game over the
  path.
  In the subsequent turns, players alternately perform the following
  operation: \Stt selects one subpath and challenges it.
  If the subpath is longer than two, then in its next turn \Att must
  provide the hash of the middle node in that subpath.
  Otherwise, the bisection is over and \Att must provide evidence that
  the subpath is correct.
  For a subpath that has only two nodes with hashes \(h_p\) and
  \(h_c\), where the node with hash \(h_p\) is the parent of the node
  with hash \(h_c\), \Att must post a hash \(h\).
  The Validity Challenge contract verifies that hashing the
  concatenation of \(h\) with \(h_c\) results in \(h_p\).
  If the hash provided is correct \Att wins the game.
  %
  %
  If \Att wins the game then \Stt lose its stake and \Att receives
  part of it.
  Otherwise, \Stt keeps its stake.
  It is important to note that the L1 smart contract verifies hashes
  only during \Att's first turn and when a subpath reaches two nodes.
  Since the path length is halved at each turn, the maximum number
  of turn in the game is logarithmic in the path length which itself
  is logarithmic in the batch size.

\item \textbf{Integrity Challenge 1} (violation of B3):
  Agent \Att knows a transaction \(e\) that appears twice in \(t\).
  In this case, \Att challenges a staking server \Stt to play a game
  in L1 (controlled by the Integrity Challenge 1 contract) similar to
  the validity challenge, but \Att provides two paths in \(t\) where
  \(e\) is a leaf.
  The smart contract does not perform any check in this step.
  For each path \Att presents the batch tag, the hash of \(e\), the
  position of \(e\) in the corresponding batch, and the hash of the
  middle node on the path from \(e\) to the root.
  After that, the game continues exactly as in the validity challenge.
\item \textbf{Integrity Challenge 2} (violation of B4):
  Agent \Att knows a transaction \(e\) in \(t\) that appears in a
  previously confirmed batch tag \(t'\).
  In this case, \Att challenges a staking server \Stt to play a game
  in L1 similar to integrity challenge 1, but in its first turn \Att
  provides not only paths in \(t\) where \(e\) is a leaf, but also
  \(t'\) identifier and the path in \(t'\) where \(e\) is a leaf.
  After that, the game continues exactly as in the integrity challenge
  1.
\end{compactitem}

%

%

In all previous challenges an honest agent \Att that knows the
transactions associated with an illegal batch tag can win the
corresponding challenge resulting in the batch tag being
discarded.

\begin{repprop}{correctness}\label{prop:correctness}
  Let \(t\) be an illegal batch tag that is not yet confirmed in L1,
  and \Att be an honest agent that knows all transactions in \(t\) and
  all transactions in already confirmed batches.
  Agent \Att can prevent the confirmation of \(t\).
\end{repprop}

By combining Propositions~\ref{prop:data-availability}
and~\ref{prop:correctness} illegal batch tags can be removed by a
single honest agent.
%

\begin{corollary}\label{cor:confirm-correct}
  Any honest agent can ensure that all confirmed batches are legal.
\end{corollary}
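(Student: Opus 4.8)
The plan is to prove the corollary by induction on the sequence of signed batch tags that get confirmed in L1, feeding Propositions~\ref{prop:data-availability} and~\ref{prop:correctness} into each inductive step, and exploiting that confirmation only takes effect after a fixed challenge delay. So I would fix an honest agent \Att that monitors the logger contract and maintains a local record of the transactions of every batch tag confirmed so far. The base case is vacuous. For the inductive step, consider the next signed batch tag $t:(\<batchId>,h,\sigma)$ about to be confirmed; by the induction hypothesis every earlier confirmed batch tag is legal and \Att already knows all of their transactions.

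First I would invoke Proposition~\ref{prop:data-availability}: while $t$ is still unconfirmed, \Att either forces the batch transactions to be published in L1---so \Att now knows every transaction in $t$---or gets $t$ removed, in which case there is nothing left to show for this tag. In the former case \Att checks locally whether $t$ satisfies B1--B4; this check is effective precisely because \Att knows all transactions of $t$ and, by the induction hypothesis, all transactions of previously confirmed batches (needed for B4). If $t$ is legal we are done; if $t$ is illegal, \Att applies Proposition~\ref{prop:correctness} to prevent its confirmation. To close the induction I would also note that discarding an illegal $t$ does not stall the chain: by \PrTermination (Lemma~\ref{l:termination}) a legal batch tag with identifier \<batchId> is eventually posted, and by \PrUniqueBatch every legal tag with that identifier carries the same batch, so \Att's bookkeeping stays consistent no matter which legal tag consolidates, and the invariant ``all confirmed batches are legal'' is preserved.

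I would make the timing argument explicit as part of the setup: the confirmation delay must be long enough to contain one data-availability challenge followed by one legality challenge, and both games terminate after a bounded number of moves---at most linear in the batch size for data availability, at most logarithmic for the B1--B4 challenges---each with a predetermined per-move timeout, so a sufficiently long but still finite delay suffices.

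The hard part will not be the logical skeleton, which is a short induction, but pinning down the interaction between challenge durations, the confirmation window, and concurrently posted batch tags sharing one identifier: one must argue that a single honest agent can, within one window, run the data-availability challenge and then the appropriate legality challenge to completion, and that removing one illegal tag never blocks or removes the legal tag that ought to consolidate. Handling several competing tags for the same identifier---some legal, some not---reduces to observing that \Att only needs to challenge the finitely many illegal ones while \PrUniqueBatch and \PrTermination guarantee a legal one survives; I would also flag the implicit economic assumption that \Att holds enough tokens to post the stakes required to open each challenge.
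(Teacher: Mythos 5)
Your proposal is correct and follows the same route as the paper, which derives the corollary directly by combining Proposition~\ref{prop:data-availability} (learn the transactions or remove the tag) with Proposition~\ref{prop:correctness} (prevent confirmation of an illegal tag once its transactions are known). The induction you add over previously confirmed batches, and the remarks on timing and stake budget, merely make explicit the preconditions the paper leaves implicit (knowledge of prior confirmed batches for B4, and the budget bound of Proposition~\ref{prop:cost}), so they refine rather than change the argument.
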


An honest agent that knows the transactions in a legal batch tag can
guarantee its confirmation if no other valid batch tag with the same
identifier was posted in L1.
This confirmation process involves computing the transactions
effects, posting them to L1, and successfully responding to any
challenge.

\begin{repprop}{correctness2}\label{prop:correctness2}
  Let \(t\) be a legal batch tag still unconfirmed in L1 for which
  there is no previous legal batch tag posted in L1 with the same
  identifier.
  An honest agent, \Att, that knows all transactions in \(t\) and all
  transactions in already confirmed batches, can guarantee its
  confirmation.
\end{repprop}

%
\begin{proof}(sketch)
Since \Att knows all transactions in
\(t\), \Att can compute their effect and post it as an L2 block.
If the L2 block is challenged, \Att can win the challenge by correctly
playing the fraud proof mechanism of the L2 blockchain.
We need to prove that \Att can keep its stake in \(t\) until the
corresponding L2 block is confirmed, that is, that \Att can win each
challenge.
For the \textbf{data challenge} \Att computes the Merkle tree \(mt\)
associated with the transactions in \(t\) and uses $mt$ to answer all
challenges correctly.
Since \(t\) is correctly signed by at least \(f+1\) arranger processes
the L1 smart contract directly passes the \textbf{signature challenge}.
The remaining challenges are the \textbf{validity} and
\textbf{integrity} challenges.
In those cases, the challenger tries to prove that an element that
does not belong to a given Merkle tree \(mt\) is actually in \(mt\).
To win the challenge, \Att maintains the invariant that after its turn
in the challenged subpath the top node belongs to \(mt\) while the
bottom node does not.
   When the challenged subpath has two nodes with hashes \(h_p\) and
   \(h_c\), where the node with hash \(h_p\) is the parent of the node
   with hash \(h_c\), the challenger needs to post a hash \(h\) such
   that the concatenation of \(h\) with \(h_c\) hashes to \(h_p\).
   However, the challenger cannot provide such an \(h\) as, by the
   invariant, \(h_p\) belongs to \(mt\) but \(h_c\) does not.  
 \end{proof}   
 
Combining the previous proposition with Properties \PrUniqueBatch and
\PrTermination of correct arrangers, the following holds:

\begin{corollary}\label{cor:confirm-correct2}
  All valid transactions added to correct arrangers are eventually
  executed.
\end{corollary}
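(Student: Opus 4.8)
The plan is to chain the liveness guarantee \PrTermination of correct arrangers with the confirmation guarantee of Proposition~\ref{prop:correctness2}, using \PrUniqueBatch to pin down which batch actually gets confirmed. Fix a valid transaction $\tau$ added to a correct arranger. By \PrTermination, $\tau$ eventually appears in some legal batch tag; let $k$ be the identifier of that tag. It then suffices to show that \emph{some} legal batch tag with identifier $k$ is eventually confirmed: by \PrUniqueBatch every legal batch tag with identifier $k$ carries exactly the same batch of transactions — the one containing $\tau$ — and confirming a batch tag means confirming an L2 block that computes the effects of the transactions of that batch, hence executing $\tau$.

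To show identifier $k$ is eventually confirmed, I would argue by induction on batch identifiers that for every identifier $j$ some legal batch tag with identifier $j$ is eventually confirmed. By Lemma~\ref{l:termination} every Setchain epoch, hence every identifier $j\le k$, eventually admits a legal batch tag, so the hypothesis is non-vacuous and the induction can be bootstrapped from identifier $0$. Assume it holds for all $j<k$, so the confirmation pipeline reaches identifier $k$. Consider the earliest-posted legal batch tag $t$ with identifier $k$ (it exists by the first paragraph). Every batch tag with identifier $k$ posted strictly before $t$ is illegal — Byzantine servers cannot fabricate legal tags, since forging one requires $f+1$ signatures — and by Proposition~\ref{prop:correctness} (equivalently Corollary~\ref{cor:confirm-correct}) an honest agent, who having observed the batch knows its transactions and the transactions of all already-confirmed batches, can prevent each of these from being confirmed. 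Thus $t$ is a legal batch tag still unconfirmed with no previous legal batch tag of the same identifier, so Proposition~\ref{prop:correctness2} applies: the honest agent computes the effects of $t$'s transactions, posts the L2 block, and successfully answers every data-availability, signature, validity, and integrity challenge, guaranteeing confirmation of $t$. Applying this with $j=k$ closes the argument.

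The main obstacle is the interaction between Byzantine behaviour at identifier $k$ and the liveness of the confirmation pipeline: a Byzantine server may flood identifier $k$ with many illegal tags, or re-post legal tags signed by different $(f+1)$-subsets of servers, and we must ensure this cannot starve the honest agent. This is handled because the fraud-proof mechanisms of Section~\ref{sec:incentives:challenges} let a single honest agent dispose of each illegal tag, \PrUniqueBatch makes re-posted legal variants interchangeable for our purposes, and the STF processes only the first valid signed batch tag per identifier — so only finitely many spurious tags can precede the confirmed one. Combined with the standing assumption that an honest, sufficiently-staked agent participates in the challenges, this yields that identifier $k$, and therefore $\tau$, is eventually executed.
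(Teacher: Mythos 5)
Your proof is correct and follows the same route the paper takes: the paper derives this corollary in one line by combining Proposition~\ref{prop:correctness2} with \PrUniqueBatch and \PrTermination, which is exactly the chain you build (\PrTermination puts the transaction in a legal batch tag, \PrUniqueBatch fixes its batch for that identifier, and Proposition~\ref{prop:correctness2} — with illegal predecessors cleared via Corollary~\ref{cor:confirm-correct} — forces confirmation and hence execution). Your write-up simply makes explicit the bookkeeping (induction over identifiers, earliest legal tag, disposal of spurious tags) that the paper leaves implicit.
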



Propositions~\ref{prop:data-availability}, \ref{prop:correctness}
and~\ref{prop:correctness2}, and Corollary~\ref{cor:confirm-correct}
hold even when the number of Byzantine servers in the arranger is not
bounded by $f$.
In other words, it does not matter the amount of Byzantine nodes, \textbf{an
honest agent can guarantee the properties mentioned above}.



%
Current rollups do not provide any mechanism to guarantee that batches
posted by their arranger are correct, because their arranger is a
centralized sequencer which is \emph{assumed} to be correct.
The centralized sequencer in Optimistic Rollups can post data that
does not correspond to a compressed batch of valid transaction
requests.
Similarly, the sequencer of Optimiums can post a hash that is
not signed by enough members of the DAC or whose inverse does not
represent a batch of new valid transactions. 
The implications of these actions are significant, as malicious STFs
could post illegal new L2 states that cannot be disputed by other STF
processes as they do not know the transactions to play the challenging
game.
Hence, a malicious sequencer in current rollups can block the L2
blockchain or, even worse, create indisputable incorrect blocks.
This is not possible in our solution.

\subsubsection{Challenge Mechanisms Cost Analysis}
\label{sec:incentives:cost-analysis}
\label{sec:incentives:costs}
We perform a simple analysis between costs and rewards for each challenge
mechanism and the translation protocol to determine the minimum budget required by
honest agents.
We start by defining the following variables:
\begin{compactitem}
\item variable $s$ is the minimum amount of tokens needed to stake in a batch tag.
\item variable $\CC{translate}$ is the client cost for translating a hash using the
  offchain protocol.
\item variable $\SC{translate}$ is the server cost for translating a hash using the
  offchain protocol.
\item variable $\SR{translate}$ is the server reward for translating a hash using the
  offchain protocol.
\item variable $\CC{x}$ is the client cost for playing challenge of type $x$.
\item variable $\CR{x}$ is the client reward for winning the challenge of type $x$.
\item variable $\SC{x}$ is the staker cost for playing the challenge of type $x$.
\end{compactitem}
Where $x$ can be \KWD{data}, \KWD{signature}, \KWD{validity} or
\KWD{integrity}.

Stakers do not receive any reward for winning challenges, as their
motivation for playing is to not lose their stake and eventually
collect their reward from consolidated blocks.
However, they do receive a reward for translating a hash using the
offchain protocol as there is no stake at risk.

The following relations must hold:
\begin{compactitem}
\item The cost for clients to play any challenge $x$ is significantly
  less than the reward for winning it, $\CC{x} \ll \CR{x}$.
  The cost for servers to translate hashes using the offchain
  protocol is significantly less than its reward
  $\SC{translate} \ll \SR{x}$.
  This motivates rational agents to participate to gain a
  profit with a large margin.
\item For any challenge $x$, clients cover the costs of the
  challenged staker, $\CC{x} > \SC{x}$.
  In particular, no staker plays the signature challenge, therefore
  $\SC{signature} = 0$.
\item For the data, validity and integrity challenges, the client
  reward for winning the challenge is less than the challenged agent
  stake, $\CR{x} < s$.
  For the signature challenge, the client reward is bounded by the sum
  of all stakes in the batch, $\CR{signature} < \sum s$.
  For the offchain translation protocol, clients cover the
  costs of servers reward, $\CC{translate} > \SR{translate}$
  This ensures rewards can be covered by the stake taken from the
  losing player.
\item The client cost for using the offchain translation protocol is
  significant less than playing the data challenge, $\CC{translate} \ll
  \CC{data}$.
  This motivates clients to use the offchain protocol when translating
  hashes.
\end{compactitem}

We now determine the minimum budget a single honest agent needs to
guarantee that an illegal batch tag does not get confirmed.
\begin{repprop}{cost}\label{prop:cost}
  An agent \Att with
  \(max(\CC{signature}, \CC{data} +
  max(\CC{validity},\CC{integrity}))\) tokens knowing the
  transactions in all confirmed batches can discard
  illegal signed batch tags. 
\end{repprop}

Finally, rational agents are likely to use the offchain protocol to
translate hashes, because this protocol is cheaper for clients and
more rewarding for servers than data challenging.

\begin{proposition}
  Let \Att be a rational agent seeking to translate a hash \(h\), and
  \Stt be a rational arranger process that knows the batch hashing to
  $h$.
%
  The translation of \(h\) will occur through the offchain protocol.
\end{proposition}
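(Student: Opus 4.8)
The plan is to argue by comparing the net utilities of the two available ways for \Att to obtain the translation of \(h\): the offchain zero-knowledge contingent payment protocol of Section~\ref{sec:incentives:translate}, and the onchain data availability challenge of Section~\ref{sec:incentives:challenges}. I would first observe that, since \Stt is rational and knows the batch \(b\) hashing to \(h\), \Stt is willing to participate in the offchain protocol: by the relation \(\SC{translate} \ll \SR{translate}\) (the translate reward exceeds the translate cost with a large margin, per Section~\ref{sec:incentives:costs}), honestly completing the protocol yields \Stt a strictly positive profit \(\SR{translate}-\SC{translate}>0\), whereas aborting after \Att has already verified the proof (i) earns \Stt nothing and (ii) exposes \Stt to losing its stake, because \Att holds \Stt's signature on \(y\) and can use it to accuse \Stt and have the stake removed. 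Hence a rational \Stt both offers the proof and ultimately reveals \(k\).

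Next I would examine \Att's choice. Through the offchain protocol \Att pays \(\CC{translate}\) and learns \(b\); through the data challenge \Att pays \(\CC{data}\) (and receives nothing, in the normal case where \Stt answers correctly and \Att merely learns the transactions, getting no stake). By the relation \(\CC{translate} \ll \CC{data}\), the offchain route is strictly cheaper for \Att. Since \Att is rational and both routes deliver the same outcome (knowledge of \(b\)), \Att strictly prefers the offchain protocol. Combining the two observations: \Att proposes the offchain protocol, \Stt accepts it and reveals \(k\), and the translation of \(h\) occurs offchain. I would also note explicitly that the data challenge remains available as a fallback: if \Stt were to deviate, \Att could still force disclosure via Proposition~\ref{prop:data-availability}, so there is no loss of the guarantee — but on the rational equilibrium path this fallback is never triggered.

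The main obstacle I anticipate is making precise the claim that \Stt "gains nothing and risks its stake" by aborting — this requires spelling out that the accusation mechanism sketched at the end of Section~\ref{sec:incentives:translate} indeed succeeds whenever \Stt has signed \(y\) and failed to disclose \(k\) within the allotted L1 blocks, i.e. that the evidence \Att holds is sufficient for the L1 contract to slash \Stt. A secondary subtlety is the case where \Stt does \emph{not} know the inverse of \(h\): then neither party can do anything, but this is outside the hypothesis of the proposition, which assumes \Stt knows the batch. Modulo these points, the argument is a short dominance comparison using only the cost/reward inequalities already postulated.
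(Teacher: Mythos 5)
Your proposal is correct and follows essentially the same route as the paper: the paper's entire justification is the sentence immediately preceding the proposition (``cheaper for clients and more rewarding for servers than data challenging''), i.e., exactly the dominance comparison you make, using $\CC{translate} \ll \CC{data}$ on the client side and $\SC{translate} \ll \SR{translate}$ together with the absence of any reward for answering a data challenge on the server side. Your extra care about why \Stt actually reveals $k$ --- the contingent payment is only collectable by revealing $k$, and the signature on $y$ makes aborting slashable --- goes slightly beyond what the paper writes down (no proof of this particular proposition in fact appears in the appendix, despite the blanket claim that all proofs are there), but it is clearly the intended argument.
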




The proof of all the propositions can be found in
Appendix~\ref{app:ProofIncentives}.


\section{Threat Models}
\label{sec:threat-model}

%
In this section, we describe two different adversaries controlling
more than $f$ arranger servers, violating the correctness assumptions
in Section~\ref{sec:seq-decentralized}, and study the impact on our L2
blockchain.

\subparagraph{Adversary \#1: Limited Power.}
%
Lets consider an adversary controlling between \(f+1\) and \(N-f-1\) arranger
processes without compromising the Setchain.
In other words, there are at least \(f+1\) correct arranger processes which
agree on the Setchain epochs.

If adversary arranger processes post illegal batch tags in L1, an honest agent
can detect these tags and prevent their consolidation (see
Corollary~\ref{cor:confirm-correct}).
If they post a legal batch tag not following the batch
reached by Setchain consensus, a correct arranger process can post
an incompatible legal batch tag with the same identifier violating the
Property~\PrUniqueBatch.
This collective arranger behavior is incorrect and detected upon
inspection of the L1 blockchain, i.e. two incompatible signed batch
tags are posted to L1.
In this case, we need a mechanism to replace arranger processes, but
it is out of the scope of this paper.

\subparagraph{Adversary \#2: Fully Powerful.}
We consider now a more powerful adversary controlling between
\(f+1\) and \(N\) arranger proccesses including compromising the
Setchain.
In this case, even if there are correct arranger processes, they may
not agree on Setchain epochs.

Even if the adversary controls all processes, an honest agent can
prevent the execution of invalid transactions or the execution of a
transaction twice (corollary~\ref{cor:confirm-correct}).

However, the adversary decides which transactions appear in legal
batch tags and detecting that the arranger is controlled by an
adversary is impossible.
This gives the adversary the power to censor transactions, but
to provide censorship resistance, users can post their transactions
directly in L1 although at a higher cost.


\section{Empirical Evaluation}
\label{sec:empirical}

We provide now an empirical evaluation to asses whether the
decentralized arranger from Section~\ref{sec:seq-decentralized} can
scale to handle the current demand of L2.

%
Setchain has been reported to be three orders of magnitude faster than
consensus~\cite{capretto22setchain}.
While Ethereum offers a throughput of approximately \EthTPS{}
Transactions Per Second~(TPS), Setchain can handle $12,000$~TPS.
In comparison,
the cumulative
throughput of all Ethereum L2 systems amounts to \LdosTPS{} TPS.\footnote{Data obtained on May 2024 from l2beat
  \url{https://l2beat.com/scaling/activity}.}
Although Ethereum is operating close to its maximum capacity, the L2
throughput mentioned above represents their \emph{current demand}.\footnote{The available
  documentation of the L2 systems does not present
  theoretical or empirical limitations.}
%
%
Our approach extends Setchain, and thus, we focus on assessing the
efficiency of each new component forming the decentralized arranger.
We aim to evaluate empirically that these components impose a
negligible computational overhead on top of the intrinsic
set-consensus algorithm employed in Setchains.
The empirical evaluation reported in~\cite{capretto22setchain} does
not involve compressing, hashing or signing setchain epochs, nor
verification or aggregation of signatures.

\subsection{Hypothesis and Experiments}
All experiments were carried out on a Linux machine with $256$~GB of
RAM and $72$~virtual 3GHz-cores (Xeon Gold~6154) running
Ubuntu~20.04.
We employed $40,000$ existing transactions from Arbitrum One history
as our data set.

%
\newcommand{\HSize}{H.Size}
\newcommand{\HHash}{H.Hash}
\newcommand{\HCompress}{H.Compress}
\newcommand{\HSign}{H.Sign}
\newcommand{\HAgg}{H.Agg}
\newcommand{\HVer}{H.Ver}
\newcommand{\HTrans}{H.Trans}
%
Our hypothesis is that setchain performance is not affected by:
\begin{compactitem}
\item\textbf{(\HHash)} hashing batches of transactions,
\item \textbf{(\HCompress)} compressing batches of transactions,
\item \textbf{(\HSign)} signing hashes of batches,
\item\textbf{(\HAgg)} aggregating signatures,
\item\textbf{(\HVer)} verifying signatures,
\item\textbf{(\HTrans)} translating hashes into batches.
  %
\end{compactitem}
Additionally, signed batch tags size is significantly smaller than
their corresponding compressed batch (hypothesis \textbf{\HSize}).

\subparagraph{Hypothesis \HSize.}
We measured the size of both compressed batches and batch tags ranging
from $400$ to $4,400$ transactions per batch.\footnote{As of August
  2023, Arbitrum Nitro batches typically contain around $800$
  transactions, while in Arbitrum AnyTrust, batches contain
  approximately $4,000$ transactions.}
For each batch size, we shuffled transactions from our data set,
split them into 10 batches of equal size, and assigned them
unique identifiers.
%
We compressed each batch using the Brotli compression
algorithm~\cite{Alakuijala18brotli}, currently used in Arbitrum
Nitro~\cite{ArbitrumNitro}.
Additionally, for each batch, we computed its Merkle
Tree~\cite{Merkle88} with their transactions as leaves.
To create batch tags, we signed each batch Merkle Tree root along with
an identifier using a BLS Signature scheme~\cite{Boneh2001Short}
implemented in~\cite{ArbitrumNitroGithub}.
Our experiments show there is a ratio of \emph{more than two orders}
of magnitude between compressed batches and batch tags with the same
number of transactions.
Fig.~\ref{fig:size}(a) shows the average size of compressed batches of
400 transactions is approximately 80,000~Bytes, whereas the
average size for 4,400 transactions is around 780,000~Bytes.
In contrast, the average size of signed batch tags remained around
$480$~Bytes independently of the number of transactions within the
batch.
Hypothesis \textbf{\HSize{}} holds, hashes are
significantly smaller than compressed batches.


\begin{figure}[t!]
  \centering
  \begin{tabular}{cc}
  \includegraphics[width=0.475\textwidth]{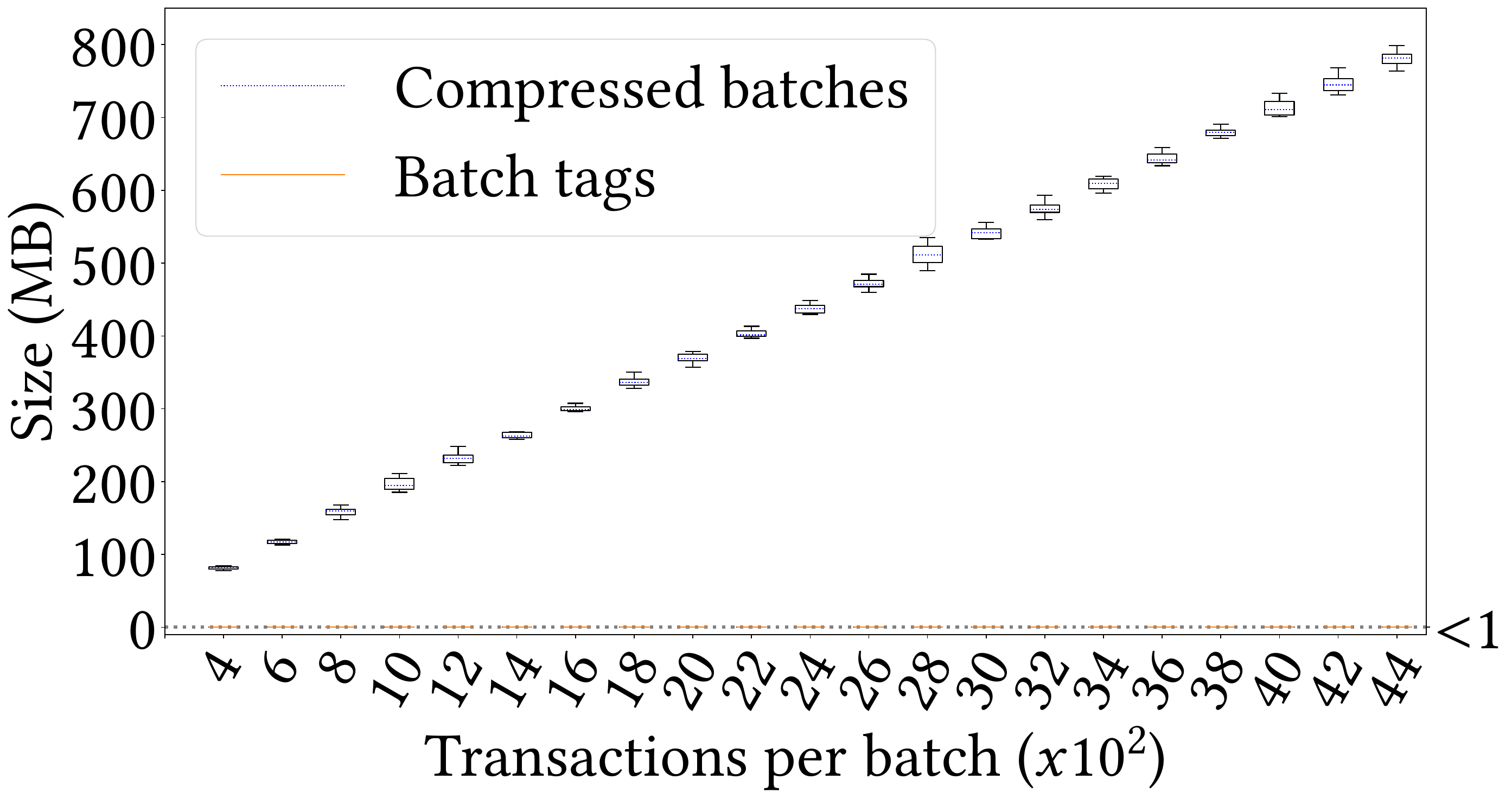} &
   \includegraphics[width=0.475\textwidth]{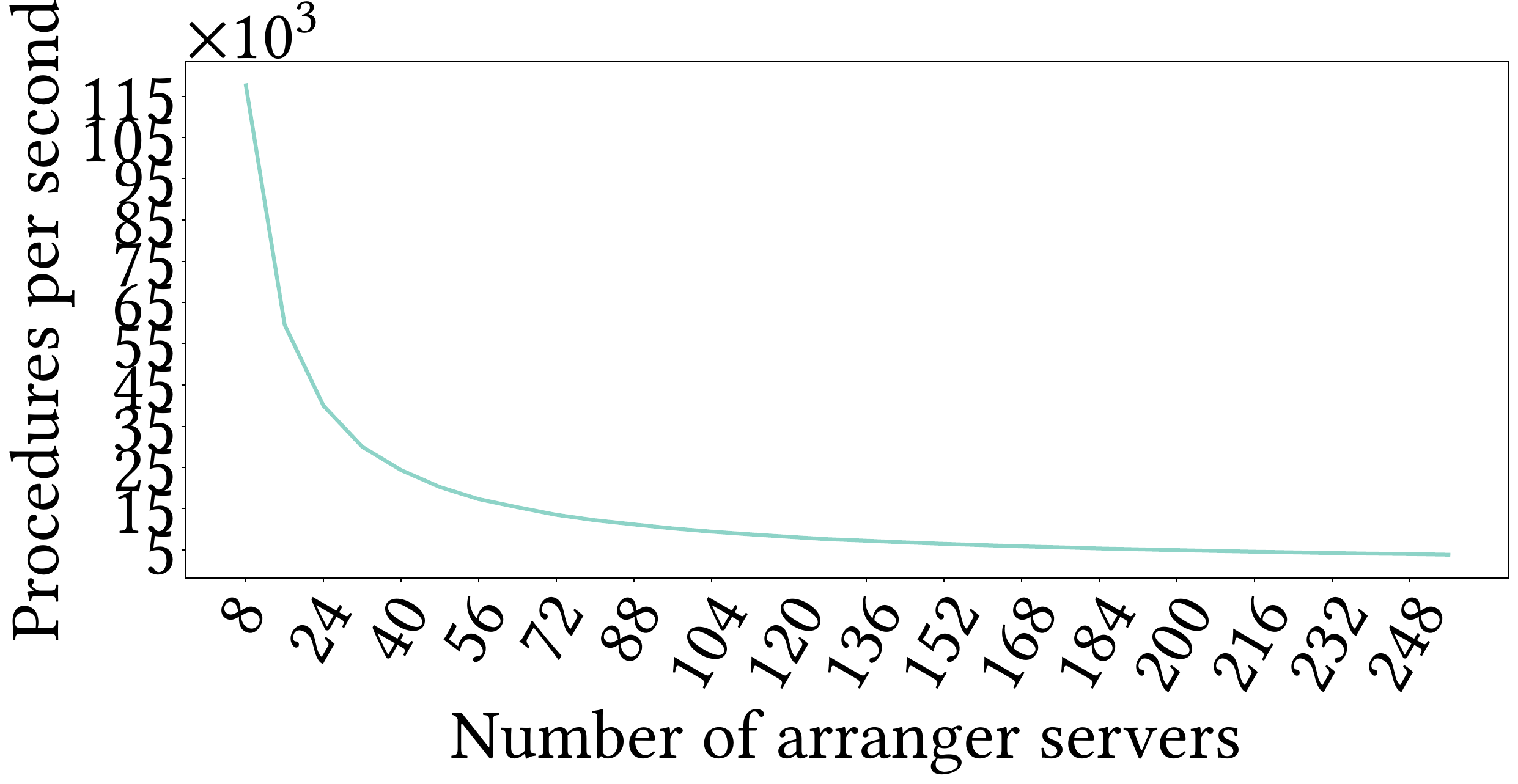} \\
    (a)  & (b) \\ 
  \end{tabular}  
  \caption{(a) Average size of compressed batches and batch tags for varying
  numbers of transactions per batch.
    (b) Signature aggregations per second, in terms of the number
    of signature aggregated per procedure.
  }
  \label{fig:size}
  \label{fig:aggregation}
\end{figure}


For the rest of the experiments, we loaded all the required
information into local memory and applied each procedure for a
duration of $1$~second, cycling over the input as much as necessary.
We ran each experiment $10$ times and report the average performance.
Fig.~\ref{fig:tps-hash-compress-translate}(a), Fig.~\ref{fig:aggregation}(b)
and Fig.~\ref{fig:ver-parallel}(b) summarize the results.

\subparagraph{Hypotheses \HHash{} and \HCompress{}.}
To test compressing and hashing procedures, we use the full data set
of $40,000$~transactions.
We employed Brotli~\cite{Alakuijala18brotli} as compression algorithm
and the root of Merkle trees of batches as hash function.
We again ranged from $400$ to $4,400$ transactions per batch.

Fig.~\ref{fig:tps-hash-compress-translate} shows hashing can
scale from $260,000$ to $275,000$~TPS, depending on the batch size,
while the compression procedure achieves a throughput ranging from
$75,000$~TPS for batches of $400$ transactions to $50,000$~TPS for
batches of $4,400$~transactions.
Both hashing and compression procedures significantly outperform the
Setchain throughput of $12,000$~TPS, and thus, these procedures do not
impact the overall throughput.
Therefore, hypotheses \textbf{\HHash{}} and hypothesis
\textbf{H.Compress} hold.

\subparagraph{Hypothesis \HSign{}.}
To evaluate the signing procedure performance, we used a file
containing $50$ pairs batch hash-identifier.
We assess the performance of signing each pair using the provided
secret key with the BLS signature scheme implemented
in~\cite{ArbitrumNitroGithub}.

Our results show our system can sign approximately $2300$
hash-identifier pairs per second.
Considering a throughput of $12,000$~TPS and batches of at least
$400$ transactions, no more than $30$ batches per second will be
generated.
Thus, each arranger process can sign more than $75$ times the number
of batches required, empirically supporting
hypothesis~{\textbf{\HSign}}.

\begin{figure}[t!]
  \centering
  \begin{tabular}{cc}
   \includegraphics[width=0.475\textwidth]{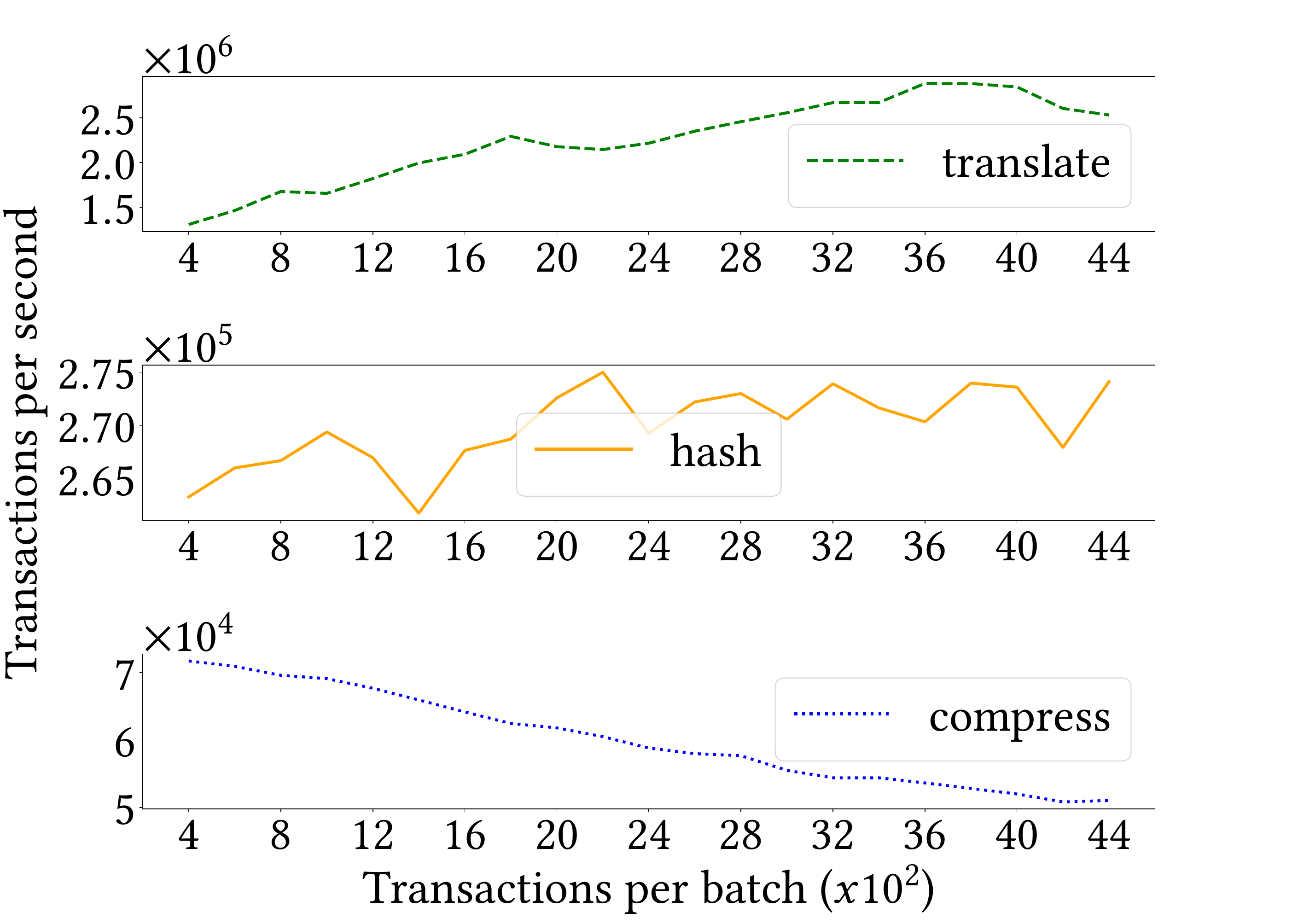} &
    \includegraphics[width=0.475\textwidth]{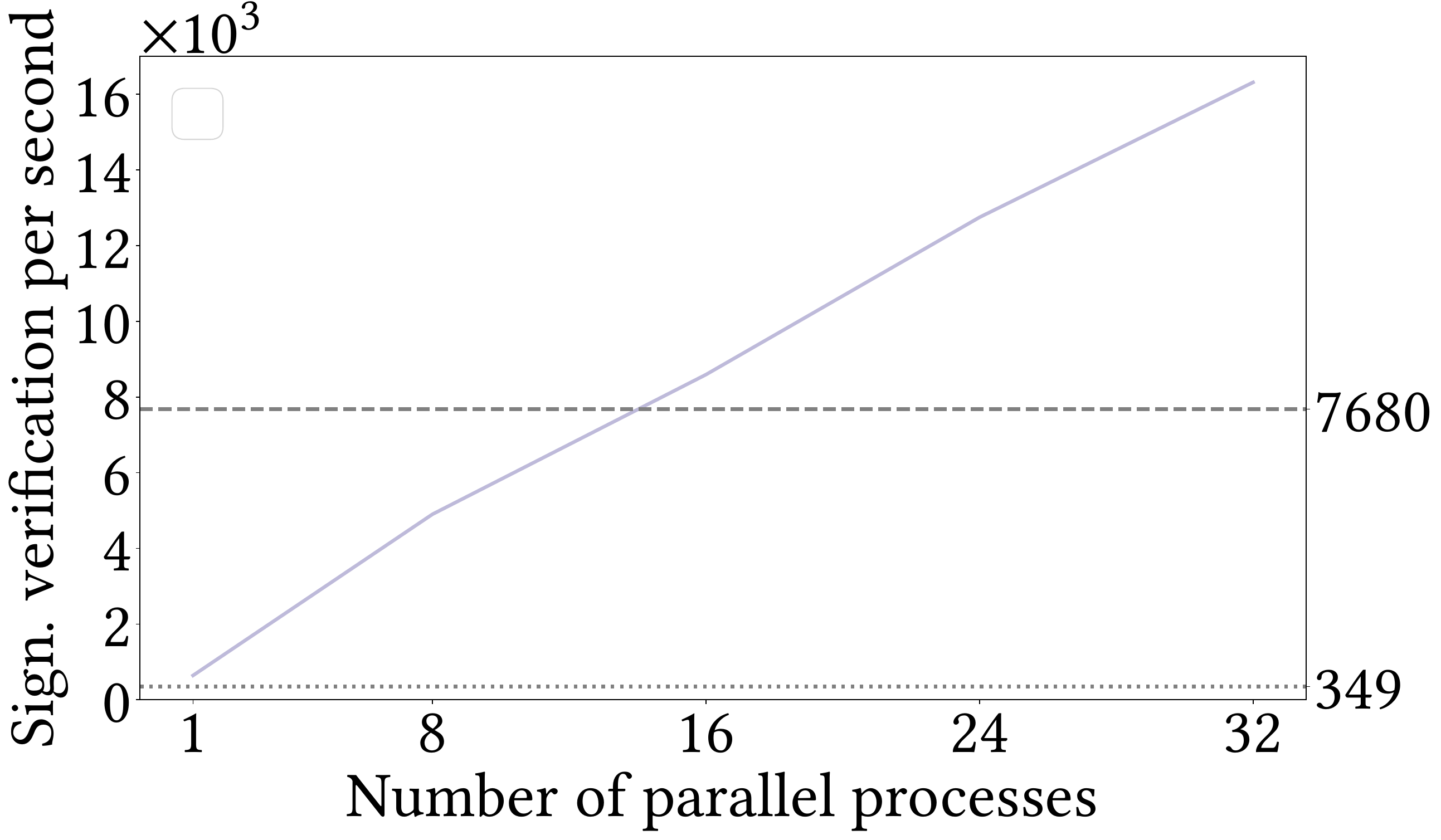} \\
    (a)  & (b) \\ 
  \end{tabular}  
  \caption{(a) Throughput of compressing, hashing and translating
    procedures for varying number of transactions per batch.
    (b) Signature verification per second.
    The dotted line
    represents the maximal number of signatures that each arranger
    process must verify (for $12,000$~TPS, the batch size of
    $4\,400$ and $128$ process). The dashed line represents the maximal
    number of signatures that each arranger must verify (for
    12,000~TPS, the batch size of $400$ and $256$ process).
  }
  \label{fig:tps-hash-compress-translate}
  \label{fig:ver-parallel}
\end{figure}

\subparagraph{Hypothesis \HAgg{}.}
The performance of the signature aggregation procedure depends on the
number of signatures aggregated is limited by the number $N$ of
arranger servers.
%
%
%
We studied the performance of signature aggregation ranging from \(8\)
to \(256\) servers\footnote{As reference, currently,
  the DAC of Arbitrum AnyTrust comprises $7$ members
  \url{https://docs.arbitrum.foundation/state-of-progressive-decentralization\#data-availability-committee-members}.}.
Our data set consists of a directory with $N$ files.
Each file contains $50$ signed hash-identifier pairs signed by
the same signer.
The list of pairs hash-identifier is the same across all files.
We measured the performance of aggregating $N$ signatures of the same
hash-identifier pair using BLS~\cite{ArbitrumNitroGithub}.

Fig.~\ref{fig:aggregation}(b) shows an aggregation ratio ranging from
$5,000$~procedures per second, when aggregating $256$~signatures in
each invocation, to $115,000$~per second when aggregating
$8$~signatures.
Assuming at most $30$ batches per second are generated, the
aggregation procedure can easily handle its workload, even for
$N = 256$ arranger processes, empirically validating hypothesis
\textbf{\HAgg{}}.

\subparagraph{Hypothesis \HVer{}.}
To assess the performance of the signature verification procedure, we
used a file containing $50$~batch tags and the public key of the
signer of all hashes.
We analyzed the performance of verifying that each signature was
generated by signing the corresponding hash-identifier pair using the
corresponding private keys.
We employed again BLS~\cite{ArbitrumNitroGithub}.
Fig.~\ref{fig:ver-parallel}(b) shows that signature verification can be
performed at a rate of approximately $600$~signatures per second.
This rate is sufficient to verify all signatures generated by $N=128$
processes, considering a throughput of $12,000$~TPS and batches
containing $4,400$ transactions (dotted line in
Fig.~\ref{fig:ver-parallel}(b)).
Servers must verify the highest number of signatures when the
arranger is maintained by $256$ processes and batches contain only
$400$ transactions, which amounts to verifying $7,680$ signatures per
second, surpassing the throughput of the signature verification
procedure (dashed line in Fig.~\ref{fig:ver-parallel}(b)).
However, verifying signatures can be done in parallel as the task is
completely independent for each signature.\footnote{All other local
  procedures described can use parallelism easily as well.}

Our empirical evaluation suggests that using $16$~parallel processes
to verify signatures achieves a rate exceeding $8,500$~signature
verified per second, sufficiently for $N=256$ processes and batches
containing $400$ transactions at $12,000$~TPS.
Consequently, the verifying procedure does not influence the
throughput, and hypothesis \textbf{\HVer{}} holds.

\subparagraph{Hypothesis \HTrans{}.}
Finally, we implemented a \emph{translation server} that takes two
input files and generates a dictionary mapping hashes to their
compressed data.
One file contains hashed batches together with identifiers while the
other contains compressed batches with identifiers.
The translation server opens a local TCP connection port to receive
hash translation requests receiving an identifier and returning a
compressed batch according to the dictionary.

Our experiments involve making sequential requests to a single
translation server asking to translate input hashes for a duration of
$1$ second.
All batches contained the same number of transactions, ranging from
$400$ to $4,400$ transactions.

Fig.~\ref{fig:tps-hash-compress-translate}(a) shows that the translation
for batches of $400$ transactions can be performed at a rate of about
$1,300,000$~TPS.
The translation throughput increases as the number of transactions per
batch increases, peaking at nearly $2,900,000$~TPS for batches of
$3,800$~transactions.
However, the throughput decreases slightly after that point, due to
the increased size of compressed batches.
The translation procedure can easily handle two orders of magnitude
more transactions than the throughput of setchain, confirming
hypothesis~\textbf{\HTrans{}}.




\subsection{Experiment Conclusions}
The results of our experiments are: (1) hashing and signing batches of
transactions take significantly less space than compressing batches of
transactions and (2) local procedures are very efficient and do not
affect Setchain throughput.
The reduction in space, which remains constant as
batch size increases, contributes to L1 lower gas consumption.
Since the overhead of our procedures is negligible, our decentralized
arranger does not introduce any significant performance bottlenecks to
Setchain, even if the workload of L2 systems was increased by two
orders of magnitude and there were 256 arranger servers.


\section{Related Work}
\label{sec:related-work}

While the limitations of centralized sequencers in rollups are well known, the
development of decentralized sequencers is still in its early
stages~\cite{motepalli2023sok}.
To the best of our knowledge, Metis~\cite{metisDecentralized} is the
only rollup that currently implements a decentralized arranger.
Metis uses Proof-of-Stake and Tendermint~\cite{buchman2016tendermint}
to select a rotating leader.
The leader collects user transactions and generates batches.
Batches are signed using multi-party computation and, when enough
signatures are gathered, the leader posts a signed batch in L1.
However, Metis~\cite{metisDecentralized} does not provide a formal
proof of correctness or mechanisms to check or dispute leader
(mis)behaviour.

Other attempts to implement arrangers include Radius~\cite{Radius},
Espresso~\cite{espressoSequencer} and Astria~\cite{Astria}.
Radius also uses leader election, based on the RAFT
algorithm~\cite{ongaro2014raft}.
Radius servers remain consistent and can reach consensus even when the
leader fails, but~\cite{Radius} does not discuss what happens when
servers refuse to disclose block contents to users.
Radius allows encryption of transactions to prevent MEV, but we do
not address this problem here.
Both Espresso and Astria use one protocol for sequencing transactions
and a different protocol for data availability.
In particular, Espresso uses HotStuff2~\cite{hotstuff} as consensus
mechanism and its own Data Availability layer known as
\emph{tiramisu}, while Astria uses CometBFT~\cite{cometBFT} for
sequencing transactions and leverages Celestia Data Availability
Service~\cite{celestia}.
We decided to tackle both challenges within a single protocol as
Byzantine-resilient solutions do not compose well (see
e.g.~\cite{capretto22setchain}) both in terms of correctness and
efficiency.

Even though Metis, Expresso and Astria use staking to discourage
misconduct among participants, none of these systems provide any
rewards for correct behavior.
Furthermore, to the best of our knowledge, our work is the first to
provide fraud-proof mechanisms usable on L1 to demonstrate the
illegality of batch tags posted by arranger servers.
We also provide an onchain challenge mechanism that can be used to
translate hashes, which has been studied in~\cite{boneh}.
The main difference with~\cite{boneh} is that they assume that the
translation can be directly verified in an L1 smart contract using
SNARK proofs, while we provide an interactive game played over the
Merkle tree nodes.

An extended list of related work can be found in
Appendix~\ref{app:ext-related-work}\footnote{The list of references
  below include works cited in the main body of the paper, and
  \emph{additionally} works cited in the extended related work in
  Appendix~\ref{app:ext-related-work}
  and~\ref{app:decentralizationL2Eth}.}.

\section{Conclusion}
\label{sec:conclusion}

Rollups aim to improve the scalability of current smart-contract
blockchains, offering a much higher throughput without modifying the
programming logic and interaction of blockchains.
The main element of rollups is \textbf{an arranger} formed by a
sequencer of transactions and a data committee translating hashes
into batches.
Current implementations of arrangers in most rollups are based on
centralized sequencers, either posting compressed batches~(ZK-Rollups
and Optimistic Rollups), or hashes of transactions with a fixed
collection of servers providing data translation~(Validums and
Optimiums).
The resulting systems are fully controlled by a centralized sequencer
server that has full power for transaction reordering and can post
fake blocks without penalties.

In this paper, we formalized the correctness criteria for
arranger services and presented a fully decentralized arranger based
on the efficient Setchain distributed Byzantine tolerant data-type.
We proved our solution correct and reported an empirical
evaluation of its building blocks showing that our solution can
scale far beyond the current demand of existing rollups.

We also described a collection of incentives for servers to behave
correctly when creating, signing and translating batches, and
penalties for posting incorrect information based on fraud-proof
mechanisms.
Using these mechanisms a single honest agent can guarantee that only
valid batches of transactions are executed in L2, regardless of the
number of Byzantine servers in the arranger.



\vfill
\clearpage

\bibliography{bibfile}

\begin{thebibliography}{100}

\bibitem{linea}
{A Consensys Formation}.
\newblock Linea.
\newblock \url{https://linea.build/}, Accessed: 2024-05-21.

\bibitem{aevo}
{Aevo}.
\newblock \url{https://www.aevo.xyz/}, Accessed: 2024-05-21.

\bibitem{albassam2019fraud}
Mustafa Al-Bassam, Alberto Sonnino, and Vitalik Buterin.
\newblock Fraud and data availability proofs: Maximising light client security
  and scaling blockchains with dishonest majorities, 2019.
\newblock \href {https://arxiv.org/abs/1809.09044} {\path{arXiv:1809.09044}}.

\bibitem{Alakuijala18brotli}
Jyrki Alakuijala, Andrea Farruggia, Paolo Ferragina, Eugene Kliuchnikov, Robert
  Obryk, Zoltan Szabadka, and Lode Vandevenne.
\newblock Brotli: A general-purpose data compressor.
\newblock {\em ACM Transactions on Information Systems}, 37(1):1--30, 2018.

\bibitem{alharby2020data}
Maher Alharby, Roben~Castagna Lunardi, Amjad Aldweesh, and Aad Van~Moorsel.
\newblock Data-driven model-based analysis of the ethereum verifier's dilemma.
\newblock In {\em 2020 50th Annual IEEE/IFIP International Conference on
  Dependable Systems and Networks (DSN)}, pages 209--220. IEEE, 2020.

\bibitem{ancient8}
{Ancient8 Foundation}.
\newblock {Ancient8}.
\newblock \url{https://ancient8.gg/}, Accessed: 2024-05-22.

\bibitem{apex}
{ApeX}.
\newblock \url{https://apex-pro.gitbook.io/apex-pro?lang=en-US}, Accessed:
  2024-05-21.

\bibitem{astar}
{Astar Network}.
\newblock {Astar zkEVM}.
\newblock \url{https://astar.network/astar2}, Accessed: 2024-05-21.

\bibitem{bentov2017instantaneous}
Iddo Bentov, Ranjit Kumaresan, and Andrew Miller.
\newblock Instantaneous decentralized poker.
\newblock In {\em Advances in Cryptology--ASIACRYPT 2017: 23rd International
  Conference on the Theory and Applications of Cryptology and Information
  Security, Hong Kong, China, December 3-7, 2017, Proceedings, Part II 23},
  pages 410--440. Springer, 2017.

\bibitem{blast}
{Blast}.
\newblock \url{https://blast.io/en}, Accessed: 2024-05-21.

\bibitem{bob}
{BOB}.
\newblock \url{https://www.gobob.xyz/}, Accessed: 2024-05-21.

\bibitem{boba}
{Boba Network}.
\newblock \url{https://boba.network/}, Accessed: 2024-05-21.

\bibitem{Boneh2001Short}
Dan Boneh, Ben Lynn, and Hovav Shacham.
\newblock Short signatures from the weil pairing.
\newblock {\em Journal of Cryptology}, 17:297--319, 2001.

\bibitem{ArbitrumNitro}
L.~Bousfield, R.~Bousfield, C.~Buckland, B.~Burgess, J.~Colvin, E.~Felten,
  S.~Goldfeder, D.~Goldman, B.~Huddleston, H.~Kalonder, F.~Lacs, H.~Ng,
  A.~Sanghi, T.~Wilson, V.~Yermakova, and T.~Zidenberg.
\newblock Arbitrum nitro: A second-generation optimistic rollup, 2022.
\newblock
  \url{https://github.com/OffchainLabs/nitro/blob/master/docs/Nitro-whitepaper.pdf}.

\bibitem{brandenburger2018blockchain}
Marcus Brandenburger, Christian Cachin, R{\"u}diger Kapitza, and Alessandro
  Sorniotti.
\newblock Blockchain and trusted computing: Problems, pitfalls, and a solution
  for hyperledger fabric.
\newblock {\em arXiv preprint arXiv:1805.08541}, 2018.

\bibitem{buchman2016tendermint}
Ethan Buchman.
\newblock Tendermint: Byzantine fault tolerance in the age of blockchains.
\newblock Master's thesis, The University of Guelph, 2016.

\bibitem{cai2022benzene}
Zhongteng Cai, Junyuan Liang, Wuhui Chen, Zicong Hong, Hong-Ning Dai, Jianting
  Zhang, and Zibin Zheng.
\newblock Benzene: Scaling blockchain with cooperation-based sharding.
\newblock {\em IEEE Transactions on Parallel and Distributed Systems},
  34(2):639--654, 2022.

\bibitem{campanelli2017zkcontingent}
Matteo Campanelli, Rosario Gennaro, Steven Goldfeder, and Luca Nizzardo.
\newblock Zero-knowledge contingent payments revisited: Attacks and payments
  for services.
\newblock In {\em Proceedings of the 2017 ACM SIGSAC Conference on Computer and
  Communications Security}, CCS '17, page 229–243, New York, NY, USA, 2017.
  Association for Computing Machinery.
\newblock \href {https://doi.org/10.1145/3133956.3134060}
  {\path{doi:10.1145/3133956.3134060}}.

\bibitem{canetti2011practical}
Ran Canetti, Ben Riva, and Guy~N Rothblum.
\newblock Practical delegation of computation using multiple servers.
\newblock In {\em Proceedings of the 18th ACM conference on Computer and
  communications security}, pages 445--454, 2011.

\bibitem{canetti2013refereed}
Ran Canetti, Ben Riva, and Guy~N Rothblum.
\newblock Refereed delegation of computation.
\newblock {\em Information and Computation}, 226:16--36, 2013.

\bibitem{capretto22setchain}
Margarita Capretto, Mart\'in Ceresa, Antonio~Fern\'andez Anta, Antonio Russo,
  and C\'esar S\'anchez.
\newblock {S}etchain: Improving {B}lockchain scalability with {B}yzantine
  distributed sets and barriers.
\newblock In {\em Proc. of the 2022 IEEE International Conference on
  Blockchain}, pages 87--96. IEEE, 2022.
\newblock \href {https://doi.org/10.1109/Blockchain55522.2022.00022}
  {\path{doi:10.1109/Blockchain55522.2022.00022}}.

\bibitem{cheng2019ekiden}
Raymond Cheng, Fan Zhang, Jernej Kos, Warren He, Nicholas Hynes, Noah Johnson,
  Ari Juels, Andrew Miller, and Dawn Song.
\newblock Ekiden: A platform for confidentiality-preserving, trustworthy, and
  performant smart contracts.
\newblock In {\em 2019 IEEE European Symposium on Security and Privacy
  (EuroS\&P)}, pages 185--200. IEEE, 2019.

\bibitem{base}
{Coinbase}.
\newblock {base}.
\newblock \url{https://base.org/}, Accessed: 2024-05-21.

\bibitem{Coleman2015StateChannels}
Jeff Coleman.
\newblock State channels, 2015.
\newblock \url{https://www.jeffcoleman.ca/state-channels/}.

\bibitem{cometBFT}
CometBFT.
\newblock What is {CometBFT}.
\newblock \url{https://docs.cometbft.com/v0.37/introduction/}, Accessed:
  2024-05-21.

\bibitem{Crain2021RedBelly}
Tyler Crain, Christopher Natoli, and Vincent Gramoli.
\newblock Red belly: A secure, fair and scalable open blockchain.
\newblock In {\em Proc. of S\&P'21}, pages 466--483, 2021.
\newblock \href {https://doi.org/10.1109/SP40001.2021.00087}
  {\path{doi:10.1109/SP40001.2021.00087}}.

\bibitem{Cristin1996AtomicBroadcast}
F.~Cristian, H.~Aghili, R.~Strong, and D.~Volev.
\newblock Atomic broadcast: from simple message diffusion to byzantine
  agreement.
\newblock In {\em 25th Int'l Symp. on Fault-Tolerant Computing}, pages 431--,
  1995.
\newblock \href {https://doi.org/10.1109/FTCSH.1995.532668}
  {\path{doi:10.1109/FTCSH.1995.532668}}.

\bibitem{croman2016scaling}
Kyle Croman, Christian Decker, Ittay Eyal, Adem~Efe Gencer, Ari Juels, Ahmed
  Kosba, Andrew Miller, Prateek Saxena, Elaine Shi, Emin G{\"u}n~Sirer, et~al.
\newblock On scaling decentralized blockchains: (a position paper).
\newblock In {\em International conference on financial cryptography and data
  security}, pages 106--125. Springer, 2016.

\bibitem{cyber}
{Cyber}.
\newblock \url{https://cyber.co/}, Accessed: 2024-05-22.

\bibitem{Dang2019Sharding}
Hung Dang, Tien Tuan~Anh Dinh, Dumitrel Loghin, Ee-Chien Chang, Qian Lin, and
  Beng~Chin Ooi.
\newblock Towards scaling blockchain systems via sharding.
\newblock In {\em Proc. of SIGMOD'19}, pages 123–--140. ACM, 2019.
\newblock \href {https://doi.org/10.1145/3299869.3319889}
  {\path{doi:10.1145/3299869.3319889}}.

\bibitem{das2019fastkitten}
Poulami Das, Lisa Eckey, Tommaso Frassetto, David Gens, Kristina
  Host{\'a}kov{\'a}, Patrick Jauernig, Sebastian Faust, and Ahmad-Reza Sadeghi.
\newblock Fastkitten: Practical smart contracts on bitcoin.
\newblock In {\em 28th USENIX Security Symposium (USENIX Security 19)}, pages
  801--818, 2019.

\bibitem{das2020efficient}
Sourav Das, Vinith Krishnan, and Ling Ren.
\newblock Efficient cross-shard transaction execution in sharded blockchains.
\newblock {\em arXiv preprint arXiv:2007.14521}, 2020.

\bibitem{das2018yoda}
Sourav Das, Vinay~Joseph Ribeiro, and Abhijeet Anand.
\newblock Yoda: Enabling computationally intensive contracts on blockchains
  with byzantine and selfish nodes.
\newblock {\em arXiv preprint arXiv:1811.03265}, 2018.

\bibitem{degate}
{DeGate Home DAO}.
\newblock {DeGate V1}.
\newblock \url{https://degate.com/}, Accessed: 2024-05-21.

\bibitem{metisDecentralized}
Metis~Developer Documentation.
\newblock Decentralized sequencer.
\newblock \url{https://docs.metis.io/dev/decentralized-sequencer/overview},
  Accessed: 2024-05-21.

\bibitem{dwork1992pricing}
Cynthia Dwork and Moni Naor.
\newblock Pricing via processing or combatting junk mail.
\newblock In {\em Annual international cryptology conference}, pages 139--147.
  Springer, 1992.

\bibitem{dydx}
{dYdX}.
\newblock \url{https://dydx.exchange/}, Accessed: 2024-05-21.

\bibitem{Dziembowski2017PERUNVP}
Stefan Dziembowski, Lisa Eckey, Sebastian Faust, and Daniel Malinowski.
\newblock Perun: Virtual payment channels over cryptographic currencies.
\newblock {\em IACR Cryptol. ePrint Arch.}, 2017:635, 2017.
\newblock URL: \url{https://api.semanticscholar.org/CorpusID:26535415}.

\bibitem{Fischer1985Impossibility}
Michael~J. Fischer, Nancy~A. Lynch, and Michael~S. Paterson.
\newblock Impossibility of distributed consensus with one faulty process.
\newblock {\em JACM}, 32(2):374--382, 1985.
\newblock \href {https://doi.org/10.1145/3149.214121}
  {\path{doi:10.1145/3149.214121}}.

\bibitem{frassetto2022pose}
Tommaso Frassetto, Patrick Jauernig, David Koisser, David Kretzler, Benjamin
  Schlosser, Sebastian Faust, and Ahmad-Reza Sadeghi.
\newblock Pose: Practical off-chain smart contract execution.
\newblock {\em arXiv preprint arXiv:2210.07110}, 2022.

\bibitem{fraxtal}
{Fraxtal}.
\newblock {Frax Finance}.
\newblock \url{https://www.frax.com/}, Accessed: 2024-05-21.

\bibitem{fuelv1}
{Fuel Labs}.
\newblock {Fuel V1}.
\newblock \url{https://fuel.network/}, Accessed: 2024-05-22.

\bibitem{goldwasser2015delegating}
Shafi Goldwasser, Yael~Tauman Kalai, and Guy~N Rothblum.
\newblock Delegating computation: interactive proofs for muggles.
\newblock {\em Journal of the ACM (JACM)}, 62(4):1--64, 2015.

\bibitem{gudgeon2020sok}
Lewis Gudgeon, Pedro Moreno-Sanchez, Stefanie Roos, Patrick McCorry, and Arthur
  Gervais.
\newblock Sok: Layer-two blockchain protocols.
\newblock In {\em Financial Cryptography and Data Security: 24th International
  Conference, FC 2020, Kota Kinabalu, Malaysia, February 10--14, 2020 Revised
  Selected Papers 24}, pages 201--226. Springer, 2020.

\bibitem{hafid2020scaling}
Abdelatif Hafid, Abdelhakim~Senhaji Hafid, and Mustapha Samih.
\newblock Scaling blockchains: A comprehensive survey.
\newblock {\em IEEE access}, 8:125244--125262, 2020.

\bibitem{hewa2021survey}
Tharaka Hewa, Mika Ylianttila, and Madhusanka Liyanage.
\newblock Survey on blockchain based smart contracts: Applications,
  opportunities and challenges.
\newblock {\em Journal of network and computer applications}, 177:102857, 2021.

\bibitem{hypr}
{Hypr Network}.
\newblock {Hypr}.
\newblock \url{https://www.hypr.network/}, Accessed: 2024-05-22.

\bibitem{immutablex}
{Immutable X}.
\newblock \url{https://www.immutable.com/}, Accessed: 2024-05-21.

\bibitem{jakobsson1999proofs}
Markus Jakobsson and Ari Juels.
\newblock Proofs of work and bread pudding protocols.
\newblock In {\em Secure Information Networks: Communications and Multimedia
  Security IFIP TC6/TC11 Joint Working Conference on Communications and
  Multimedia Security (CMS’99) September 20--21, 1999, Leuven, Belgium},
  pages 258--272. Springer, 1999.

\bibitem{jia2022themis}
Linpeng Jia, Keyuan Wang, Xin Wang, Lei Yu, Zhongcheng Li, and Yi~Sun.
\newblock Themis: An equal, unpredictable, and scalable consensus for
  consortium blockchain.
\newblock In {\em 2022 IEEE 42nd International Conference on Distributed
  Computing Systems (ICDCS)}, pages 235--245. IEEE, 2022.

\bibitem{Kalodner2018Arbitrum}
Harry Kalodner, Steven Goldfeder, Xiaoqi Chen, S.~Matthew Weinberg, and
  Edward~W. Felten.
\newblock Arbitrum: Scalable, private smart contracts.
\newblock In {\em 27th {USENIX} Security Symposium}, pages 1353--1370. {USENIX}
  Assoc., 2018.
\newblock URL:
  \url{https://www.usenix.org/conference/usenixsecurity18/presentation/kalodner}.

\bibitem{karak}
{Karak}.
\newblock \url{https://karak.network/}, Accessed: 2024-05-21.

\bibitem{kelkar2020orderfairness}
Mahimna Kelkar, Fan Zhang, Steven Goldfeder, and Ari Juels.
\newblock Order-fairness for byzantine consensus.
\newblock Cryptology ePrint Archive, Paper 2020/269, 2020.
\newblock \url{https://eprint.iacr.org/2020/269}.
\newblock URL: \url{https://eprint.iacr.org/2020/269}.

\bibitem{kinto}
{Kinto}.
\newblock \url{https://www.kinto.xyz/}, Accessed: 2024-05-21.

\bibitem{kokoris2018omniledger}
Eleftherios Kokoris-Kogias, Philipp Jovanovic, Linus Gasser, Nicolas Gailly,
  Ewa Syta, and Bryan Ford.
\newblock Omniledger: A secure, scale-out, decentralized ledger via sharding.
\newblock In {\em 2018 IEEE symposium on security and privacy (SP)}, pages
  583--598. IEEE, 2018.

\bibitem{kwon2019cosmos}
Jae Kwon and Ethan Buchman.
\newblock Cosmos whitepaper, 2019.
\newblock \url{https://cosmos. network/resources/whitepaper}.

\bibitem{celestia}
Celestia Labs.
\newblock Celestia.
\newblock \url{https://celestia.org}, Accessed: 2024-05-21.

\bibitem{ArbitrumNitroGithub}
Offchain Labs.
\newblock Arbitrum nitro, 2022.
\newblock \url{https://github.com/OffchainLabs/nitro}.

\bibitem{redstone}
{Lattice}.
\newblock {Redstone}.
\newblock \url{https://redstone.xyz/}, Accessed: 2024-05-22.

\bibitem{kroma}
{Lightscale Holdings}.
\newblock {Kroma}.
\newblock \url{https://kroma.network/}, Accessed: 2024-05-21.

\bibitem{lind2019teechain}
Joshua Lind, Ittay Eyal, Florian Kelbert, Oded Naor, Peter~R. Pietzuch, and
  Emin~G{\"{u}}n Sirer.
\newblock Teechain: Scalable blockchain payments using trusted execution
  environments.
\newblock {\em CoRR}, abs/1707.05454, 2017.
\newblock URL: \url{http://arxiv.org/abs/1707.05454}, \href
  {https://arxiv.org/abs/1707.05454} {\path{arXiv:1707.05454}}.

\bibitem{liu2021parallel}
Jian Liu, Peilun Li, Raymond Cheng, N~Asokan, and Dawn Song.
\newblock Parallel and asynchronous smart contract execution.
\newblock {\em IEEE Transactions on Parallel and Distributed Systems},
  33(5):1097--1108, 2021.

\bibitem{liu2020game}
Pinglan Liu and Wensheng Zhang.
\newblock Game theoretic approach for secure and efficient heavy-duty smart
  contracts.
\newblock In {\em 2020 IEEE Conference on Communications and Network Security
  (CNS)}, pages 1--9. IEEE, 2020.

\bibitem{loopring}
{Loopring Project Ltd}.
\newblock {Loopring}.
\newblock \url{https://loopring.org}, Accessed: 2024-05-21.

\bibitem{lovejoyparsec}
James Lovejoy, Anders Brownworth, Madars Virza, and Neha Narula.
\newblock Parsec: Executing smart contracts in parallel.
\newblock \url{https://www.media.mit.edu/projects/parsec/overview/}.

\bibitem{lyra}
{Lyra Mobile}.
\newblock {Lyra}.
\newblock \url{https://lyra.finance/}, Accessed: 2024-05-21.

\bibitem{Zamani2018RapidChain}
Zamani Mahdi, Mahnush Movahedi, and Mariana Raykova.
\newblock Rapidchain: Scaling blockchain via full sharding.
\newblock In {\em Proc. of CSS'18}, pages 931–--948. ACM, 2018.
\newblock \href {https://doi.org/10.1145/3243734.3243853}
  {\path{doi:10.1145/3243734.3243853}}.

\bibitem{hotstuff}
Dahlia Malkhi and Kartik Nayak.
\newblock Extended abstract: Hotstuff-2: Optimal two-phase responsive bft.
\newblock Cryptology ePrint Archive, Paper 2023/397, 2023.
\newblock \url{https://eprint.iacr.org/2023/397}.
\newblock URL: \url{https://eprint.iacr.org/2023/397}.

\bibitem{mallaki2022off}
Mahdi Mallaki, Babak Majidi, Amirhossein Peyvandi, and Ali Movaghar.
\newblock Off-chain management and state-tracking of smart programs on
  blockchain for secure and efficient decentralized computation.
\newblock {\em International Journal of Computers and Applications},
  44(9):822--829, 2022.

\bibitem{mantle}
{Mantle}.
\newblock \url{https://www.mantle.xyz/}, Accessed: 2024-05-21.

\bibitem{zksyncera}
{Matter Labs}.
\newblock {zkSync}.
\newblock \url{https://zksync.io/}, Accessed: 2024-05-21.

\bibitem{zksynclite}
{Matter Labs}.
\newblock {zkSync Lite}.
\newblock \url{https://zksync.io/}, Accessed: 2024-05-21.

\bibitem{Merkle88}
Ralph~C. Merkle.
\newblock A digital signature based on a conventional encryption function.
\newblock In Carl Pomerance, editor, {\em Advances in Cryptology --- CRYPTO
  '87}, pages 369--378, Berlin, Heidelberg, 1988. Springer Berlin Heidelberg.

\bibitem{metal}
{Metallicus}.
\newblock {Metal L2}.
\newblock \url{https://metall2.com/}, Accessed: 2024-05-22.

\bibitem{metis}
{Metis}.
\newblock {Metis Andromeda}.
\newblock \url{https://www.metis.io/}, Accessed: 2024-05-21.

\bibitem{miller2017sprites}
Andrew Miller, Iddo Bentov, Ranjit Kumaresan, and Patrick McCorry.
\newblock Sprites: Payment channels that go faster than lightning.
\newblock {\em CoRR}, abs/1702.05812, 2017.
\newblock URL: \url{http://arxiv.org/abs/1702.05812}, \href
  {https://arxiv.org/abs/1702.05812} {\path{arXiv:1702.05812}}.

\bibitem{mode}
{Mode Network}.
\newblock \url{https://www.mode.network/}, Accessed: 2024-05-21.

\bibitem{motepalli2023sok}
Shashank Motepalli, Luciano Freitas, and Benjamin Livshits.
\newblock Sok: Decentralized sequencers for rollups, 2023.
\newblock \href {https://arxiv.org/abs/2310.03616} {\path{arXiv:2310.03616}}.

\bibitem{myria}
{Myria}.
\newblock \url{https://myria.com/}, Accessed: 2024-05-22.

\bibitem{Astria}
Eshita Nandini.
\newblock Introducing the astria development cluster.
\newblock
  \url{https://www.astria.org/blog/introducing-the-astria-development-cluster},
  Accessed: 2024-05-21.

\bibitem{nasir2022scalable}
Muhammad~Hassan Nasir, Junaid Arshad, Muhammad~Mubashir Khan, Mahawish Fatima,
  Khaled Salah, and Raja Jayaraman.
\newblock Scalable blockchains—a systematic review.
\newblock {\em Future generation computer systems}, 126:136--162, 2022.

\bibitem{nehab2022permissionless}
Diego Nehab and Augusto Teixeira.
\newblock Permissionless refereed tournaments, 2022.
\newblock \href {https://arxiv.org/abs/2212.12439} {\path{arXiv:2212.12439}}.

\bibitem{nguyen2020WIIsAlmostEnough}
Ky~Nguyen, Miguel Ambrona, and Masayuki Abe.
\newblock Wi is almost enough: Contingent payment all over again.
\newblock In {\em Proceedings of the 2020 ACM SIGSAC Conference on Computer and
  Communications Security}, CCS '20, page 641–656, New York, NY, USA, 2020.
  Association for Computing Machinery.
\newblock \href {https://doi.org/10.1145/3372297.3417888}
  {\path{doi:10.1145/3372297.3417888}}.

\bibitem{nick2020liquid}
Jonas Nick, Andrew Poelstra, and Gregory Sanders.
\newblock Liquid: A bitcoin sidechain.
\newblock {\em Liquid white paper. URL https://blockstream.
  com/assets/downloads/pdf/liquid-whitepaper. pdf}, 2020.

\bibitem{xlayer}
{OKX}.
\newblock {X Layer}.
\newblock \url{https://www.okx.com/xlayer}, Accessed: 2024-05-21.

\bibitem{ongaro2014raft}
Diego Ongaro and John Ousterhout.
\newblock In search of an understandable consensus algorithm.
\newblock In {\em 2014 USENIX Annual Technical Conference (USENIX ATC 14)},
  pages 305--319, Philadelphia, PA, June 2014. USENIX Association.
\newblock URL:
  \url{https://www.usenix.org/conference/atc14/technical-sessions/presentation/ongaro}.

\bibitem{optimism}
{Optimism Foundation}.
\newblock {Optimism}.
\newblock \url{https://www.optimism.io/}, Accessed: 2024-05-21.

\bibitem{orderly}
{Orderly Network}.
\newblock \url{https://orderly.network/}, Accessed: 2024-05-22.

\bibitem{manta}
{p0x Labs}.
\newblock {Manta Pacific}.
\newblock \url{https://pacific.manta.network/}, Accessed: 2024-05-21.

\bibitem{paradex}
{Paradex}.
\newblock \url{https://www.paradex.trade/}, Accessed: 2024-05-21.

\bibitem{parallel}
{Parallel}.
\newblock \url{https://parallel.fi/}, Accessed: 2024-05-22.

\bibitem{pgn}
{Public Goods Network}.
\newblock \url{https://publicgoods.network/}, Accessed: 2024-05-22.

\bibitem{polygonPoS}
{Polygon Labs UI}.
\newblock {Polygon PoS}.
\newblock \url{https://polygon.technology/polygon-pos}.

\bibitem{polygonzkEVM}
{Polygon Labs UI}.
\newblock {Polygon zkEVM}.
\newblock \url{https://polygon.technology/polygon-zkevm}, Accessed: 2024-05-21.

\bibitem{poon2017plasma}
Joseph Poon and Vitalik Buterin.
\newblock Plasma: Scalable autonomous smart contracts, 2017.
\newblock \url{https://plasma.io/}.

\bibitem{Poon2016lightning}
Joseph Poon and Thaddeus Dryja.
\newblock The bitcoin lightning network: Scalable off-chain instant payments,
  2016.
\newblock \url{https://lightning.network/lightning-network-paper.pdf}.

\bibitem{Radius}
Radius.
\newblock Introduction to radius.
\newblock \url{https://docs.theradius.xyz/overview/introduction-to-radius},
  Accessed: 2024-05-21.

\bibitem{rana2022free2shard}
Ranvir Rana, Sreeram Kannan, David Tse, and Pramod Viswanath.
\newblock Free2shard: Adversary-resistant distributed resource allocation for
  blockchains.
\newblock {\em Proceedings of the ACM on Measurement and Analysis of Computing
  Systems}, 6(1):1--38, 2022.

\bibitem{reddioex}
{Reddio}.
\newblock \url{https://www.reddio.com/}, Accessed: 2024-05-22.

\bibitem{l2beat}
L2BEAT research team.
\newblock L2beat.
\newblock Accessed: 2024-05-21.
\newblock URL: \url{https://l2beat.com/scaling/summary}.

\bibitem{reya}
{Reya}.
\newblock \url{https://reya.network/}, Accessed: 2024-05-21.

\bibitem{rhinofi}
{rhino.fi}.
\newblock \url{https://rhino.fi/}, Accessed: 2024-05-21.

\bibitem{rss3}
{RSS3}.
\newblock \url{https://rss3.io/}, Accessed: 2024-05-21.

\bibitem{saleh2021blockchain}
Fahad Saleh.
\newblock Blockchain without waste: Proof-of-stake.
\newblock {\em The Review of financial studies}, 34(3):1156--1190, 2021.

\bibitem{sanka2021systematic}
Abdurrashid~Ibrahim Sanka and Ray~CC Cheung.
\newblock A systematic review of blockchain scalability: Issues, solutions,
  analysis and future research.
\newblock {\em Journal of Network and Computer Applications}, 195:103232, 2021.

\bibitem{scroll}
{Scroll Ltd}.
\newblock {Scroll}.
\newblock \url{https://scroll.io/}, Accessed: 2024-05-21.

\bibitem{espressoSequencer}
Espresso Sequencer.
\newblock The espresso sequencer: Hotshot consensus and tiramisu data
  availability.
\newblock \url{https://webassembly.org/}, Accessed: 2024-05-21.
\newblock URL:
  \url{https://github.com/EspressoSystems/HotShot/blob/main/docs/espresso-sequencer-paper.pdf}.

\bibitem{shen2023gt}
Tao Shen, Tianyu Li, Zhuo Yu, Fenhua Bai, and Chi Zhang.
\newblock Gt-nrsm: efficient and scalable sharding consensus mechanism for
  consortium blockchain.
\newblock {\em The Journal of Supercomputing}, pages 1--35, 2023.

\bibitem{smuseva2022verifier}
Daria Smuseva, Ivan Malakhov, Andrea Marin, Aad van Moorsel, and Sabina Rossi.
\newblock Verifier’s dilemma in ethereum blockchain: A quantitative analysis.
\newblock In {\em International Conference on Quantitative Evaluation of
  Systems}, pages 317--336. Springer, 2022.

\bibitem{sorare}
{Sorare, SAS}.
\newblock {Sorare}.
\newblock \url{https://sorare.com/es/}, Accessed: 2024-05-21.

\bibitem{starknet}
{Starknet}.
\newblock \url{https://www.starknet.io/en}, Accessed: 2024-05-21.

\bibitem{tanX}
{tanX}.
\newblock \url{https://tanx.fi/}, Accessed: 2024-05-22.

\bibitem{boneh}
Ertem~Nusret Tas and Dan Boneh.
\newblock Cryptoeconomic security for data availability committees.
\newblock In Foteini Baldimtsi and Christian Cachin, editors, {\em Financial
  Cryptography and Data Security}, pages 310--326, Cham, 2024. Springer Nature
  Switzerland.

\bibitem{teutsch2019scalable}
Jason Teutsch and Christian Reitwie{\ss}ner.
\newblock A scalable verification solution for blockchains.
\newblock {\em CoRR}, abs/1908.04756, 2019.
\newblock URL: \url{http://arxiv.org/abs/1908.04756}, \href
  {https://arxiv.org/abs/1908.04756} {\path{arXiv:1908.04756}}.

\bibitem{cartesi}
{The Cartesi Foundation}.
\newblock {Cartesi}.
\newblock \url{https://cartesi.io/}, Accessed: 2024-05-22.

\bibitem{thibault2022blockchain}
Louis~Tremblay Thibault, Tom Sarry, and Abdelhakim~Senhaji Hafid.
\newblock Blockchain scaling using rollups: A comprehensive survey.
\newblock {\em IEEE Access}, 2022.

\bibitem{unnikrishnan2022survey}
KN~Unnikrishnan and P~Victer~Paul.
\newblock A survey on layer 2 solutions to achieve scalability in blockchain.
\newblock In {\em Advances in Distributed Computing and Machine Learning:
  Proceedings of ICADCML 2022}, pages 205--216. Springer, 2022.

\bibitem{fuchsbauer2019WIIsNotEnough}
Georg~Fuchsbauer \url{https://www.fuel.network/}.
\newblock Wi is not enough: Zero-knowledge contingent (service) payments
  revisited.
\newblock Cryptology ePrint Archive, Paper 2019/964, 2019.
\newblock \url{https://eprint.iacr.org/2019/964}.
\newblock URL: \url{https://eprint.iacr.org/2019/964}, \href
  {https://doi.org/10.1145/3319535.3354234}
  {\path{doi:10.1145/3319535.3354234}}.

\bibitem{wagner2019dispute}
Eric Wagner, Achim V{\"o}lker, Frederik Fuhrmann, Roman Matzutt, and Klaus
  Wehrle.
\newblock Dispute resolution for smart contract-based two-party protocols.
\newblock In {\em 2019 IEEE International Conference on Blockchain and
  Cryptocurrency (ICBC)}, pages 422--430. IEEE, 2019.

\bibitem{wang2020prism}
Gerui Wang, Shuo Wang, Vivek Bagaria, David Tse, and Pramod Viswanath.
\newblock Prism removes consensus bottleneck for smart contracts.
\newblock In {\em 2020 Crypto Valley Conference on Blockchain Technology
  (CVCBT)}, pages 68--77. IEEE, 2020.

\bibitem{Wang2019FastChain}
Ke~Wang and Hyong~S. Kim.
\newblock Fastchain: Scaling blockchain system with informed neighbor
  selection.
\newblock In {\em Proc. of IEEE Blockchain'19}, pages 376--383, 2019.
\newblock \href {https://doi.org/10.1109/Blockchain.2019.00058}
  {\path{doi:10.1109/Blockchain.2019.00058}}.

\bibitem{wasm}
{WebAssembly Working Group}.
\newblock {WebAssembly}.
\newblock \url{https://webassembly.org/}.

\bibitem{wood2014ethereum}
Gavin Wood.
\newblock Ethereum: A secure decentralised generalised transaction ledger.
\newblock {\em Ethereum project yellow paper}, 151:1--32, 2014.

\bibitem{wood2016polkadot}
Gavin Wood.
\newblock Polkadot: Vision for a heterogeneous multi-chain framework.
\newblock {\em White Paper}, 21, 2016.

\bibitem{xiao2020privacyguard}
Yang Xiao, Ning Zhang, Jin Li, Wenjing Lou, and Y~Thomas Hou.
\newblock Privacyguard: Enforcing private data usage control with blockchain
  and attested off-chain contract execution.
\newblock In {\em Computer Security--ESORICS 2020: 25th European Symposium on
  Research in Computer Security, ESORICS 2020, Guildford, UK, September 14--18,
  2020, Proceedings, Part II 25}, pages 610--629. Springer, 2020.

\bibitem{zamyatin2019xclaim}
Alexei Zamyatin, Dominik Harz, Joshua Lind, Panayiotis Panayiotou, Arthur
  Gervais, and William Knottenbelt.
\newblock Xclaim: Trustless, interoperable, cryptocurrency-backed assets.
\newblock In {\em 2019 IEEE Symposium on Security and Privacy (SP)}, pages
  193--210. IEEE, 2019.

\bibitem{zhang2019paralysis}
Fan Zhang, Philip Daian, Iddo Bentov, Ian Miers, and Ari Juels.
\newblock Paralysis proofs: Secure dynamic access structures for cryptocurrency
  custody and more.
\newblock In {\em Proceedings of the 1st ACM Conference on Advances in
  Financial Technologies}, pages 1--15, 2019.

\bibitem{zhou2020solutions}
Qiheng Zhou, Huawei Huang, Zibin Zheng, and Jing Bian.
\newblock Solutions to scalability of blockchain: A survey.
\newblock {\em IEEE Access}, 8:16440--16455, 2020.

\bibitem{zhou2023overview}
Qiheng Zhou, Huawei Huang, Zibin Zheng, and Jing Bian.
\newblock Overview to blockchain scalability challenges and solutions.
\newblock In {\em Blockchain Scalability}, pages 51--80. Springer, 2023.

\bibitem{zkfair}
{ZKFair Team}.
\newblock {ZKFair}.
\newblock \url{https://zkfair.io/}, Accessed: 2024-05-21.

\bibitem{zkspace}
{ZKSpace Team}.
\newblock {ZKSpace}.
\newblock \url{https://en.wiki.zks.org/}, Accessed: 2024-05-22.

\bibitem{zora}
{Zora Labs}.
\newblock {Zora}.
\newblock \url{https://docs.zora.co/docs/zora-network/intro}, Accessed:
  2024-05-21.

\end{thebibliography}
\vfill

\appendix
\pagebreak
\section{Semi-decentralized Arranger}%
\label{sec:seqDC}


In this section, we describe a \emph{semi-decentralized} distributed
implementation of arrangers consisting of a single process sequencer
and a decentralized DAC (see Fig.~\ref{fig:arranger_seqDC}).
%
%
%
The centralized sequencer implements \<add> locally, create batches
and their hash, inform DAC servers and collect their signatures.
%
DAC servers implement \<translate> mapping the hash into its batch.
\begin{figure}[b!]
    \centering
    \includegraphics[scale=0.36]{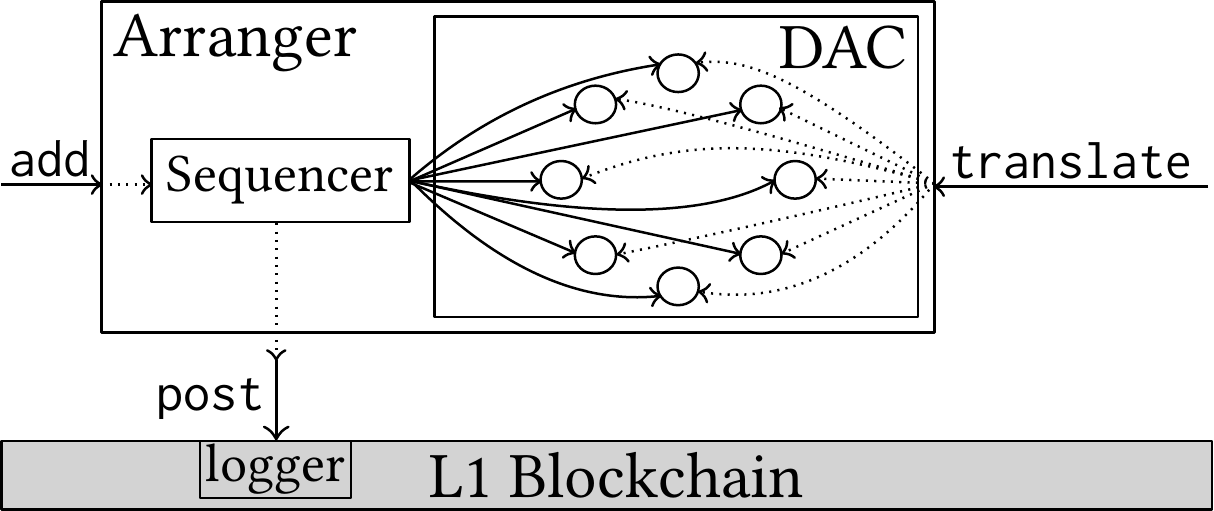}
    \caption{Centralized Sequencer + decentralized DAC.}
    \label{fig:arranger_seqDC}
\end{figure}


%
%
%

\subsection{Implementation}
We provide the pseudo-code implementing sequencers and DAC members
in Alg.~\ref{alg:sequencer} and Alg.~\ref{alg:dc-member},
respectively.
The logger contract is invoked only by the sequencer to store batch
tags.
For simplicity, the sequencer uses a multicast service, to send
messages to all DAC servers (which can be implemented with unicast
channels).
DAC servers answer directly to the sequencer.


Alg.~\ref{alg:sequencer} shows the sequencer pseudo-code.
This algorithm maintains a set \<allTxs> to store all valid added
transactions, a sequence of transactions \<pendingTxs> for
transactions added but not posted, and a natural number \<batchId> as
the identifier of the next batch.
%
Function \<add> adds new valid transactions into \<pendingTxs> and
\<allTxs>.
When there are enough transactions in \<pendingTxs> or a timeout is
reached, a \<timetopost> event is triggered (line~\ref{seq-when}) and
Alg.~\ref{alg:sequencer} generates a new batch with all transactions
in \<pendingTxs> (line~\ref{seq-when-save-pendingTxs}).
Then, the sequencer sends the batch with its identifier and hash to
all DAC servers (line~\ref{seq-send-h}) and waits to receive $f+1$
signatures.
Then, Alg.~\ref{alg:sequencer} aggregates all signatures received,
generates a batch tag and posts it (line~\ref{seq-post}).
Finally, Alg.~\ref{alg:sequencer} increments \<batchId> and cleans
\<pendingTxs> (lines~\ref{seq-increment-batchid}
and~\ref{seq-when-clean-pendingTxs}).
%

\begin{figure}[t!]
  \begin{algorithm}[H]
    \caption{\small Sequencer.}
    \label{alg:sequencer}
    \small
    \begin{algorithmic}[1]
  	\State \textbf{init:} $\<allTxs> \leftarrow \emptyset $,
        $\<pendingTxs> \leftarrow \epsilon$, $\<batchId> \leftarrow 0$~\label{seq-init}
        \Function{\textnormal{\<add>}}{$t$} \label{seq-add}
          \State \<assert> $\<valid>(t) $ and $t \notin \<allTxs>$ \label{seq-add-assert}
          \State $\<pendingTxs> \leftarrow \langle \<pendingTxs>, t \rangle$ \label{seq-add-pendingTxs}
          \State $\<allTxs> \leftarrow \<allTxs> \cup \{t\}$ \label{seq-add-allTxs}
          \State \Return
        \EndFunction \label{seq-add-end}
        \Upon{\<timetopost>}\label{seq-when} 
          \State $b \leftarrow \<pendingTxs>$ \label{seq-when-save-pendingTxs}
          \State $h \leftarrow \hash(b)$ \label{seq-when-hash-b}
          \State \texttt{\textbf{DC}.Multicast}$(\<signReq>(b,\<batchId>,h))$~\label{seq-send-h}
          \State $\<wait>$ for $f+1$: $\<signResp>(\<batchId>,h,i,sig_i(\<batchId>,h))$ \label{seq-wait}
          \State \<logger>.\<post>$(\<batchId>, h, \<BLSAggregate>(\bigcup_i sig_i(\<batchId>,h)))$ \label{seq-post}
         \State $\<batchId> \leftarrow \<batchId> +
         1$ \label{seq-increment-batchid}
         \State $\<pendingTxs> \leftarrow \epsilon$ \label{seq-when-clean-pendingTxs}
        \EndUpon \label{seq-when-end}
    \end{algorithmic}
  \end{algorithm}
  \vspace{-2em}
\end{figure}


\begin{figure}[t!]
  \begin{algorithm}[H]
    \caption{\small Correct DAC server $i$.}
    \label{alg:dc-member}
    \small
    \begin{algorithmic}[1]
  	\State  \textbf{init:} $\<hashes> \leftarrow \emptyset$
        \Function{\textnormal{\<translate>}}{$\textit{id},h$}
           \If{$(\<id>,\_)\notin \<hashes>$}
             \<return> \<invalidId>
           \Else
             \If{$(\<id>,h)\notin \<hashes>$}
                \<return> \<invalidHash>
             \Else
               ~\<return> $\<hashes>(\<id>,h)$
             \EndIf
           \EndIf  
        \EndFunction
        \Receive{$\<signReq>( b,\<id>, h)$ from Sequencer} \label{dc-member-receive}
          \State \<assert> $h == \hash(b)$ and $(\<id>,\_) \notin \<hashes>$ \label{dc-member-sign-assert}
          \State $\<hashes> \leftarrow \<hashes> \cup \{((\<id>,h),b)\}$ \label{dc-member-sign-add}
          \State \textbf{send} $\<signResp>(\<id>,h,i, sig_i(\<id>,h))$ to
          Sequencer \label{dc-member-sign-send}
        \EndReceive \label{dc-member-receive-end}
    \end{algorithmic}
  \end{algorithm}
  \vspace{-2em}
\end{figure}



Alg.~\ref{alg:dc-member} shows the pseudo-code of DAC servers,
which maintain a map \<hashes> mapping pairs of identifiers and hashes
$(\<batchId>,h)$ into batches $b$, such that $\hash(b) = h$.
When the sequencer sends requests to sign a batch tag
$(b,\<batchId>,h)$, correct DAC servers check that $h$ corresponds to
$b$ (line~\ref{dc-member-sign-assert}), add the new triplet to
\<hashes>, sign it and send the signature back.
When clients invoke $\<translate>(\<batchId>,h)$, a correct DAC server
returns the value stored in $\<hashes>(\<batchId>,h)$.


%
Clients solve inverse hash resolution by contacting DAC servers and
checking that the batch obtained corresponds to hash $h$, guaranteed
to be provided by a legal server.
%


\subsection{Proof of Correctness}\label{sec:seqDC-props}

We show now that Alg.~\ref{alg:sequencer} and Alg.~\ref{alg:dc-member}
form a correct arranger. 
%
%
We use \<S> for the single sequential server running
Alg.~\ref{alg:sequencer} and \<DAC> to a distributed set of servers
implementing Alg.~\ref{alg:dc-member}.
We assume at most \(f < \frac{n}{2}\) \<DAC> servers are Byzantine
to provide \PrTermination.
%
%

The arranger satisfies \PrValidity and \PrIntegrity, because the
sequencer \<S> is the only process that posts batch tags and we assume
it to be correct.

\begin{lemma}  \label{lem:semiValidity}
  Sequencer \<S> generates legal batch tags of valid
  transactions added by clients. \<S> adds each transaction to
  at most one batch tag.
\end{lemma}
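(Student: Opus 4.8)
Since \<S> is a single correct sequential process, its whole execution is a totally ordered interleaving of atomic \<add> invocations and \<timetopost> events, so the plan is to argue by an invariant over this event sequence. First I would name the batches that \<S> actually posts: let $b_1, b_2, \dots$ be the successive values read from \<pendingTxs> on line~\ref{seq-when-save-pendingTxs} at the first, second, \dots{} \<timetopost> event; immediately afterwards line~\ref{seq-when-clean-pendingTxs} resets \<pendingTxs> to $\epsilon$ and line~\ref{seq-increment-batchid} advances \<batchId>, so \<S> emits exactly one signed batch tag for $b_i$, with the identifier $i-1$ it held at that event (line~\ref{seq-post}), and no identifier is ever reused.

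The core of the argument is the following invariant, maintained at every point of the execution: \<allTxs> is the disjoint union of the already-posted batches $b_1,\dots,b_k$ and the current contents of \<pendingTxs>; the sequence \<pendingTxs> has no repeated entry; and every element of \<allTxs> is a valid transaction submitted by a client through \<add>. I would prove this by induction on events. Initially $\<allTxs> = \emptyset$ and $\<pendingTxs> = \epsilon$, so it holds vacuously. On an \<add>$(t)$ step, the assertion on line~\ref{seq-add-assert} guarantees $t$ is valid and $t \notin \<allTxs>$; since \<allTxs> grows monotonically, $t$ was never appended to \<pendingTxs> before, so appending $t$ (line~\ref{seq-add-pendingTxs}) keeps \<pendingTxs> repetition-free and disjoint from the $b_i$, adding $t$ to \<allTxs> (line~\ref{seq-add-allTxs}) preserves the union equation, and validity is preserved because $t$ came through a client call. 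On a \<timetopost> step the current \<pendingTxs> becomes the new batch $b_{k+1}$ and \<pendingTxs> is emptied, so the union equation is maintained, and $b_{k+1}$ is disjoint from $b_1,\dots,b_k$ because its entries were all added strictly after the $k$-th post and, by the \<add> assertion, no transaction is added twice.

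From the invariant, each posted batch $b_i$ contains only valid client transactions (giving \PrValidity) and no transaction twice (giving \PrIntegrityOne); it then remains only to check \PrCertified, which holds because \<S> blocks on line~\ref{seq-wait} until it has collected $f+1$ responses \<signResp>$(\cdot,\cdot,i,\cdot)$ with pairwise distinct sender indices $i$, and only then aggregates them via \<BLSAggregate> and posts on line~\ref{seq-post}, so the attached combined signature aggregates at least $f+1$ signatures. Hence every batch tag produced by \<S> is legal. For the second claim, any transaction $t$ lies in at most one of the pairwise-disjoint batches $b_1, b_2, \dots$, and each $b_i$ is carried by a single posted batch tag, so $t$ appears in at most one batch tag emitted by \<S> (this is \PrIntegrityTwo restricted to \<S>'s output).

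The main obstacle is formulating the invariant tightly enough — in particular capturing that the posted batches are precisely the consecutive ``windows'' of client transactions delimited by \<timetopost> events, which is what forces their pairwise disjointness — and handling the \<timetopost> case of the induction, where the shape of the invariant changes as a whole segment of \<pendingTxs> migrates into the list of committed batches. A smaller point to be careful about is that the $f+1$ collected signatures come from distinct \<DAC> servers, which relies on \<S> counting \<signResp> messages by their sender index $i$ rather than by mere arrival. By contrast, sequentiality of \<S> removes any intra-process concurrency concerns, and the $f < \frac{n}{2}$ hypothesis is needed only to ensure the wait on line~\ref{seq-wait} terminates (liveness), so it plays no role in this safety lemma.
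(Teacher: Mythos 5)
Your proposal is correct and follows essentially the same route as the paper's proof: transactions in posted batches originate from \<add>, whose assertion on line~\ref{seq-add-assert} gives validity and (together with the monotone growth of \<allTxs> and the emptying of \<pendingTxs> on line~\ref{seq-when-clean-pendingTxs}) pairwise disjointness of successive batches, while the wait on line~\ref{seq-wait} yields the $f+1$ signatures needed for legality. The paper states this in four informal sentences; your explicit disjoint-union invariant and induction over events is just a more rigorous rendering of the same argument, with no substantive divergence.
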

\begin{proof}
  Following Alg.~\ref{alg:sequencer}, \<S> generates batches
  taking transactions from $\<pendingTxs>$, and thus, transactions
  come from clients invocations to \<add>.
  Function \<add> guarantees that each transaction in $\<pendingTxs>$
  is valid and that no transaction is added to
    $\<pendingTxs>$ twice.
  Moreover, $\<pendingTxs>$ is emptied after generating batches so
  transactions are posted only once.
  Before posting batch tags, \<S> gathers at least \(f + 1\)
  signatures, and thus, \<S> posts only legal batch tags.
\end{proof}

The following lemma implies \PrTermination when the
majority of the \<DAC> members are correct.
Note that the sequencer \<S> waits for \(f+1\) answers with
signatures.
%
%
Since there are at least $f+1$ correct \<DAC> servers \<S> will
eventually collect enough signatures.

\begin{lemma}  \label{lem:semiTermination}
  If less than half of the \<DAC> servers are Byzantine then valid
  transactions added through \<add> eventually appear in a legal batch
  tag.
\end{lemma}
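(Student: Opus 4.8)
The plan is to follow the lifecycle of an arbitrary valid transaction $t$ from a client's invocation of $\<add>(t)$ until a legal batch tag containing $t$ is posted to the logger contract. Legality of whatever tags \<S> posts is already granted by Lemma~\ref{lem:semiValidity}, so the remaining work is purely a liveness argument: that $t$ does get packed into a batch, and that \<S> does eventually post the corresponding tag. First I would observe that if $\<add>(t)$ is invoked with $t$ valid, then either $t \in \<allTxs>$ already---so an earlier invocation has already put $t$ on the path to a batch tag and we are done through that instance---or the assertion in line~\ref{seq-add-assert} succeeds and $t$ is appended to $\<pendingTxs>$ (and to $\<allTxs>$).

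Next I would argue that a \<timetopost> event is eventually triggered after $t$ enters $\<pendingTxs>$. This rests on the model assumption that batch changes are forced regularly by a local timeout, so \<S> cannot postpone batch creation indefinitely; and since \<S> is a single sequential process, it completes any batch it is currently processing and eventually reaches the \<timetopost> handler while $t$ is still (or already) in $\<pendingTxs>$---note that $\<pendingTxs>$ is only ever emptied in line~\ref{seq-when-clean-pendingTxs}, simultaneously with posting a tag, so $t$ is never silently dropped. When the handler runs, \<S> sets $b \leftarrow \<pendingTxs>$ with $t \in b$, computes $h = \hash(b)$, multicasts $\<signReq>(b,\<batchId>,h)$ to all \<DAC> servers (line~\ref{seq-send-h}), and then blocks at line~\ref{seq-wait} until it has $f+1$ valid responses of the form $\<signResp>(\<batchId>,h,i,sig_i(\<batchId>,h))$.

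The crux is showing that this wait terminates. Here I would invoke the hypothesis that strictly fewer than half of the \<DAC> servers are Byzantine: with $n$ servers and at most $f < n/2$ Byzantine ones, there are at least $n-f > n/2 \geq f+1$ correct \<DAC> servers. Each correct server, upon delivering the multicast, passes the check that $h$ corresponds to $b$ (line~\ref{dc-member-sign-assert}), records $(\<batchId>,h)\mapsto b$ in its \<hashes> map, signs, and returns a valid $\<signResp>$ to \<S>; by the reliable-channel and partial-synchrony assumptions these $\geq f+1$ honest replies are eventually delivered, and Byzantine servers can only send garbage or stay silent, which \<S> discards. Hence \<S> collects $f+1$ valid signatures, aggregates them, and posts the tag $(\<batchId>,h,\<BLSAggregate>(\ldots))$ (line~\ref{seq-post}); it carries $\geq f+1$ signatures and, by Lemma~\ref{lem:semiValidity}, is legal, and it contains $t$. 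I expect the main obstacle to be making the ``\<timetopost> eventually fires with $t$ still pending'' step fully precise---it combines the timeout assumption, the sequential progress of \<S>, and the observation that $\<pendingTxs>$ is emptied only together with a post---plus the routine partial-synchrony bookkeeping needed to bound the time until the $f+1$ honest signature responses all arrive.
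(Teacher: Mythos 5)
Your proof follows essentially the same route as the paper's: trace $t$ into $\<pendingTxs>$, argue that $\<timetopost>$ eventually fires while $t$ is still pending, and use $f<n/2$ to guarantee at least $f+1$ correct \<DAC> servers whose signature responses let \<S> complete the wait and post a legal tag containing $t$ (legality via the preceding lemma). The only quibble is the intermediate chain $n-f>n/2\ge f+1$, whose second inequality fails for odd $n$ (e.g.\ $n=5$, $f=2$), though the intended conclusion $n-f\ge f+1$ does follow from $f<n/2$; your extra care about $t$ possibly already being in $\<allTxs>$ and about $\<pendingTxs>$ being emptied only together with a post is a harmless refinement of the paper's argument.
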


\begin{proof}
  Let \(t\) be a valid transaction.
  As a result of \<add>\((t)\), \<S> appends $t$ to $\<pendingTxs>$ where \(t\) remains
  until it is added to a batch.
  At some point after $t$ was added, event $\<timetopost>$ is triggered
  so \<S> generates a new batch of transactions using $\<pendingTxs>$.
  Then, \<S> sends to all \<DAC> members the batch
  (lines~\ref{seq-when}-\ref{seq-send-h}) and waits for $f+1$
  responses (line~\ref{seq-wait}).
  Since $f<n/2$, there are at least $f+1$ correct \<DAC> servers
  executing Alg.~\ref{alg:dc-member}, all of which sign the hash and
  send it back to \<S>
  (lines~\ref{dc-member-receive}-\ref{dc-member-receive-end}).
  Therefore, \<S> eventually receives enough signatures and posts the
  legal batch tag containing $t$ (line~\ref{seq-post}).
\end{proof}

Finally, correct \<DAC> members resolve signed hashes as they only
sign hashes received from \<S> and store their corresponding batches.
Since all legal batch tags are signed by at least one correct DAC
server, it follows that \<S> and \<DAC> satisfy \PrAvailability when
used by correct clients.

\begin{lemma}\label{lem:semiAvailability}
  Correct \<DAC> members resolve signed hashes.
\end{lemma}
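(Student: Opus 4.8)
The plan is to trace, for an arbitrary legally signed batch tag, the unique way a correct \<DAC> server could have contributed one of its signatures, and to observe that this act necessarily records the inverse translation we need. Fix a legal batch tag $(\<batchId>, h, \sigma)$ posted to L1 with $h = \hash(b)$. By \PrCertified, $\sigma$ aggregates at least $f+1$ signatures on the pair $(\<batchId>, h)$, and since at most $f$ \<DAC> servers are Byzantine, at least one of those signatures was produced by a correct server $D$ running Alg.~\ref{alg:dc-member}.

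First I would argue that $D$ stores the relevant binding. Inspecting Alg.~\ref{alg:dc-member}, the only point at which a correct server emits a signature on $(\<batchId>, h)$ is upon receiving a request $\<signReq>(b', \<batchId>, h)$ from \<S>; before signing, the server checks at line~\ref{dc-member-sign-assert} that $\hash(b') = h$ and inserts $(\<batchId>, h) \mapsto b'$ into its local map \<hashes>. Because \<S> is the single (correct) sequencer and the Merkle-root hash is collision resistant, $b' = b$, so at the moment $D$ signed it held exactly $(\<batchId>, h) \mapsto b$ in \<hashes>. Next I would invoke persistence: correct \<DAC> servers never delete entries from \<hashes>, hence this binding remains available forever. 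Therefore, when a correct client later invokes $\<translate>(\<batchId>, h)$ at $D$, the server returns $\<hashes>(\<batchId>, h) = b$ with $\hash(b) = h$; that is, $D$ resolves the signed hash. Combined with the observation that every legal batch tag carries a signature from at least one correct \<DAC> server, a client contacting $f+1$ servers --- or contacting servers until the returned batch hashes to $h$ --- always obtains $b$, which yields \PrAvailability for \<S> together with \<DAC>.

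I do not expect a real obstacle here; the only point requiring care is the ordering of the ``record in \<hashes>'' and ``emit signature'' steps in Alg.~\ref{alg:dc-member}, so that no correct server ever contributes a signature to a legal tag without having first (or simultaneously) stored the corresponding batch, together with the implicit use of collision resistance of the Merkle root to guarantee that a correct signer stored the same preimage the tag claims.
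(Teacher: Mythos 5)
Your proof is correct and follows essentially the same route as the paper's: a correct \<DAC> server only emits a signature after the check at line~\ref{dc-member-sign-assert} and after recording $(\<batchId>,h)\mapsto b$ in \<hashes>, entries are never deleted, so \<translate> succeeds. The paper states this directly for an arbitrary correct member and leaves the "at least one correct signer per legal tag" observation to the surrounding text, whereas you fold it into the proof and make the collision-resistance step explicit; these are presentational differences only.
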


\begin{proof}
  Let $m$ be a correct \<DAC> member.
  Following Alg.~\ref{alg:dc-member}, \(m\) only signs hashes
  answering requests from \<S>.
  Upon receiving a request from \<S> to sign $(h,b,\<batchId>)$, after
  checking that $h$ is correct and \(\<batchId>\) has not been
  used, \(m\) saves the data received, signs the request and answers
  to \<S>.
  As no entry in $\<hashes>$ is ever deleted, whenever
  $m.$\<translate>$(\<batchId>,h)$ is invoked, $m$ uses map
  $\<hashes>$ to find batch $b$ and returns it.
\end{proof}

From the previous lemmas, we have Alg.~\ref{alg:sequencer} plus
Alg.~\ref{alg:dc-member} guarantee that \<S> and \<DAC> form a correct
arranger.
%
\begin{theorem}
  Clients contacting \<S> to add transactions and a \<DAC> member
running Alg.~\ref{alg:dc-member} signing batches satisfy all correct
arranger properties.
\end{theorem}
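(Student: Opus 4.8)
The plan is to assemble the final theorem directly from the three lemmas just proved, checking that together they cover every property in the definition of a correct arranger from Section~\ref{sec:api-properties}. Concretely, I would open the proof by recalling that the correctness properties are \PrCertified, \PrValidity, \PrIntegrityOne, \PrIntegrityTwo, \PrUniqueBatch, \PrAvailability, and \PrTermination, and then dispatch each one in turn. Lemma~\ref{lem:semiValidity} already establishes that \Stt\ posts only legal batch tags of valid client transactions, with no transaction placed in more than one batch tag: this immediately gives \PrCertified\ (each posted tag carries $f+1$ signatures, since \Stt\ waits for them before posting), \PrValidity\ (transactions come from \<add>, which asserts validity), \PrIntegrityOne\ (no transaction appears twice within \<pendingTxs>, hence not within a batch), and \PrIntegrityTwo\ (\<pendingTxs> is emptied after each post, so a transaction is posted at most once). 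Because \Stt\ is the unique poster and is assumed correct, \PrUniqueBatch\ is trivial: there is exactly one batch tag per identifier, so all legal batch tags with a given identifier are literally the same tag.

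Next I would invoke Lemma~\ref{lem:semiTermination} for \PrTermination: under the standing assumption $f < n/2$ there are at least $f+1$ correct \<DAC> members, each of which signs every hash it receives from \Stt, so \Stt\ always collects the $f+1$ signatures it waits for and eventually posts the batch tag containing any given added transaction. Finally, \PrAvailability\ follows from Lemma~\ref{lem:semiAvailability} together with the observation that any legal batch tag carries at least $f+1$ signatures and hence at least one correct \<DAC> member's signature; that member, by the lemma, retained $b$ in its \<hashes> map and answers \<translate>$(\<batchId>,h)$ with $b$. A client contacting \<DAC> members and checking that the returned batch hashes to $h$ will thus obtain the correct batch from a legal (correct) server. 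I would then conclude that all safety properties (\PrValidity, \PrIntegrityOne, \PrIntegrityTwo, \PrUniqueBatch, \PrAvailability, \PrCertified) and the liveness property \PrTermination\ hold, so \Stt\ and \<DAC> form a correct arranger.

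I do not anticipate a serious obstacle here — the theorem is essentially a bookkeeping corollary of the three lemmas — but the one point requiring care is making explicit the two distinct failure bounds in play: the semi-decentralized setting assumes $f < n/2$ \<DAC> Byzantine servers (needed for liveness, i.e. \PrTermination), whereas the legality threshold for batch tags is still ``$f+1$ signatures'' with $f$ the same bound, so I should double-check that ``$f+1$ signatures suffice to guarantee one correct signer'' remains valid under $f < n/2$ (it does, since at most $f$ of any $f+1$ can be Byzantine). I would also flag the standing assumption that clients are correct — both \PrAvailability\ (clients must re-query if a Byzantine \<DAC> member answers wrongly, using the hash check) and \PrTermination\ (clients contact \Stt, which is correct by assumption) implicitly rely on it, matching the theorem's hypothesis ``clients contacting \<S>''.
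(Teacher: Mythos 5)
Your proposal is correct and follows essentially the same route as the paper, which derives the theorem directly from Lemmas~\ref{lem:semiValidity}, \ref{lem:semiTermination} and \ref{lem:semiAvailability} (with \PrValidity, \PrIntegrityOne, \PrIntegrityTwo and \PrUniqueBatch resting on the assumption that the single sequencer \<S> is correct, \PrTermination on $f<n/2$, and \PrAvailability on at least one correct signer retaining the batch). Your write-up is merely more explicit in enumerating the properties and in noting that $f+1$ signatures still guarantee a correct signer under $f<n/2$.
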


\vfill
\pagebreak
\section{Setchain Server Implementation}\label{sec:setchain-implementation}\label{app:setchain}

An overview of a setchain server implementation can be found below.

  \begin{algorithm}[H]
  \caption{\small Server implementation of Setchain.}~\label{alg:setchain}
   \small
   \begin{algorithmic}[1]
     \State  \textbf{Init:} $\EPOCH \leftarrow 0,$
     \hspace{2em} $\HISTORY \leftarrow \emptyset$ 
     \State  \textbf{Init:} $\THESET \leftarrow \emptyset$\label{alg3:theset}
     \Function{\<Get>}{~}
        \State \<return> $(\THESET, \HISTORY,\<epoch>)$\label{alg3:get}
     \EndFunction
     \Function{\<Add>}{$e$} \label{alg3-add-begin}
        \State \<assert> $valid(e)$ and $e \notin \THESET$
        \State add element $e$ to $\<theset>$
        \State propogate $e$
        \If {$e = epoch\_signature(h,E,sign(hash(E)))$}
           \State $\<trigger>$ $\<added>(epoch\_signature(h,E,sign(hash(E))))$
        \EndIf
      \EndFunction  
      \Function{\<EpochInc>}{$h$} \label{alg3-epochinc-begin}
         \State \<assert> $h==\EPOCH +1$
         \State collaboratively compute new epoch $E$ \Comment{triggers event
           $\<newepoch>(h,E)$}
         \State validate elements in $E$
	 \State add new valid elements in $E$ to \<theset>
         \State add new epoch $E$ to \<history> 
         \State increment $\<epoch>$ 
         \State sign the new epoch and add
         $epoch\_signature(h,E,sign(hash(E))))$ to the setchain \label{alg:setchain-sign-epoch}
         \State $\<trigger>$ $\<newepoch>(h,E)$
      \EndFunction  
    \end{algorithmic}
  \end{algorithm}\vspace{-2em}

\vfill
\pagebreak
\vspace{20em}
\section{Payment system}\label{app:htlc}

We present the pseudo-code implementing a payment system based on
hashed-timelock contract for clients described in
Section~\ref{sec:incentives:translate}.
\begin{lstlisting}[language=Solidity,numbers=none]
  contract Payment {
    address owner, beneficiary;
    bytes32 secret;
    uint deadline;

    constructor(address _beneficiary, bytes32 _secret, uint _deadline){
      beneficiary = _beneficiary;
      secret = _secret;
      deadline = _deadline;
    }
    
    function claim(bytes k) public {
      require(msg.sender == beneficiary);
      require(SHA256(k) == secret);
      beneficiary.transfer(balance); 
    }

    function withdraw() public {
      require(msg.sender == owner);
      require(now > deadline);
      owner.transfer(balance);
    }
  }
\end{lstlisting}


\vfill
\pagebreak
\section{Challenges pseudo-code}
\label{app:challenges-alg}

We present a pseudo-code for the L1 smart contract
arbitrating the challenges described in
Section~\ref{sec:incentives:challenges}.
These smart contracts are used during a challenge game that implements
the fraud-proof mechanisms described in
Section~\ref{sec:incentives:challenges}.
An honest challenger \Att uses the methods of this contract to play
the game described in Section~\ref{sec:incentives:challenges}, with a
guarantee that if \Att is correct \Att will win the game.
Absence of a move within a pre-defined interval results in the
corresponding player loosing the game.
Note that none of this contracts has a loop or invokes remotes
contracts, so termination arguments (and gas calculations) can be
easily performed.

\begin{figure}
  \begin{algorithm}[H]
    \caption{\small Data Availability Challenge.}
    \label{alg:dataChallenge}
    \small
    \begin{algorithmic}[1]        
    \Function{InitDataChallenge}{$\id,\merkleRoot,\sigs,\staker,\val,\sender$}\label{initDataChallenge-begin}
          \State \<assert> $(\id,\merkleRoot,\sigs) \in \proposedBatches$
          \State \<assert> $\staker \in \stakers[(\id,\merkleRoot,\sigs)]$
          \State \<assert> $\val \geq \CC{data}$

          \State $\challengeId \leftarrow \texttt{createChallenge}(\sender,
          \staker, \id, \merkleRoot, \sigs, \texttt{data})$
          \State $\challengedNodes[challengeId][1] \leftarrow
          (\texttt{hash}(\merkleRoot), \texttt{size}(\merkleRoot))$
          \State $\canBeChallenged[\challengeId] \leftarrow \emptyset$
        \EndFunction \label{initDataChallenge-end}
        \Function{PresentDataCompressed}{$\challengeId, n, b_l, h_l,
          b_r, h_r,\sender$}
          \State \<assert> $\sender == \texttt{currentPlayer}(\challengeId)$
          \State \<assert> $n \in \challengedNodes[\challengeId]$
          \State $(h_n,  size_n) \leftarrow \challengedNodes[\challengeId][n]$
          \State \<assert> $size_n > 1$
          \If {$\hash(h_l ++ h_r) \neq h_n$}
            \State $\texttt{setWinner}(\challengeId,\texttt{nextPlayer}(\challengeId))$
          \Else
            \State
            $\canBeChallenged[\challengeId][\texttt{leftChild}(n)]
            \leftarrow h_l, \texttt{sizeLeftChild}(size_n)$
            \State
            $\canBeChallenged[\challengeId][\texttt{rightChild}(n)]
            \leftarrow  (h_r, \texttt{sizeRightChild}(size_n))$
            \State $\challengedNodes[\challengeId] \leftarrow
            \texttt{remove}(challengedNodes[challengeId], n)$
            \If{$\challengedNodes[\challengeId] == \emptyset$}  
              \State $\texttt{changeTurn}(\challengeId)$  
            \EndIf
          \EndIf    
        \EndFunction
        \Function{PresentElement}{$\challengeId, n, e,\sender$} \label{presentElement-begin}
          \State \<assert> $\sender == \texttt{currentPlayer}(\challengeId)$
          \State \<assert> $n \in \challengedNodes[\challengeId]$
          \State $(h_n, size_n) \leftarrow
          \challengedNodes[\challengeId][n]$
          \State \<assert> $size_n == 1$
          \If{$\hash(e) != h_n$}
             \State $\texttt{setWinner}(\challengeId,\texttt{nextPlayer}(\challengeId))$ 
          \Else   
            \State $\challengedNodes[\challengeId] \leftarrow
             \texttt{remove}(\challengedNodes[\challengeId], n)$
            \If{$\challengedNodes[\challengeId] == \emptyset$}  
              \State $\texttt{changeTurn}(\challengeId)$  
            \EndIf
          \EndIf  
        \EndFunction \label{presentElement-end} 
        \Function{ChallengePosition}{$\challengeId, n, \moreChallenges,\sender$}
          \State \<assert> $\sender == \texttt{currentPlayer}(\challengeId)$
          \State \<assert> $n \in \canBeChallenged[\challengeId]$
          \State $\challengedNodes[\challengeId][n] = \canBeChallenged[\challengeId][n]$
          \State $\canBeChallenged[\challengeId] \leftarrow
          remove(canBeChallenged[\challengeId], n)$
          \If {$!moreChallenges$}
            \State $\texttt{changeTurn}(\challengeId)$
          \EndIf  
         \EndFunction
         \Function{Timeout}{$\challengeId$}
          \If{$\texttt{runoutOfTime}(\challengeId)$}
             \State $\texttt{setWinner}(\challengeId,\texttt{nextPlayer}(\challengeId))$ 
          \EndIf  
         \EndFunction   
    \end{algorithmic}
  \end{algorithm}
  \vspace{-2em}
\end{figure}


\begin{figure}
  \begin{algorithm}[H]
    \caption{\small Signature Challenge.}
    \label{alg:signatureChallenge}
    \small
    \begin{algorithmic}[1]
        \Function{SignatureChallengeSize}{$\id,h,\sigs,\val,\sender$}
          \State \<assert> $(\id,h,\sigs) \in \proposedBatches$
          \State \<assert> $\val >= \CC{singature}$ 
          \If{$|sigs| < f+1$}
            \State $\texttt{removeAllStakes}(id,h,sigs,sender,\texttt{signature})$
           \EndIf 
         \EndFunction
         \Function{SignatureChallengeAggregation}{$\id,h,\sigs,\val,\sender$}
          \State \<assert> $(\id,h,\sigs) \in \proposedBatches$
          \State \<assert> $\val >= \CC{singature}$ 
          \State $\aggPKs \leftarrow \empty$
          \For{$i \in \sigs$}
            \State $\aggPKs \leftarrow
            \texttt{aggregatePK}(\aggPKs, \KWD{arrangerPK}[i])$
          \EndFor
          \If{$! validSignature(\aggPKs, (\id,h), \sigs)$}
            \State $\texttt{removeAllStakes}(id,h,sigs,sender,\texttt{signature})$
           \EndIf 
        \EndFunction
    \end{algorithmic}
  \end{algorithm}
  \vspace{-2em}
\end{figure}


\begin{figure}
  \begin{algorithm}[H]
    \caption{\small Validity Challenge.}
    \label{alg:validityChallenge}
    \small
    \begin{algorithmic}[1]        
        \Function{InitValidityChallenge}{$id,h,sigs,e,h_e,ePos,midH,staker,value,sender$}
          \State \<assert> $(id,h,sigs) \in proposedBatches$
          \State \<assert> $staker \in stakers[(id,h,sigs)]$
          \State \<assert> $value >= cc_{validity}$
           \State \<assert> $\hash(e) == h_e$ 
           \State \<assert> $\texttt{valid}(e)$
           
          \State $challengeId \leftarrow \texttt{createChallenge}(sender,
          staker, id, h, sigs,\texttt{validity})$
          \State $challengeState[challengeId] \leftarrow \texttt{Bisection}$
          \State $bottom[challengeId] \leftarrow (ePos, h_e)$
          \State $top[challengeId] \leftarrow (1, h)$
          \State $mid[challengeId] \leftarrow (\texttt{midPos}(1, ePos), midH)$
        \EndFunction
        \Function{BisectionSelectSubpath}{$challengeId, top,sender$}
          \State \<assert> $challengeState[challengeId] == \texttt{Bisection}$
          \State \<assert> $sender == \texttt{currentPlayer}(challlengeId)$
          \If {$top$}
            \State $bottom[challengeId] \leftarrow mid[challengeId]$
          \Else
            \State $top[challengeId] \leftarrow mid[challengeId]$ 
          \EndIf
          \If {$\texttt{dist}(bottom[challengeId].pos, top[challengeId].pos) ==
            1$}
            \State $challengeState[challengeId] \leftarrow
            \texttt{OneStep}$
         \EndIf  
         \State $\texttt{changeTurn}(challengeId)$  
        \EndFunction
        \Function{BisectionMembershipProof}{$challengeId, h,sender$}
          \State \<assert> $challengeState[challengeId] == \texttt{Bisection}$
          \State \<assert> $sender == \texttt{currentPlayer}(challlengeId)$
          \State $mid[challengeId] =
          (\texttt{midPos}(top[challengeId].pos, bottom[challengeId].pos), h)$
        \EndFunction
        \Function{OneStepMembershipProof}{$challengeId, h,sender$}
          \State \<assert> $challengeState[challengeId] == OneStep$
          \State \<assert> $sender == \texttt{currentPlayer}(challlengeId)$
          \State $c \leftarrow \texttt{concatenate}(bottom[challengeId].hash,
          h, bottom[challengeId].pos)$
          \If{$\hash(c) == top[challengeId].hash$}
            \State $\texttt{setWinner}(challengeId,\texttt{currentPlayer}(challengeId))$ 
          \Else
            \State $\texttt{setWinner}(challengeId,\texttt{nextPlayer}(challengeId))$
          \EndIf  
         \EndFunction
         \Function{Timeout}{$challengeId,sender$}
           \Comment{Same as in Alg.~\ref{alg:dataChallenge}}
         \EndFunction   
    \end{algorithmic}
  \end{algorithm}
  \vspace{-2em}
\end{figure}


\begin{figure}
  \begin{algorithm}[H]
    \caption{\small Integrity Challenge.}
    \label{alg:integrityChallenge}
    \small
    \begin{algorithmic}[1]        
        \Function{InitIntegrityChallenge}{$id0,h0,sigs0,e,eH,ePos0,id1,ePos1,
          staker, value, sender$}
          \State \<assert> $(id1,h1,sigs1) \in proposedBatches$
          \State \<assert> $staker \in stakers[(id,h,sigs)]$
          \State \<assert> $value >= cc_{integrity}$
  
          \State \<assert> $\hash(e) = eH$ 
          \If {$id1 == id0$}
            \State $h1 \leftarrow h0$
          \Else
            \State \<assert> $id1 \in \confirmedBatches$
            \State $h1 \leftarrow \confirmedBatches[id1].hash$
          \EndIf  
          \State $\mathit{challengeId} \leftarrow \texttt{createChalenge}(sender,
          staker, id0, h0, sigs0, \texttt{integrity})$
          \State $challengeState[challengeId] \leftarrow \texttt{PathSelection}$
          \State $Path[0][challengeId] \leftarrow (eH,ePos0, h0)$
          \State $Path[1][challengeId] \leftarrow (eH,ePos1, h1)$
        \EndFunction
        \Function{SelectPath}{$challengeId, pathId, sender$}
          \State \<assert> $challengeState[challengeId] == \texttt{PathSelection}$
          \State \<assert> $sender == currentPlayer(challlengeId)$
          \State $(bottomH, bottomPos, topH) \leftarrow Path[pathId][challengeId]$
          \State $bottom[challengeId] \leftarrow (bottomPos,bottomH) $
          \State $top[challengeId] \leftarrow (1,topH)$ 
          \State $mid[challengeId] = (midPos(bottomPos, 1),
          midH) $
          \State $challengeState[challengeId] \leftarrow \texttt{Bisection}$
          \State $\texttt{changeTurn}(challengeId)$  
        \EndFunction
         \Function{BisectionSelectSubpath}{$challengeId$}
           \Comment{Same as in Alg.~\ref{alg:validityChallenge}}
         \EndFunction
         \Function{BisectionMembershipProof}{$challengeId$}
           \Comment{Same as in Alg.~\ref{alg:validityChallenge}}
         \EndFunction
        \Function{OneStepMembershipProof}{$challengeId$}
           \Comment{Same as in Alg.~\ref{alg:validityChallenge}}
         \EndFunction
        \Function{Timeout}{$challengeId$}
           \Comment{Same as in Alg.~\ref{alg:dataChallenge}}
         \EndFunction
    \end{algorithmic}
  \end{algorithm}
  \vspace{-2em}
\end{figure}



\vfill
\pagebreak
\section{Fraud-proof Mechanisms and Incentives - Proofs}%
\label{app:ProofIncentives}

We now provide the proof of the propositions presented in
Section~\ref{sec:incentives}.

\repeatatproposition{DAC}
\begin{proof}
  Let \Att be an honest agent, \(t\) an unconfirmed signed batch tag
  posted in L1, \Stt a staker in \(t\), and \(mt\) the Merkle Tree
  generated by the batch of transaction in \(t\).
  We will prove that if \Att and \Stt play the data availability
  challenge, then by its end either \Att learn all transactions in
  \(t\) or \Stt loses its stake.
  Since \Att can challenge all \(t\) stakers until either \Att learns
  all transactions or there are no more stakers in \(t\), the
  proposition follows.

  We will prove by structural induction over the Merkle tree \(mt\)
  that if a node \(n\) hash is posted in the L1 contract, then \Att
  can learn all transactions corresponding to that node subtree, or
  \Stt gets its stake removed.
  
  The base case is when \(n\) is a leaf.
  If \Att does not know the transaction corresponding to \(n\) yet,
  \Att challenges \Stt to reveal it.
  \Stt can either lose its stake or answer the challenge correctly by
  posting the transaction in the L1 smart contract that checks that
  its hash matches \(n\) hash.

  For the inductive step, we consider that \(n\) has a left child \(l\)
  and a right child \(r\).
  The inductive hypothesis is that if the hash of \(l\) and \(r\) are
  posted in the L1 smart contract, then \Att can learn all transactions
  corresponding to each subtree, or \Stt gets its stake removed.
  If \Att does not know the transactions in \(n\) subtree
  yet, \Att challenges \Stt to reveal them.
  \Stt can either lose its stake or answer the challenge correctly by
  posting the nodes \(l\) and \(r\) hashes in the L1 smart contract
  that checks that its concatenation hashes to \(n\) hash.
  \Stt can also post the compressed version of the batches of
  transactions corresponding to \(l\) and \(r\) subtrees, in which
  case \Att learns all transaction in the \(n\) subtree, by
  concatenating both batches.
  Otherwise, since \(l\) and \(r\) hashes are posted in L1, \Att can
  challenge each of them and by inductive hypothesis, either \Stt gets
  its stake removed or \Att learns all transactions in both subtrees,
  and therefore all transactions in \(n\) subtree.
\end{proof}

\repeatatproposition{correctness}
\begin{proof}
    The proof depend on which property \(t\) violates:
  \begin{itemize}
    \item \textbf{Violation of B1:} \(t\) is not signed by enough
      arranger processes or the signature is not valid.
      In either case, \Att can invoke the corresponding function in
      the L1 smart contract which will verify the violation resulting
      in \(t\) being discarded.
    \item \textbf{Violation of B2:} There is an element \(e\) in \(t\)
      that is not a valid a transaction.
      
      Let \Stt by in a staker in \(t\)
      We will prove that \Att can win the validity challenge against
      \Stt.
      
      First, \Att posts \(e\) in the L1 smart contract which verifies
      that it is invalid.
      Since, \Att knows all transactions in \(t\) \Att can reconstruct
      its Merkle tree \(mt\).
      \Att also posts \(e\) position in \(mt\), \(e\) hash, and the
      hash of the node in the middle of path from \(e\) leaf to \(mt\)
      root.
      
      To answer each challenge from \Stt, \Att must either provide one
      of the hashes in the path from \(e\) to \(mt\) root, or, when
      the bisection process is over, one of the children of a node in
      that path.
      Since \Att know all hashes in \(mt\),\Att can successfully
      answer all \Stt challenge and win the validity challenge.

      Therefore, \Att can play the validity challenge against each of
      \(t\) stakers and win every time.
      As consequence, all stakers loose their stakes and \(t\) is
      discarded.
      
    \item \textbf{Violation of B3 and B4:} Note that \Att know not
      only all transactions in \(t\) but also all previous batches.
      Therefore, if a transaction is duplicated, \Att can compute the
      Merkle tree(s) where it appears.
      The proof for these cases is analogous to the case where \(t\) violates B2.
  \end{itemize}
\end{proof}

\repeatatproposition{correctness2}

\begin{proof}
  Since \Att knows all transactions in \(t\), it can compute their
  effect and post it as an L2 block.
  If the L2 block is challenged, \Att can win the challenge
  by correctly playing the fraud proof mechanism of the L2 blockchain.

  The remaining step is to show that \Att can keep its stake in \(t\)
  until the corresponding L2 block is confirmed.
  We demonstrate that \Att can win each type of challenge separately:
  \begin{itemize}
  \item \textbf{data challenge}: \Att computes the Merkle tree \(mt\)
    associated with the transactions in \(t\) and uses \(mt\) to
    answer all challenges correctly.
  \item \textbf{signature challenge}: As \(t\) is a legally signed
    batch tag, it is correctly signed by at least \(f+1\) arranger
    processes.
    In this case, the L1 smart contract will fail the challenge.
  \item \textbf{validity challenge}: A challenger might post in L1 an
    element \(e\), along with its hash, its claimed position in \(t\)
    Merkle tree \(mt\), and the claimed hash of the middle node in the
    path from \(e\) to \(mt\) root.
    Element \(e\) must be invalid, otherwise \Att immediately wins the
    challenge as the L1 smart contract verifies that the element is
    invalid.
   Since, \(t\) is legal, the element does not belong to it.
      
   To win the challenge, \Att maintains the invariant that after its
   turn in the challenged subpath the top node belongs to \(mt\) while
   the bottom node does not.
   The path from the \(e\) to the root of \(mt\) satisfies this
   invariant.
   In each turn \Att selects the top subpath if the middle node
   provided by the challenger does not belong to \(mt\), and the
   bottom path otherwise.
   This strategy clearly preserve the invariant.
      
   When the challenged subpath has two nodes with hashes \(h_p\) and
   \(h_c\), where the node with hash \(h_p\) is the parent of the node
   with hash \(h_c\), the challenger needs to post a hash \(h\) such
   that the concatenation of \(h\) with \(h_c\) hashes to \(h_p\).
      Such \(h\) proves that \(h_c\) and \(h_p\) belong to the same
      Merkle Tree.
      However, the challenger cannot provide such an \(h\) as, by the
      invariant, \(h_p\) belongs to \(mt\) but \(h_c\) does not.
    \item \textbf{integrity challenge}:
      Since \(t\) is legal, no transaction in \(t\) is duplicated.
      Therefore, one of the claimed path by the challenger must be
      incorrect.
      Note that \Att knows not only all transactions in \(t\) but also
      all previous batches, and therefore if there is an integrity
      challenge it can compute the Merkle tree of all involved batches
      and detect the incorrect path.
      Then the proof is analogous to the validity challenge.
    \end{itemize}

    In conclusion, \Att can win all challenges and guarantee the
    confirmation of \(t\).
\end{proof}

\repeatatproposition{cost}

\begin{proof}
  We analyze the minimum budget \(b\) based on the violations of
  \(t\):
\begin{compactitem}
\item If \(t\) only violates B1, \Att needs to play the signature
  challenge, which requires a $b \geq \CC{signature}$.
\item If \(t\) only violates B2, then \(\CC{data} + \CC{validity}\)
  tokens are necessary and sufficient.
  Since \(t\) violates B2, one of its transaction is invalid.
  Therefore, \Att needs to know the transaction in \(t\).
  As all stakers in \(t\) can refuse to participate in the translation
  protocol, \Att might need to play the data challenge, which requires
  at least \(\CC{data}\) tokens.
  \Att can learn all transactions in \(t\) the first it plays a data
  challenge, losing \(\CC{data}\) tokens.
  Once \Att learned all transactions, \Att needs at least
  \(\CC{validity}\) tokens to play the validity challenge and make
  stakers lose their stake.
  Then, \Att needs at least \(\CC{data} + \CC{validity}\) tokens.
  To see that \(\CC{data} + \CC{validity}\) tokens is sufficient, \Att
  can first (1) challenge one by one all \(t\) stakers to the data
  challenge until either \Att learns all transactions in \(t\) or
  there no more stakers in \(t\) and it is discarded, and then (2)
  challenge remaining stakers one by one to the validity
  challenge.
  For the first phase, \(\CC{data}\) tokens are enough.
  Every time \Att wins a data challenge its budget increases because
  \Att spends \(\CC{data}\) tokens but receives \(\CR{data}\) tokens,
  and \(\CR{data} > \CC{data}\).
  Then, if \Att starts with at least \(\CC{data}\) and \Att wins the
  data challenge, \Att will have enough token to continue playing.
  Similarly, \Att reward for winning the validity challenge is larger
  than the cost of playing the challenge, then \(\CC{validity}\)
  tokens are enough to complete the second phase.
  Then, to complete both phases $\CC{data} + \CC{validity}$ tokens are
  sufficient.
\item If \(t\) only violates (B3) or (B4), \Att must play the
  integrity challenge but needs to learn all transactions in \(t\)
  before.
  With a reasoning analogous to the validity challenge, the minimum
  budget needed in this case is
  $b \geq \CC{data} + \CC{integrity}$
\item If a batch violates multiple properties, \Att chooses the
  cheapest way to remove it.
\end{compactitem}

Consequently, \Att can remove any illegal batch tag with a budget of
at least
\(max(\CC{signature}, \CC{data} + max(\CC{validity},\CC{integrity}))\)
tokens.
\end{proof}
\vfill
\pagebreak
\section{Decentralization of Ethereum Layer 2 scaling systems}
\label{app:decentralizationL2Eth}


 \begin{table}[h]
     \centering
       \begin{tabular}{|l|l|l|}
        \hline
        \textbf{Project} & \textbf{Type of Rollup} & \textbf{Arranger
                                                     * }
        \\
        \hline
        Arbitrum One~\cite{ArbitrumNitro} & Optimistic Rollup & Centralized \\
        Optimism~\cite{optimism} & Optimistic Rollup & Centralized \\
        Base~\cite{base} & Optimistic Rollup & Centralized \\
        Blast~\cite{blast} & Optimistic Rollup & Centralized \\
        Mantle~\cite{mantle} & Optimium & Centralized \\
        Linea~\cite{linea} & ZK Rollup & Centralized \\
        Starknet~\cite{starknet} & ZK Rollup & Centralized \\
        zkSync Era~\cite{zksyncera} & ZK Rollup & Centralized \\
        Mode Network~\cite{mode} & Optimistic Rollup & Centralized \\
        Manta Pacific~\cite{manta} & Optimium & Centralized \\
        Metis Andromeda~\cite{metis} & Optimium & Decentralized \\
        Scroll~\cite{scroll} & ZK Rollup & Centralized \\
        dYdX v3~\cite{dydx} & ZK Rollup & Centralized \\
        Immutable X~\cite{immutablex} & Validium & Centralized \\        
        RSS3~\cite{rss3} & Optimistic Rollup & Centralized \\
        Polygon zkEVM~\cite{polygonzkEVM} & ZK Rollup & Centralized \\
        Astar zkEVM~\cite{astar} & Validium & Centralized \\
        X Layer~\cite{xlayer} & Validium & Centralized \\
        BOB~\cite{bob} & Optimistic Rollup & Centralized \\
        Loopring~\cite{loopring} & ZK Rollup & Centralized \\
        Karak~\cite{karak} & Optimium & Centralized \\
        Fraxtal~\cite{fraxtal} & Optimium & Centralized \\
        zkSync Lite~\cite{zksynclite} & ZK Rollup & Centralized \\
        ApeX~\cite{apex} & Validium & Centralized \\
        Reya~\cite{reya} & Validium & Centralized \\
        Aevo~\cite{aevo} & Optimium & Centralized \\
        DeGate V1~\cite{degate} & ZK Rollup & Centralized \\
        ZKFair~\cite{zkfair} & Validium & Centralized \\
        Arbitrum Nova~\cite{ArbitrumNitro} & Optimium & Centralized \\
        rhino.fi~\cite{rhinofi} & Validium & Centralized \\
        Zora~\cite{zora} & Optimistic Rollup & Centralized \\
        Sorare~\cite{sorare} & Validium & Centralized \\
        Kinto~\cite{kinto} & Optimistic Rollup & Centralized \\
        Boba Network~\cite{boba} & Optimistic Rollup & Centralized \\
        Lyra~\cite{lyra} & Optimium & Centralized \\
        Kroma~\cite{kroma} & Optimistic Rollup & Centralized \\
        Paradex~\cite{paradex} & ZK Rollup & Centralized \\
        ZKSpace~\cite{zkspace} & ZK Rollup & Centralized \\ 
        Redstone~\cite{redstone} & Optimium & Centralized \\
        tanX~\cite{tanX} & Validium & Centralized \\
        PGN~\cite{pgn} & Optimium & Centralized \\
        Ancient8~\cite{ancient8} & Optimium & Centralized \\ 
        Orderly Network~\cite{orderly} & Optimium & Centralized \\
        Cyber~\cite{cyber} & Optimium & Centralized \\
        Myria\cite{myria} & Validium & Centralized \\
        Parallel~\cite{parallel} & Optimistic Rollup & Centralized \\
        Cartesi~\cite{cartesi} & Optimistic Rollup & Centralized \\    
        Hypr~\cite{hypr} & Optimium & Centralized \\
        Metal~\cite{metal} & Optimistic Rollup & Centralized \\
        ReddioEx~\cite{reddioex} & Validium & Centralized \\
        Fuel V1~\cite{fuelv1} & Optimistic Rollup & Centralized \\    
        \hline
    \end{tabular}
\caption{Decentralization status of Layer2 solutions on top of
      Ethereum. Data from~\cite{l2beat}}
    \label{tab:decentralization-layer2eth}

\end{table}

* The decentralization status of the arranger for each project X was
obtained from \url{https://l2beat.com/scaling/projects/X#operator}
(accessed: 2024-05-22)


\vfill
\pagebreak
\section{Extended Related Work}
\label{app:ext-related-work}


%
Various approaches were explored to enhance blockchain
scalability, including:
sharding~\cite{Zamani2018RapidChain, Dang2019Sharding,
  das2020efficient, kokoris2018omniledger,rana2022free2shard,
  shen2023gt, cai2022benzene},
faster consensus algorithms~\cite{Wang2019FastChain, wang2020prism,
  jia2022themis},
parallelism~\cite{lovejoyparsec, liu2021parallel},
application-specific blockchains with Inter-Blockchain Communication
capabilities~\cite{wood2016polkadot,kwon2019cosmos,zamyatin2019xclaim},
state channels
\cite{Poon2016lightning,Coleman2015StateChannels,Dziembowski2017PERUNVP,
  miller2017sprites,bentov2017instantaneous, frassetto2022pose},
sidechains \cite{nick2020liquid, mallaki2022off, polygonPoS},
childchains \cite{poon2017plasma},
combination with Trusted Executed Environments
\cite{brandenburger2018blockchain, cheng2019ekiden, lind2019teechain,
  zhang2019paralysis, das2019fastkitten, xiao2020privacyguard,
  cai2022benzene},
zero-knowledge
Rollups~\cite{linea, starknet, zksyncera,scroll,dydx,polygonzkEVM,
  loopring, zksynclite,degate,paradex,zkspace},
Optimistic Rollups~\cite{ArbitrumNitro, optimism,
  base, blast, mode,rss3,bob,zora,kinto,boba,kroma,parallel,cartesi,metal,fuelv1}, Validiums~\cite{immutablex,astar,xlayer,apex,reya,zkfair,rhinofi,sorare,tanX,myria,reddioex}
and Optimiums~\cite{mantle, manta,metis,karak,fraxtal, aevo,lyra,redstone,pgn,ancient8,orderly,cyber,hypr}.
For an in-depth exploration of these approaches,
see~\cite{croman2016scaling, hafid2020scaling, hewa2021survey,
  zhou2020solutions, sanka2021systematic,thibault2022blockchain,
  unnikrishnan2022survey, zhou2023overview, gudgeon2020sok,
  nasir2022scalable, zhou2020solutions}.

We focused our research on optimistic rollups solutions and study their
decentralization.
No L2 solution as today (including zero-knowledge rollups) operating on Ethereum
is fully decentralized (see Appendix~\ref{app:decentralizationL2Eth}).
Our decentralized arranger applies to Arbitrum as well as other
L2 solutions, like Optimism~\cite{optimism}.
In Optimism, their centralized sequencer posts batches of transactions
and a list describing the new L2 state after executing each
transaction.
This would require to adjusts our arranger serves to generate lists of
L2 states, sign these lists, and incorporate them in the batch tags.


\subsubsection*{Data availability.}

A challenge in data availability is supporting lightweight clients
that only test data availability for hashes posted by arrangers,
without downloading complete batches, as explored
in~\cite{albassam2019fraud}.
However, we do not anticipate a use with such lightweight clients, since STFs in
our framework require complete batches to compute new L2 states.
Additionally, we provide a mechanism
(Section~\ref{sec:incentives:challenges}) to ensure data availability
on L1 when direct server contact fails.
Furthermore, once STFs successfully acquire the necessary data,
compute the new L2 state and the block is confirmed, this confirmation
becomes irreversible, even if the batch data used for computation
becomes unavailable.

\subsubsection*{Incentives in blockchain}
\imarga{maybe add about incentives in other offchain approachs}
Numerous blockchain protocols incorporate economic incentives to
motivate participants to behave correctly.
These incentives involve rewarding correct behavior, as seen in
proof-of-work protocols~\cite{jakobsson1999proofs, dwork1992pricing},
or penalties for incorrect behavior, common in
proof-of-stake~\cite{saleh2021blockchain}.
Our approach combines both, requiring arranger processes to stake
tokens when posting batch tags (see
Section~\ref{sec:incentives:stakes}) and rewarding arranger processes
based on their work when batch tags are confirmed
(Section~\ref{sec:incentives:generate}).
Other mechanisms can be explored to, for example, prevent
free-riders, including active insertion of invalid
blocks~\cite{liu2020game, smuseva2022verifier, alharby2020data} and
Randomness Inserted Contract Execution~\cite{das2018yoda}.

\subsubsection*{Refereed Delegation.}
%
The challenge mechanisms introduced in
Section~\ref{sec:incentives:challenges} involve bisection games which
has aspects that are very similar to Refereed Delegation of
Computation (RDoC).
RDoC is a two-server protocol for delegating
computation~\cite{canetti2011practical,canetti2013refereed} where
resource-bound clients outsource computation to powerful servers.
Servers have to provide proof-of-correct computations,
and verifying proofs should be more efficient than performing the
computation itself~\cite{goldwasser2015delegating}.
Other blockchain protocols also employ similar mechanisms to resolve disputes
and motivate correct
behavior~\cite{teutsch2019scalable,wagner2019dispute,nehab2022permissionless}.

\end{document}